\def\sA{{\sf A}}
\def\sB{{\sf B}}
\def\sC{{\sf C}}
\def\u{{\sf u}}
\def\v{{\sf v}}
\def\b-theta{{\bm \theta}}
\def\v{{\eurm v}}
\def\u{{\eurm u}}
\def\SO{{\mathrm{SO}}}
\def\SU{{\mathrm{SU}}}
\def\Chi{\chi}
\def\be{\begin{equation}}
\def\ee{\end{equation}}
\def\Re{{\mathrm{Re}}}
\def\Im{{\mathrm{Im}}}
\def\tilde{\widetilde}
\def\sT{{\sf T}}
\def\O{{\mathcal O}}
\def\d{{\mathrm d}}
\def\b{\overline}
\def\rR{{\mathbb R}}
\def\C{{\mathbb C}}
\def\[{\bigl [}
\def\]{\bigr ]}
\def\S{{\sf S}}
\def\cN{{\mathcal N}}
\def\T{{\mathcal T}}
\def\zZ{{\mathbb Z}}
\def\W{{\mathcal W}}
\def\tilde{\widetilde}
\def\bar{\overline}
\def\Spin{{\mathrm{Spin}}}
\font\teneurm=eurm10 \font\seveneurm=eurm7  \font\fiveeurm=eurm5
\def\eurm#1{{\fam\eurmfam\relax#1}}
\font\teneusm=eusm10 \font\seveneusm=eusm7 \font\fiveeusm=eusm5
\font\tencmmib=cmmib10 \skewchar\tencmmib='177
\font\sevencmmib=cmmib7 \skewchar\sevencmmib='177
\font\fivecmmib=cmmib5 \skewchar\fivecmmib='177
\def\AdS{{\mathrm{AdS}}}
\def\AAdS{{\mathrm{AAdS}}}
\def\la{\langle}
\def\ra{\rangle}
\def\Tr{\mathrm{Tr}}
\def\i{{\mathrm i}}
\newtheorem{theorem}{\rm\bf Theorem}[section]
\newtheorem{proposition}[theorem]{\rm\bf Proposition}
\newtheorem{lemma}[theorem]{\rm\bf Lemma}
\newtheorem{definition}[theorem]{\rm\bf Definition}
\title{No Ensemble Averaging \\ Below the Black Hole Threshold}
\author{Jean-Marc Schlenker$^1$  and Edward Witten$^2$}
\affiliation{$^1$ Department of Mathematics, University of Luxembourg  \\
$^2$ School of Natural Sciences, Institute for Advanced Study, \\ $~~$ 1 Einstein Drive, Princeton, NJ 08540 USA}
\abstract{In the AdS/CFT correspondence, 
amplitudes associated to connected bulk manifolds with disconnected boundaries have presented a longstanding mystery.   A possible interpretation is that they reflect
the effects of averaging over an ensemble of boundary theories.  But in examples in dimension $D\geq 3$, an appropriate ensemble of boundary theories does not
exist.    Here we sharpen the puzzle by
identifying a class of  ``fixed energy'' or ``sub-threshold'' observables that we claim do {\it not} show effects of ensemble averaging.   These are amplitudes that 
involve states that are above the ground state by only a fixed amount in the large $N$ limit, and in particular are far from being black hole states.
To support our claim, we explore the example of $D=3$, and show that connected solutions of Einstein's equations with disconnected boundary never contribute 
to these observables.   To demonstrate this requires some novel results about the renormalized volume of a hyperbolic three-manifold, which we prove
using modern methods in hyperbolic geometry.     Why then do any observables show apparent ensemble averaging?  We propose that this reflects the chaotic
nature of black hole physics and the fact that the Hilbert space describing a black hole does not have a large $N$ limit.}
\begin{document}\maketitle
\tableofcontents

\section{Introduction}\label{intro}

Since early days of the AdS/CFT correspondence \cite{Malda},   there has been a puzzle of how to interpret Euclidean amplitudes computed
on a connected bulk manifold $X$ whose conformal boundary $M$ is not connected \cite{WittenYau,MaldaLior}.    If $M$ is connected,
the sum over all choices of $X$ is interpreted in the AdS/CFT correspondence as computing what we will call $Z(M)$, the conformal field theory (CFT) partition
function on $M$.   What if $M$ has, say, two connected components $M_1,M_2$?   The sum over all choices of $X$ whose conformal boundary is
the disjoint union $M=M_1\sqcup M_2$ appears to compute, in some sense, a connected correlation function $\la Z(M_1)Z(M_2)\ra_{c}=\la Z(M_1)Z(M_2)\ra
-\la Z(M_1)\ra \la Z(M_2)\ra$ between the two CFT partition functions $Z(M_1)$ and $Z(M_2)$.     We will refer to such connected correlation functions
between different boundaries (possibly with operator insertions on the boundaries) as ``connected amplitudes with disconnected boundaries'' or CADB amplitudes for short.
CADB amplitudes are not  a standard concept in CFT or indeed in any Euclidean quantum field theory, so the fact that AdS/CFT duality
seems to give a way to calculate them has been puzzling.    

A new perspective came from the  discovery that a simple, soluble model, namely JT gravity in two dimensions,
 computes an ensemble average in a random matrix theory  \cite{SSS}.
A two-dimensional gravitational theory  is expected to have a dual description by an ordinary quantum theory on the 1-dimensional boundary of a 2-dimensional world.   A compact connected 1-manifold
is a circle $S$, say of circumference $\beta$, and the partition function is then $Z(\beta)=\Tr\,\exp(-\beta H)$, where $H$ is the Hamiltonian of the boundary
theory.   However, it turns out that the theory dual to JT gravity does not have a unique Hamiltonian $H$; rather, $H$ is drawn from a random matrix
ensemble.    This provides a rationale for the existence of connected correlation functions between observables associated to different boundary circles.   Such amplitudes can be generated in the boundary description by
averaging over the random matrix $H$. This discovery revived interest in much older ideas about gravitational wormholes and ensemble averages \cite{Coleman,GS,MM}.

But the interpretation of CADB amplitudes in terms of ensemble averages raises an immediate paradox.   In many basic examples
of AdS/CFT duality, it is believed that  the parameters on which the boundary theory depends are all  known and all have a known interpretation in terms of the bulk
theory.   The duality seems to say that a specific boundary theory, with specific values of the parameters, is dual to a specific bulk theory with the same parameters.
For instance, two of the original examples of AdS/CFT duality are the maximally supersymmetric models based on $\AdS_4\times \S^7$ and $\AdS_7\times \S^4$.
In these examples, the only parameter that the boundary CFT depends on is a single positive integer $N$.  The bulk theory also depends on $N$; in fact,
 Newton's constant $G$ varies as a negative power of $N$, and $N$ can be measured
as the integral of a certain seven-form or four-form on $\S^7$ or $\S^4$.   
The duality claims an equivalence between bulk and boundary theories for each choice of $N$; and for given $N$, there appears to be nothing else one could be averaging
over in an ensemble.   So how can CADB amplitudes in these theories be interpreted in terms of averaging over an ensemble?

We will try to shed some light on this question by arguing that when the Anti de Sitter dimension $D$ is at least 3, and therefore the boundary dimension
$d=D-1$ is at least 2, certain important  observables, which we will call fixed energy observables, are 
{\it not} affected by ensemble averaging; if one considers
only these observables, one will see no sign of ensemble averaging.    These are observables that can be defined just in terms of the energy and couplings
of states whose energy remains above the ground state by only a finite amount as $N$ becomes large.  
In particular these observables involve states that are far below the black hole threshold, which in the AdS/CFT correspondence is the energy of the Hawking-Page
transition \cite{HP} from a thermal gas in Anti de Sitter space to a black hole; in gauge theory examples,
this can also be interpreted as a deconfining transition \cite{Witten}.
The fixed energy observables are described more precisely in section \ref{prelims}.  
Another formulation of our proposal
involves integrability.   Some important examples of AdS/CFT duality are integrable in the large $N$ limit, and apparently also in an asymptotic
expansion around that limit; for an extensive review, see \cite{Beisert}.  Our proposal is that precisely the observables that are
accessible via integrability  are not affected by ensemble averaging.   This is consistent with the fact that calculations based on integrability do appear
to describe a definite CFT, not an ensemble average.   

In the case $D=3$, $d=2$, our proposal has implications for hyperbolic geometry that are explained in section \ref{implic}, so we can test the proposal
by verifying those predictions.      Thus, we consider examples of AdS/CFT
duality in which the bulk spacetime is asymptotic to $\AdS_3\times B$ for some compact manifold $B$.   The choice of $B$ will play no role in
our discussion; various examples  have been much studied, including
$\S^3\times {\sf T}^4$, $\S^3\times {\mathrm{K3}}$, and $\S^3\times \S^3\times \S^1$ ($\S^n$ is an $n$-sphere, $\sf T^n$ is an $n$-torus, 
and K3 is the complex  surface of that name).    Let $M_1, M_2,\cdots, M_n$ be a collection of Riemann surfaces.   If $X$ is a hyperbolic three-manifold whose conformal boundary
is a disjoint union $M_1\sqcup M_2\sqcup \cdots \sqcup M_n$, then a path integral on $X\times B$ contributes to a  connected amplitude
$\la Z(M_1)Z(M_2)\cdots Z(M_n)\ra_c$.     For small $G$ or equivalently large $N$, the contribution is proportional asymptotically to $\exp(-V_R(X)/4\pi G\ell^2)$,
where $V_R(X)$ is the renormalized volume of $X$, and $\ell$, which we assume much larger than $G$, is the AdS radius of curvature.   
We will deduce the statement that fixed energy amplitudes do not receive  contributions from manifolds with disconnected boundary
from properties of renormalized volumes.    Specifically, we will show that a hyperbolic three-manifold $X$  contributes to a fixed energy amplitude on a Riemann surface
$M$  only if the boundary of $X$ is connected and consists only of $M$.   An overview of the mathematical arguments is given in section \ref{overview}; details
appear in appendix \ref{details}.

Even if $X$ has connected boundary $M$, it is not necessarily true that $X$  contributes to fixed energy amplitudes on $M$.
To contribute to such amplitudes, $X$ must be a ``handlebody'' or Schottky manifold.  This means that for some embedding of $M$
in $\rR^3$, $X$ is topologically equivalent to the interior of $M$.   In other words, we will find that only the simplest hyperbolic three-manifolds with boundary $M$ contribute
to fixed energy amplitudes on $M$.

 We hope that these statements about hyperbolic three-manifolds are illustrating a general lesson that fixed energy
 observables are not subject to ensemble averaging, but  a number of caveats are necessary.    First,
we consider only the case $D=3$, $d=2$.    Though we  suspect that a similar picture holds for larger values of $D$, to show this will require more work 
with both the physics and the classical geometry.   Second, our main arguments concern the case that $X$ is a classical solution of Einstein's equations,
that is, a hyperbolic three-manifold.     However, it is believed that in some cases, non-solutions must be considered.\footnote{In at least some cases, these non-solutions can
be interpreted as critical points of the action at infinity  (see section 4.2 of \cite{WittenComplex} for discussion), as opposed to classical solutions, which are ordinary critical points. They can also possibly be interpreted, 
at least in some cases,
 as complex critical points (critical points of the analytic continuation of the Einstein action to a holomorphic function of a complex-valued metric tensor on $X$).}   Since little is known about what non-solutions are relevant in general, we cannot make a systematic analysis.
 However, we will consider the few examples that have been analyzed in the literature, namely $\rR\times {\sT}^2$, studied in \cite{CJ}, and Seifert fibered
 manifolds, studied in \cite{MaxTur}. 
The known and conjectured results are consistent with the claim that fixed energy observables do not show  apparent ensemble averaging.
 Finally, even if one restricts to classical solutions,
 there is no need to consider only spacetimes of the product form $X\times B$; one could consider a more general ten-manifold with the same asymptotic
 behavior.    Little is known about possible solutions of this more general form, and we are not in a position to prove that they do not contribute to fixed energy
 CADB amplitudes.   
 
Assuming  that fixed energy observables indeed
 do not have CADB contributions, this is a strong indication that in AdS/CFT duality, there is no ensemble average over unknown parameters.  It would be
 very hard for such an average not to affect fixed energy observables.
 Given this, why do amplitudes that involve black hole states have CADB contributions?   This question will be discussed in section
 \ref{why}.   We propose that this phenomenon reflects the following two points:   (1) black hole physics is highly chaotic \cite{Chaos}; (2) the Hamiltonian and Hilbert
 space that describe a single black hole apparently do not have a large $N$ or small $G$ limit.  
   The first of these assertions is well-known in the present
 context and was a large part of the motivation for the work that eventually led to the interpretation of JT gravity in terms of an ensemble average \cite{SSS}.
 The second assertion, which has not been considered in the present context, is a negative one; since the entropy of a black hole grows as a power of $N$
 (as $N^2$ in the case of $\cN=4$ super Yang-Mills in four dimensions), it is hard to see in what sense the Hilbert space describing a black hole
 could have a large $N$ limit, and the literature certainly does not contain any proposal for such a limit.   See \cite{WittenLecture} for more discussion.
  (By contrast, the thermofield double, which is dual to an entangled pair of black holes \cite{MaldaDouble},
 does have a large $N$ limit.)    Point (1) means that the Hamiltonian $H_N$ that describes black holes of a given energy for a given value of $N$
 can be viewed as a very
 large pseudorandom matrix, which will look like a random matrix in any standard calculation. 
 Point (2) suggests that for $N\not=N'$, even if $|N'-N|\ll N$, $H_N$ and $H_{N'}$  can be viewed to a good approximation as independent
 draws from a random matrix ensemble.   If so, then quantities that depend on $H_N$, such as the partition function, will be smooth functions of $N$ only
 to the extent that they are self-averaging in random matrix theory.  (In random matrix theory, a quantity is called self-averaging if it has almost the same value for almost any draw
 from a given  random matrix ensemble.)  
 Quantities that are not self-averaging in random matrix theory will  depend erratically on $N$
 or $G$.  
 With present techniques, what  we know how to calculate from the gravitational path integral are smooth functions of $N$ or $G$, and
 given the facts just stated, above the Hawking-Page transition, we can only calculate quantities that are self-averaging in random matrix theory.\footnote{If a quantity
 is not self-averaging but does have a nonzero average in random matrix theory, it may be possible to compute this average. Examples are discussed in \cite{SSS1}.}   
 If it is possible to compute erratically varying quantities from a gravitational path integral, this will involve an unfamiliar type of
  path integral that is not dominated by a simple sum over
  critical points,
 not even  critical points at infinity.   In random matrix theory, CADB amplitudes make sense and are sometimes self-averaging; when that is
 the case, there can be a simple way to calculate them from the gravitational path integral.     A shorthand way to summarize the proposal made here is that the ensemble averaging that is seen in gravitational path integral calculations is simply
 an averaging over nearby values of $N$ to eliminate  erratic fluctuations.   This makes sense as a general proposal because in all known examples of AdS/CFT
 duality for $D\geq 3$, $G$ varies inversely with one or more integers $N$.  
 
 We should stress again that we 
 unfortunately do not know for sure how to extrapolate from the specific results we will prove about hyperbolic three-manifolds to a general lesson
 about AdS/CFT duality.   As noted previously, in some important examples of AdS/CFT duality, states whose energy above the ground state is fixed for
 $N\to\infty$ are described by an integrable system.\footnote{\label{maldanote} As pointed out to us by J. Maldacena, even when there is no integrable 
 system, the spectrum of fixed energy states is never truly chaotic, since for large $N$ the dimension of a multi-trace operator
 $\Tr\,\O_1\,\Tr\,\O_2\cdots \Tr\,\O_k$ is simply the sum of the dimensions of the individual factors. This assumes
  that the number $k$ of factors is kept fixed for $N\to \infty$.
 For $k\sim N$, one enters a different regime in which nonlinear interactions are important in the large $N$ limit, potentially leading to classical and quantum chaos
 (though in general probably not the maximal chaos of black hole physics).}
Given this, and since  integrability is the antithesis of chaos, and given also the close relation of apparent ensemble averaging to chaos, the safest conjecture in the
context of the present article is that energies and couplings of  fixed energy states are not affected by apparent ensemble averaging.   Thus a conservative
 title for this article would be ``No Ensemble Averaging at Fixed Energy Above the Ground State.''   However, in practice, in a theory that can be well
 approximated by pure gravity up to the black hole threshold, the detailed results we obtain
 in section 3 are valid for any state that is  below that threshold; therefore, especially in section 3, when discussing classical solutions of pure gravity,
 we use the language of sub-threshold states, rather than fixed energy states.   We should note, though, that in a theory that can be approximated by pure gravity
 up to the black hole threshold, the sub-threshold states are all Virasoro descendants of the identity, which explains why their couplings behave similarly to those of the fixed
 energy states.
 
 After v1 of this paper was submitted to the arXiv, there appeared a very interesting article \cite{CCHM}            
  on couplings in 3d gravity of states whose energy, in the large $N$
 limit, is above the ground state by a fixed fraction -- positive but less than 1 -- of the energy required to make a black hole. (This regime might be similar to the
 regime mentioned in footnote \ref{maldanote} with $k\sim N$.)  Such states could be solitons, for example.   Couplings
 of such states do show apparent ensemble averaging.   A possible interpretation, in the spirit of section \ref{why}, is that couplings of these states
 are described in the semiclassical limit by the nonlinear gravity or supergravity theory, which (if it has the assumed states) is not integrable and is likely
 to lead to classical and quantum chaos.   
 
\section{Volumes and Fixed Energy Amplitudes}\label{hagedorn}

\subsection{Preliminaries}\label{prelims}

As was remarked in the introduction,  known examples of AdS/CFT duality in $D\geq 3$  always depend on one or more  integers with an inverse relation to
 Newton's constant
$G$.  For example, four-dimensional maximally supersymmetric Yang-Mills theory with gauge group $\SU(N)$ has a dual description in 
$\AdS_5\times \S^5$ with $G\sim 1/N^2$.   In  examples with $D=3$, $d=2$, which will be our main focus in this article, one 
has\footnote{This formula was actually discovered before the general understanding of
AdS/CFT duality \cite{BH}.}
\be\label{tofo} G=\frac{3\ell}{2c}, \ee
where $c$ is the central charge of the boundary CFT, and $\ell$ is the radius of curvature of the $\AdS_3$ space. 
For $D=3$, $d=2$,  the ``large $N$ limit'' is a limit in which $c$ is large, and therefore $G\ll \ell$; that last condition enables gravity to be treated 
semiclassically, by summing over classical solutions, as we will assume in this article.   
    For instance, in one much-studied family of models, $c=6Q_1 Q_5$, where $Q_1$ and $Q_5$ are integer-valued one-brane and five-brane charges.
In that example, by ``large $N$ limit,'' we mean the limit in which $Q_1$ and $Q_5$ are taken to be large, with a fixed ratio, ensuring, in particular,
that $c$ is large.

In a $d$-dimensional CFT, a  local operator $\O$ is inserted in a correlation function at a point $p$ in a $d$-manifold $M$.   A local operator $\O$ has a dimension
$\Delta$, which determines how it behaves under a conformal transformation that rescales the tangent space at $p$, and it transforms in an irreducible
 representation $J$
of the group $\SO(d)$ of rotations around the point $p$, or (in a theory with fermions) its double cover $\Spin(d)$. $J$ is called the spin of the operator.  In our
main example of $d=2$, the group $\Spin(2)$ is abelian and $J$ can be viewed as an integer or half-integer, an element of $\frac{1}{2}\zZ$.
In a CFT that participates in AdS/CFT duality, the dimensions $\Delta_i$ and representations $J_i$
of local operators $\O_i$ have a large $N$ limit.   This is a basic prediction of the duality,
and  in gauge theory examples in $d=4$ it is a consequence of the planar diagram expansion \cite{Thooft}.   There is precisely one local operator of dimension 0, namely
the identity operator 1, which transforms in a trivial 1-dimensional representation of $\Spin(d)$.  The other local operators $\O_i$, $i=1,2,\cdots$ have positive dimensions and can be labeled in order of increasing dimension  $0<\Delta_1\leq \Delta_2\leq \cdots$.

By the operator-state correspondence of CFT, local operators correspond to Hilbert space states if a CFT is quantized on a spatial manifold $\S^{d-1}$ (with a
round metric).   For odd $d$, the identity operator corresponds to a state of energy 0, but for even $d$ the identity operator corresponds to a state with an energy
that is determined by the anomaly in a conformal mapping from $\rR^d$ with a point removed to $\rR\times \S^{d-1}$.   We will be mainly interested in $d=2$,
in which case the identity operator corresponds to a state of energy
\be\label{negval} E_0=-\frac{c}{12}. \ee
Importantly, this value is negative and, in the large $N$ limit, it is large.   That will lead to a prediction that certain renormalized volumes of three-manifolds should
go to $-\infty$ in appropriate limits.
Any other operator $\O_i$ corresponds to a state  (more precisely a collection of states transforming in an irreducible representation $J_i$ of $\Spin(d)$) of energy
\be\label{negoval} E_i=E_0+\Delta_i.\ee
Thus the dimension $\Delta_i$ of an operator is the same as the excitation energy of the corresponding state above the ground state.  $\Delta_i$ is by definition the
energy of the $i^{th}$ excited state, and in the context of AdS/CFT duality, it has a limit for $c\to\infty$.   So in our terminology, the $i^{th}$ excited state for each
$i$ is a fixed energy state.

 $\Delta_i$ and $J_i$ are the first basic examples of fixed energy observables that we propose are not subject to ensemble averaging.
The other such observables are essentially the trilinear correlation functions of the $\O_i$.   The $\O_i$ can be normalized to put their two-point
functions in a standard form (for spinless operators, the standard form is  $\la \O_i(x) \O_j(y)\ra = \delta_{ij}/|x-y|^{2\Delta_i}$) and then the
trilinear or three-point correlation functions\footnote{In the case of spinless conformal primary fields, these three-point functions depend only on the
chosen points $x,y,z$.  More generally, one has to pick local parameters at $x,y,z$.   The details are not important in the present article.}  \be\label{threp}\lambda_{ijk}=\la \O_i(x)\O_j(y)\O_k(z)\ra\ee are important observables of a CFT.   We propose
that also the $\lambda_{ijk}$ are not affected by ensemble averaging.  

In $d=2$, all observables of a CFT are completely determined, in principle, in terms of the $\Delta_i$, $J_i$, and $\lambda_{ijk}$.
This can  be proved by using the fact that any two-manifold without boundary can be built by gluing together three-holed spheres, a fact that we will
exploit in section \ref{implic}.   As a result, in $d=2$, we expect that all observables not subject to ensemble averaging are actually determined
by $\Delta_i$, $J_i$, and $\lambda_{ijk}$.    In $d=2$, a complete set of conditions on $\Delta_i$, $J_i$, and $\lambda_{ijk}$ so that they are CFT
data is known in principle (but often hard to use in practice).  Above $d=2$, none of these statements have equally simple analogs, and in particular we do not
know whether to expect that the $\Delta_i$, $J_i$, and $\lambda_{ijk} $ are a complete set of observables that are free of ensemble averaging.

If $\Delta_i,J_i,$ and $\lambda_{ijk}$ are a complete set of CFT observables in $d=2$, and are not subject
to ensemble averaging, then why in $d=2$ do any observables appear to be subject to ensemble averaging?   The answer to this question depends on the fact that
the observables that appear to be subject to ensemble averaging are the ones that receive contributions from black hole states.   
We will refer to a spacetime asymptotic to $\AdS_3$ at spatial infinity as an $\AAdS_3$ spacetime.   The Einstein equations have an  $\AAdS_3$  black hole
solution, namely the BTZ black hole \cite{BTZ}.   It was understood in the original paper on the BTZ black hole that if energy is defined by the usual
ADM recipe of general relativity, then $\AdS_3$ itself has negative energy.   Later it was understood \cite{Strominger} that this negative energy can be understood in terms
of the central charge  of the BTZ black hole.   Thus $\AdS_3$ itself corresponds to the ground state of the CFT, with energy
\be\label{togo} E_0=-\frac{c}{12}, \ee with $c=3\ell/2G$ \cite{BH}.   Small perturbations of $\AdS_3$ give the states that have energy
$-c/12+\Delta_i$, with fixed $\Delta_i$ and large $c$.  
We get to the black hole regime if we take $c$ large with $\Delta\gg c/12$, meaning that the total
energy $E_0+\Delta$ is large and positive.   In that regime, the density of states per unit energy is exponentially large; it is $e^{S(E)}$, where $S(E)$, which is the
Bekenstein-Hawking entropy of the black hole at energy $E$, is of order $c$ for large $c$ and fixed $E/c$.

For any given value of $c$ or $N$, there is no useful notion of whether a given state is a black hole or not.
The black hole region is defined only in terms of a limiting process:  if we go to large $c$ with fixed $\Delta$, we get an ordinary state, and
if we go to large $c$ with  $E/c$ fixed and positive, we get a black hole.   The distinction is only sharp in the limit of large $c$, but because the gravitational calculations
that we know how to perform involve an asymptotic expansion at large $c$, the distinction is quite sharp in computations that we can actually perform.

 Concretely, the Hawking-Page phase transition occurs as follows.    In AdS/CFT duality, the partition function on a boundary manifold $M$ is computed
 by summing over bulk manifolds with conformal boundary $M$.   In stating the following, we will assume that $d=2$ and that we can assume the bulk
 manifold is a hyperbolic three-manifold $X$ with conformal boundary $M$.   (As explained in the introduction, in general there are  other possibilities.)
 The contribution of a given $X$ is, for small $G$, asymptotic to $\exp(-V_R(X)/4\pi G \ell^2)$.   That simple exponential is multiplied by an asymptotic
 series of quantum corrections; this series depends only on powers of $G$, not an exponential of $1/G$.   The choice of $X$  depends on the complex structure of $M$, and therefore so does
  the volume $V_R(X)$.   For example,
 if we are trying to compute $\Tr\,e^{-\beta H}$, we take $M$ to be a torus with a complex structure that depends on $\beta$ (see section \ref{reviewvol} for more detail).   Then the renormalized volume 
 $V_R(X)$ depends on $\beta$ and we can write it more explicitly as $V_R(X,\beta)$.    Since we have to sum over the choices of $X$ to compute
 $\Tr\,e^{-\beta H}$, we get 
 \be\label{wondo}\Tr\,e^{-\beta H}\sim \sum_\alpha \exp(-V_R(X_\alpha,\beta)/4G)\cdot F_\alpha(\beta,G), \ee
 where the sum runs over the possible choices of three-manifold $X_\alpha$, and for each $\alpha$, $F_\alpha(\beta,G)$ is the corresponding series of 
 quantum corrections.     
For any given value of $c$ or $G$, the sum in eqn. (\ref{wondo}) just produces an analytic function of $\beta$.   However, the asymptotic behavior
for $c\to\infty$ or $G\to 0$ is  dominated by the term in the sum with the smallest possible $V_R(X_\alpha,\beta)$.   As $\beta$ is varied, there can be
a ``crossover'' with a  jump in the choice of $X$ that minimizes $V_R(X,\beta)$ and therefore a discontinuous change in the asymptotic behavior of the partition function for small $G$.   That jump is the Hawking-Page transition.
What has just been explained (or its analog in higher dimensions) was actually the original explanation  of the Hawking-Page transition \cite{HP},
long before AdS/CFT duality was understood.
  
\subsection{Review Of The Renormalized Volume}\label{reviewvol}

The relation of the renormalized volume of a hyperbolic three-manifold to the negative ground state energy of a two-dimensional CFT will be important in what
follows, so we will review it in detail.   First we recall the relation between the Einstein action and the renormalized volume.

With negative cosmological constant, the  Einstein action in a three-dimensional spacetime $X$ of Euclidean signature, with boundary $\partial X$, is 
\be\label{einac}I=-\frac{1}{16\pi G}\int\d^3x \sqrt{g} \left(R+\frac{2}{\ell^2}\right)-\frac{1}{8\pi G}\int_{\partial X} \d^2x \sqrt \gamma K, \ee where $R$ is the Ricci
scalar of $X$,  and $\gamma$ 
and $K$ are the induced metric and the trace of the second fundamental form of $\partial X$ ($R_\gamma$ will denote the scalar curvature of $\gamma$). The last term is the Gibbons-Hawking-York (GHY) 
boundary term.  Using Einstein's equations $R_{ij}-\frac{1}{2} g_{ij} R-\frac{1}{\ell^2} g_{ij}=0$, one can rewrite the bulk part of the action as a multiple of the volume
of $X$:
\be\label{einac2}I=\frac{1}{4\pi G\ell^2}\int_X\d^3x \sqrt{g}-\frac{1}{8\pi G}\int_{\partial X} \d^2x \sqrt h K, \ee
 To this action, one can add ``counterterms,'' which are simply local integrals over $\partial X$ of invariant  functions of the induced metric.
For our purposes there are two relevant terms:
\be\label{zolgo}  \int_{\partial X}\d^2 x \sqrt \gamma \left(\frac{\alpha}{\ell^2}+\lambda R_\gamma\right), ~~~\alpha,\lambda\in\rR.\ee
One adjusts $\alpha$ and $\lambda$ to cancel the divergent part of the action (\ref{einac}) and to make the action conformally invariant, that is, invariant
under Weyl transformations of the boundary metric $\gamma$, apart from the usual $c$-number Weyl anomaly of two-dimensional CFT.
The bulk term in eqn. (\ref{einac2}) is formally $V/4\pi G$, where $V$ is the volume of $X$.  For a hyperbolic three-manifold with non-empty conformal boundary, this volume is divergent. After renormalization
it will be replaced with a renormalized volume $V_R$.   On the other hand, it turns out that in $D=3$, 
the GHY boundary term is entirely canceled by  renormalization.   So the renormalized action will be just  $I_R= V_R/4\pi G$.

A convenient reference on the necessary computation is \cite{HS}, which we will follow here (with minor changes of notation).   
To evaluate the action on an $\AAdS_3$ spacetime, one puts the metric of $X$ in the form
\be\label{acform} \d s^2= \frac{\ell^2}{4\rho^2}\d \rho^2 +\frac{1}{\rho} \sum_{i=1,2} \bar g_{ij}(x,\rho)\d x^i \d x^j \ee
near the conformal boundary of $X$, which in these coordinates is at $\rho=0$. 
In two dimensions, $\bar g(x,\rho) $ has an expansion
\be\label{wedform} \bar g(x,\rho)=\bar g_{(0)}(x)+\rho \bar g_{(2)}(x) +\rho\log\rho  \, h_{(2)}(x)+\O(\rho^2\log \rho). \ee
The only facts we need to know from the Einstein equations (eqn. (7) of \cite{HS}) are that 
\begin{align}\label{eform} \bar g_{(0)}^{kl} \bar g_{(2)kl}& = \frac{\ell^2}{2} R_{(0)} \cr
                      \bar g_{(0)}^{kl} h_{(2)kl}&= 0 ,\end{align}
  where $R_{(0)}$ is the Ricci scalar of the metric $g_{(0)}$.                     

The first step in defining the renormalized volume is to ``cut off'' $X$ by restricting to the 
region $X_\epsilon$ with $\rho\geq \epsilon$; to define $V_R$, one computes the volume of $X_\epsilon$ and then
 takes the limit $\epsilon\to 0$ after adjusting the counterterms of eqn. (\ref{zolgo}) to cancel
divergences.  Using the form (\ref{acform}) of the metric with the expansion (\ref{wedform}) together with (\ref{eform}),
one finds the divergent parts in the volume $V(X_\epsilon)$:
\begin{align}\label{divpart}V(X_\epsilon)&= \int_\epsilon\d \rho\int_{\partial X_\epsilon}
\d^2x\sqrt{\det g_{(0)}}\left(\frac{\ell}{\rho^2}+\frac{\ell^3}{4\rho}R_{(0)} +\O(\log\rho)\right)\cr
&= \frac{\ell}{\epsilon}\int_{\partial X_\epsilon}\d^2x \sqrt{\det g_{(0)}} +\frac{\ell^3 \log (1/\epsilon)}{4}\int_{\partial X_\epsilon} \d^2 x\sqrt{\det g_{(0)}}R_{(0)}
+{\mathrm{finite}}. \end{align}  Subtracting the divergent part, we arrive at the definition of the renormalized volume of $X$:\footnote{The conformal anomaly arises from the logarithmically divergent term.   If we repeat the calculation after making a Weyl rescaling
of the boundary metric $g_{(0)}$, the renormalized volume $V_R(X)$ is shifted by  the conformal anomaly.}
\be\label{renvol} V_R(X)=\lim_{\epsilon\to 0}\left( V(X_\epsilon)- \frac{\ell}{\epsilon}\int_{\partial X_\epsilon}\d^2x \sqrt{\det g_{(0)}} -\frac{\ell^3 \log (1/\epsilon)}{4}\int_{\partial X_\epsilon} \d^2 x\sqrt{\det g_{(0)}}R_{(0)}\right).\ee   

The GHY boundary term can be analyzed similarly.  One finds with the help of eqn. (\ref{eform}) that the GHY boundary term in the action of $X_\epsilon$
is a linear combination of
the two counterterms in eqn. (\ref{zolgo}) (plus a remainder that vanishes for $\epsilon\to 0$).   Hence the GHY boundary term does not contribute
to the renormalized Einstein action, which is just $I_R=V_R/4\pi G\ell^2$.

Now we can understand in terms of the renormalized volume the fact that $\Tr\,e^{-\beta H}$ diverges for large $\beta$ and large $c$
as $e^{\beta c/12}=e^{\beta\ell/8 G}$.   To compute $\Tr\,e^{-\beta H}$, we take the conformal boundary to be a two-torus $M$ parametrized
by $\phi$ and $t$ with metric $\d s^2=\d \phi^2+\d t^2$ and periodicities\footnote{$\phi$ should have period $2\pi$ because the statement that
the ground state energy of a CFT is $-c/12$ assumes that the CFT is quantized on a circle of circumference $2\pi$, here parametrized by $\phi$.
To compute $\Tr\,e^{-\beta H}$, we propagate the $\phi$ circle though imaginary time $\beta$, so we need $t\cong t+\beta$.}  $\phi\cong \phi+2\pi$ and $t\cong t+\beta$. 
To compute $\Tr\,e^{-\beta H}$, we have to sum over hyperbolic three-manifolds with boundary $M$.   The dominant one for large $\beta$ is
just $\AdS_3$ itself with a periodic identification $t\cong t+\beta$, making what we might call thermal $\AdS_3$.   The metric is\footnote{This metric is
often written in terms of $\tilde t=\ell t$.  Our normalization ensures that the metric on the torus at infinity is conformal to $\d t^2+\d \phi^2$.}
\be\label{adsm} \d s^2=\left(\frac{r^2}{\ell^2}+1\right)\ell^2 \d t^2 + \frac{\d r^2}{\frac{r^2}{\ell^2}+1} + r^2\d \phi^2.\ee
 To put this in the desired form of eqn. (\ref{acform}), we have to solve
 \be\label{changevar} \frac{\ell}{2}\frac{\d\rho}{\rho} =-\frac{\d r}{(\frac{r^2}{\ell^2}+1)^{1/2}},\ee
 leading to $\rho=\frac{1}{r^2}-\frac{\ell^2}{2r^4}+\cdots$ or
 \be\label{halpvar} r^2=\frac{1}{\rho} -\ell^2/2+{\mathcal O}(\rho).\ee
 The cutoff at $\rho\geq \epsilon$ therefore corresponds to $r\leq r_m= 1/\epsilon-\ell^2/2$, so the volume of $X_\epsilon$
 is
 \be\label{volx} V(X_\epsilon)=\int_0^{r_m}\d r\int_0^{2\pi} \d\phi \int_0^\beta \d t \sqrt{\det g}=\ell\int_0^{r_m}\d r\int_0^{2\pi} \d\phi \int_0^\beta \d t\, r
 =\pi\beta\left(\frac{\ell}{\epsilon}-\frac{\ell^3}{2}\right). \ee
 Subtracting the divergent part, we are left with $V_R(X)=-\pi\beta\ell^3/2$, so $\exp(-V_R/4\pi G\ell^2)=\exp(\beta\ell/8 G)$, as expected.   
 
 The moral of the story is that the large negative CFT ground state energy $-c/12$ for large $c$ corresponds to the fact that for large $\beta$, $V_R=-\pi \beta \ell^3/2$ becomes
very negative.   Generalizing this, our hypothesis that there is no ensemble averaging for fixed energy observables leads to predictions about precisely
 when $V_R(X)$, for a hyperbolic three-manifold $X$, can go to $-\infty$.   This will be explained in section \ref{implic}.   But first, we will discuss other contributions
 to $\Tr\,e^{-\beta H}$ to illustrate the fact that in most cases, $V_R(X)$ does not go to $-\infty$ when $\beta$ becomes large.
 
 Before being general, we will describe the special case that is actually important in understanding the Hawking-Page phase transition.   In a CFT, we are free to
 rescale the metric of the torus $M$, so instead of saying that $\phi$ and $t$ have periods $2\pi$ and $\beta$, we could rescale the metric and
 say that the periods are $4\pi^2/\beta$ and $2\pi$.   Now we can write down the same metric as before but with $t$ and $\phi$ exchanged:
 \be\label{nadsm} \d s^2=\left(\frac{r^2}{\ell^2}+1\right)\ell^2 \d \phi^2 + \frac{\d r^2}{\frac{r^2}{\ell^2}+1} + r^2\d t^2.\ee
 Assuming that $t$ is regarded as the Euclidean time direction, this is the Euclidean version of the BTZ black hole; we get the actual BTZ black
 hole
  if we continue to Lorentz signature\footnote{If we set $r^2+\ell^2=R^2$ and substitute $t\to \i t$, the line element (\ref{nadsm}) becomes $R^2\d\phi^2 + \ell^2\frac{\d R^2}{R^2-\ell^2}-(R^2-\ell^2)\d t^2$.
 This is a commonly written form of  the BTZ black hole metric (with zero angular momentum) \
 up to constant rescalings of $R,\phi$, and $t$.     Note that in this way of writing the BTZ metric,
  the black hole mass is encoded entirely in the period of the $\phi$ variable.  The black hole horizon is at $R=\ell$, where the coefficient of $\d t^2$ vanishes. As usual the vanishing of this coefficient represents only a breakdown of the coordinate system, and the Lorentz signature geometry has a real analytic continuation beyond this horizon.} by $t\to \i t$.   Obviously,  the renormalized volume can be evaluated just as before, but with 
 $\beta$ replaced by $4\pi^2/\beta$.  So the leading black hole contribution to $\Tr\,e^{-\beta H}$  is $\exp(\pi^2\ell/2\beta G)$, obtained from the previous
 $\exp(\beta\ell/8G)$ by $\beta\to 4\pi^2/\beta$.   Clearly, for $G$ asymptotically small, $\exp(\beta\ell/8G)$ and $\exp(\pi^2\ell/2\beta G)$ exchange dominance
 at $\beta=2\pi$, and this is the Hawking-Page phase transition.
 
 We can generalize this slightly to describe all hyperbolic three-manifolds whose conformal boundary is the torus $M$.  First, going back to $\AdS_3$,
 notice that we can slightly generalize the equivalence relation on $\phi$ and $t$ that we used before so that the torus at infinity is defined by
 \begin{align}\label{pairs} (\phi,t) & \cong (\phi+2\pi,t)\cr
     (\phi,t)& \cong (\phi+\alpha,t+\beta). \end{align}
A shift in imaginary time by $\beta$ is now accompanied by a rotation of the circle parametrized by $\phi$  by an angle 
$\alpha$, so a CFT path integral on this torus    computes
$\Tr\,e^{-\beta H+\i\alpha P}$, where $P$ is the operator that generates a rotation of the circle.   The eigenvalues of $H$ and $P$ on a state
that corresponds to an operator of dimension $\Delta$ and spin $J$ are $-\frac{c}{12}+\Delta$ and $J$, so
\be\label{newone} \Tr\,e^{-\beta H+\i\alpha P}=e^{\beta c/12} \sum_{i=0}^\infty e^{-\beta\Delta_i+\i \alpha J_i}. \ee
The computation of $V_R$ is not affected by the angle $\alpha$, so $V_R=-\pi\beta\ell^3/2$, independent of $\alpha$.   This agrees
with the fact that the exponentially growing factor $e^{\beta c/12}$ in the CFT partition function  (\ref{newone}) does not depend on $\alpha$, since the spin of the
ground state  is $J_0=0$.

 If we set $z=(\phi+\i t)/2\pi$,
     we see that the two-torus that is defined in eqn. (\ref{pairs}) is the quotient of the complex $z$-plane by the lattice generated by 1 and $(\i\beta+\alpha)/2\pi$.
     So it can be viewed as a complex elliptic curve with the modular parameter
     \be\label{modpar} \tau = \frac{\i\beta+\alpha}{2\pi}.\ee
    In this language, the formula for the renormalized volume becomes
    \be\label{renvolf} V_R(\tau)= -\pi^2\ell^3\Im(\tau). \ee
        
    Let $\u$ be the image in the boundary torus  $M$ of a straight line from $z=0$ to $z=1$, and let $\v$ be the image of a straight line from $z=0$ to $z=\tau$.
    Thus $\u$ and $\v$ are circles in $M$.   In the 
thermal $\AdS_3$ space of eqn. (\ref{adsm}), $\u$ is the boundary of a disc and $\v$ is not.\footnote{The coordinates $r,\phi$ can be viewed as polar coordinates
for a copy of $\rR^2$.  Then $\u$ is a large circle in this $\rR^2$ and therefore is the boundary of a disc.}   In the Euclidean black hole, since it is defined
by exchanging $t$ and $\phi$, the roles are reversed:
$\v$ is the boundary of a disc and $\u$ is not.   More generally, if $c,d$ are any relatively prime integers, there exists a hyperbolic three-manifold $X_{c,d}$ 
with boundary $M$ such that $c \u + d\v$ is the boundary of a disc in $X_{c,d}$ and other linear combinations are not.  
In this notation, $X_{0,1}$ is thermal $\AdS_3$ and $X_{1,0}$ is the Euclidean black hole.
 Generalizing the way we introduced
the Euclidean black hole, $X_{c,d}$ is defined by replacing the angles $2\pi t/\beta$ and $\phi$ with integer linear combinations of themselves.  In other words,
$X_{c,d}$ is actually the same manifold as the original thermal $\AdS_3$ space, but with the boundary parametrized differently.
The $X_{c,d}$ are actually the complete set of hyperbolic three-manifolds with boundary $M$. 
The renormalized volume of $X_{c,d}$ is
\be\label{renovol}V_R(X_{c,d}) =-\pi^2\ell^3 \frac{\Im(\tau)}{|c\tau+d|^2}. \ee
For our purposes, what is notable about this formula is that except in the original case $c=0,d=1$ of thermal $\AdS_3$, it never happens that $V_R\to-\infty$
for $\beta\to\infty$.   On the contrary, whenever $c\not=0$, eqn. (\ref{renvol}) implies that $V_R\to 0$ for $\beta\to\infty$.   Therefore, the thermal $\AdS_3$ space
$X_{0,1}$ is the only one of these manifolds that contributes to the partition function of the fixed energy states, or indeed to any states below the black hole
threshold.  

None of the observations in this section are in any way new.   The computation of the renormalized volume of thermal $\AdS_3$ and the BTZ
black hole is equivalent to the analysis of ADM masses in the original BTZ paper \cite{BTZ}.  As noted earlier, the CFT interpretation of the result
of this computation
goes back to \cite{Strominger}, and in turn to the construction of a boundary stress tensor \cite{BH}.
Summation over the manifolds $X_{c,d}$ has been 
considered in several previous papers -- in \cite{Farey,MaMo} to count certain
supersymmetric black hole states in $\AdS_3$, and in \cite{MaWi} in an attempt to construct a partition function
of three-dimensional pure gravity.  The special role of $X_{0,1}$ was part of those analyses.

\subsection{Implications of The Conjecture For Hyperbolic Geometry}\label{implic}

We have proposed that certain observables are not affected by ensemble averaging: the excitation energies $\Delta_i$ and spins $J_i$ of
the fixed energy states, and their trilinear couplings $\lambda_{ijk}$.  Those couplings can be computed as three-point functions 
$\la \O_I(x)\O_j(y)\O_k(z)\ra$ on $\S^2$.  Another description will be more useful in what follows: because
 of the operator-state correspondence of CFT, the $\lambda_{ijk}$  can also be computed by a path
integral on a three-holed sphere, with the fixed energy states $i,j,k$ inserted on its boundaries (fig. \ref{ThreeHoled}).  

 \begin{figure}
 \begin{center}
   \includegraphics[width=2.5in]{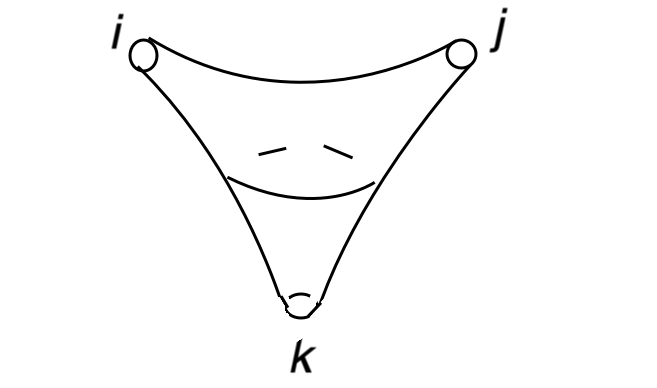}
 \end{center}
\caption{\small A three-holed sphere, with its boundaries labeled by fixed energy states $i,j,k$.
 \label{ThreeHoled}}
\end{figure} 

 \begin{figure}
 \begin{center}
   \includegraphics[width=3.5in]{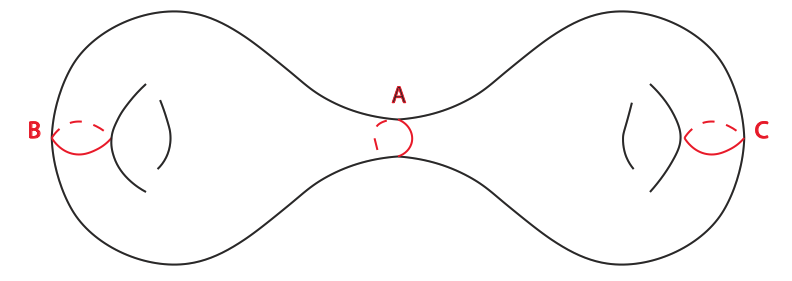}
 \end{center}
\caption{\small A genus two surface $M$ with three nonintersecting and homotopically independent one-cycles labeled $\sA$, $\sB$, and $\sC$.
$M$ is the conformal boundary of a hyperbolic three-manifold $X$.   If $X$ is such that $V_R(X)\to -\infty$ when $\sA$ is pinched, then the contribution of $X$ to $Z(M)$
has a part that describes propagation through $\sA$ of a fixed energy state.    In general, 
however, black hole states are propagating through $\sB$ and $\sC$ and this amplitude is subject to ensemble averaging.  If we stipulate
that $V_R(X)\to -\infty$ when any of $\sA$, $\sB$, or $\sC$ is pinched, we get an amplitude that has a contribution  that describes fixed energy states
propagating through each of $\sA$, $\sB$, and $\sC$, and interacting via two three-holed spheres, one to the left of $\sA$ and one to the right.
  A contribution of this type, according to our conjecture, should not be subject to ensemble averaging.   So this behavior of $V_R(X)$ should be
  possible only if the conformal boundary of $X$ is connected.
 \label{GenusTwo}}
\end{figure} 

In order to test our proposal using properties of hyperbolic three-manifolds, we want to identify observables in genus $g\geq 2$ that
can be determined in terms of $\Delta_i$, $J_i$, and $\lambda_{ijk}$.  Hopefully, the hyperbolic geometry will work out in such a way that
amplitudes  that
can be determined in terms of $\Delta_i$, $J_i$, and $\lambda_{ijk}$ receive contributions only from hyperbolic manifolds with connected boundary.

 To illustrate the idea, drawn in fig. \ref{GenusTwo} is a genus 2 surface $M$,
along with three
 nonintersecting and homotopically independent  one-cycles $\sA$, $\sB$, and $\sC$.   From a conformal point of view, it is equivalent to say
 that a one-cycle such as $\sA$ is becoming short, or is being ``pinched,'' or that the tube through $\sA$ is becoming long.   (We used this equivalence in 
 section \ref{reviewvol} when we said that it is equivalent conformally  to consider the circle parametrized by $t$ to be long, of circumference $\beta\gg 1$,
 or to consider the circle parametrized by $\phi$ to be short, of circumference $4\pi^2/\beta\ll 1$.)   When $\sA$ is pinched or the tube is long, the CFT
 partition function $Z(M)$ grows exponentially, because of the negative CFT ground state energy.

Now let $X$ be a hyperbolic three-manifold that has $M$ in its conformal boundary.   If the conformal boundary of $X$ consists only of $M$,
then the path integral $Z_X$  contributes to $Z(M)$.   If the conformal boundary is $M\sqcup M'$ for some $M'$ (which may or may not be connected),
 then $Z_X$  contributes to a connected correlation function $\la Z(M) Z(M')\ra_c$.  In either case, we ask what happens when
 a one-cycle such as $\sA$ or $\sB$ or $\sC$ is pinched.   Does $Z_X$ grow exponentially, reflecting in terms of the boundary CFT a
 sub-threshold state propagating through the cycle that is being pinched?\footnote{In order for a state of fixed energy above the ground state
  to propagate through the cycle that is being
 pinched, we need a stronger condition that $Z_X$ grows as $e^{\beta c/12}$.  It turns out, however, that the interesting constraints on $X$ arise if one merely
 asks for exponential growth of $Z_X$, without specifying the rate, and therefore it will not be important to distinguish fixed energy states from sub-threshold states.}  
 
 Since $Z_X$ is asymptotic for small $G$ to 
 $\exp(-V_R/4\pi G)$, a necessary condition for $Z_X$ to show exponential growth in the pinching limit is that $V_R(X)$ must go to $-\infty$ in that
 limit.   In section \ref{overview} and appendix \ref{details} we determine the condition on $X$ such that $V_R(X)$ goes to $-\infty$ when
 a given boundary cycle $\sA$ is pinched.   The answer is that this occurs if and only if $\sA$ is the boundary of a disc in $X$.   
 This generalizes the previously known facts for genus 1 that were summarized in section \ref{reviewvol}: of the manifolds $X_{c,d}$, the only
 one with the property that $\lim_{\beta\to\infty}V_R(X_{c,d})=-\infty$ is the thermal $\AdS_3$ space $X_{0,1}$, and this is also the only one
in which the circle parametrized by $\phi$ (which is the one that is pinched for $\beta\to\infty$) is the boundary of a disc.  Mathematically, if a circle $\sA$ in a component
$M$ of the 
conformal boundary of $X$ is the boundary of a disc in $X$ (but not in $M$), then $\sA$ is said to be ``compressible'' in $X$.   If  $M$ 
contains a compressible circle, then $M$ itself is said to be compressible.

In most cases, a given boundary circle $\sA$  is not compressible, so $Z_X$ does not contribute to an amplitude in which a sub-threshold state propagates through
$\sA$.  For many choices of $X$, there is no compressible circle at all in a given boundary component $M$.   (An example is the Fuchsian manifold that we discuss later.)
 But even if $\sA$ is compressible in $X$, $X$ typically does not contribute to an amplitude on $M$ that we expect to be free of
ensemble averaging, because even if a sub-threshold state is propagating through $\sA$, there may be black hole states propagating in the rest of the Riemann
surface.   For example, in fig. \ref{GenusTwo}, even if a sub-threshold state is propagating through $\sA$, 
the states propagating on the genus one Riemann
surfaces to the left and right of $\sA$  may be black hole states.  
To identify something that we expect to be free of ensemble averaging, we reason as follows.   
If we cut the genus two surface $M$ on the three nonintersecting and homotopically independent  one-cycles $\sA$, $\sB$, and $\sC$, it decomposes into the
union of two three-holed spheres.   A contribution to $Z(M)$ in which specified sub-threshold states $i,j,k$ are propagating through $\sA$, $\sB$,
and $\sC$ can be evaluated in terms of the dimensions and spins of the three states and the path integrals on the two three-holed spheres,   The boundaries
of the two three-holed spheres are labeled by the particular sub-threshold states that propagate through $\sA$, $\sB$, and $\sC$ 
 (in the example sketched in the figure, if states $i,j,k$ propagates respectively through $\sA$, $\sB$, and $\sC$, then the labels are $ijj$ for the three-holed
sphere on the left and $ikk$ for the three-holed sphere on the right).   The path integral on such a labeled three-holed sphere computes a trilinear
coupling of sub-threshold states.
Since, according to our conjecture, the dimensions and spins of the
sub-threshold states and their trilinear couplings are all unaffected by ensemble
averaging, we expect that a contribution to $Z(M)$ with sub-threshold states propagating through $\sA$, $\sB$, and $\sC$ is not subject to ensemble averaging.
On the other hand, a 
necessary condition for  $X$ to contribute to an amplitude with sub-threshold states propagating through each of $\sA$, $\sB$, and $\sC$ is that
$V_R(X)$ must go to $-\infty$ when any of $\sA$, $\sB$, or $\sC$ is pinched.  
Equivalently, in view of what has already been said, $\sA$, $\sB$, and $\sC$ must all be compressible in $X$. 
Conversely, if $\sA$, $\sB$, and $\sC$ are all compressible in $X$, then $V_R(X)$ does go to $-\infty$ when any of those cycles is pinched.
In that case, we expect that the conformal boundary of $X$ is connected and consists only of $M$, ensuring that $X$ does not contribute to a connected amplitude between
disconnected boundaries.  

This discussion has a straightforward generalization to higher genus.  If a surface $M$ of genus $g\geq 2$
 is cut along $3g-3$ nonintersecting and homotopically independent  circles $\sA_\sigma$, $\sigma=1,\cdots,
3g-3$, then it decomposes into
a union of $2g-2$ three-holed spheres.   A contribution to $Z(M)$ in which a sub-threshold state is propagating through each of the $\sA_\sigma$, should,
according to our conjecture, not be subject to ensemble averaging.   For a hyperbolic manifold $X$ whose  conformal boundary contains $M$ 
to contribute to such an amplitude, the $\sA_\sigma$ must all be compressible in $X$.   So we expect that if $M$ contains $3g-3$ nonintersecting and homotopically
independent
circles $\sA_\sigma$ that are all compressible in $X$, then the conformal boundary of $X$ consists only of $M$.

As we explain in section \ref{overview} and appendix \ref{details}, this is true and in fact more is true.    If at least $3g-5$ homotopically independent and
non-intersecting one-cycles in $M$ are compressible in $X$ (or $3g-6$ of them for $g\geq 3$), 
then the conformal boundary of $X$ is connected and consists only of $M$, and moreover
$X$ is a Schottky manifold.   A Schottky manifold is the simplest type of hyperbolic three-manifold.   A Schottky manifold with conformal boundary $M$ is,
topologically, the ``interior'' of $M$ for some embedding of $M$ in $\rR^3$.   For example, the picture drawn in fig. \ref{GenusTwo} suggests an embedding of
a genus two surface in $\rR^3$.  The interior of $M$ for this embedding is a three-manifold in which $\sA$, $\sB$, and $\sC$ are all compressible.  Topologically,
this interior is  called a handlebody.   A handlebody with boundary $M$ admits a hyperbolic metric with $M$ as the conformal boundary (for any choice
of the conformal structure on $M$).   Endowed with such a metric the handlebody is called a Schottky manifold.

From what has just been explained, we learn, modulo the arguments in section \ref{overview} and appendix \ref{details},
 that three-dimensional hyperbolic geometry, at least for the questions that we have asked, is consistent with our hypothesis that
certain AdS/CFT observables are not affected by ensemble averaging.  

Some of the points can be illustrated with the simple example of a Fuchsian manifold.  This is a three-manifold that is
topologically of the form $X=M\times \rR$, where $M$ is a Riemann surface of genus $g\geq 2$.   $X$ carries a hyperbolic metric of the form\footnote{\label{convcore}
The
submanifold $X_C$ defined by $t=0$ is totally geodesic and moreover is geodesically convex (any geodesic in $X$ between two points in $X_C$ is
actually contained in $X_C$).   $X_C$ is called the ``convex core'' of $X$, a notion that will be important in section \ref{overview}.  This example is exceptional,
because $X_C$ has volume 0.  Apart from a Fuchsian manifold, or a solid torus, such as the BTZ black hole, the convex core of any other hyperbolic three-manifold, including the quasi-Fuchsian ones that are discussed momentarily, 
has positive volume. (The convex core is not defined for $\AdS_3$ itself.)} 
\be\label{expmet}\d s^2 = \ell^2\left( \d t^2 +\cosh^2 t\, \d\Omega^2\right),\ee
where $\d\Omega^2$ is a metric on $M$ of constant scalar curvature $R=-2$.   The conformal boundary of $X$ consists of two copies of $M$, at
$t=-\infty$ and $t=+\infty$, respectively; let us call these $M_1$ and $M_2$.  $M_1$ and $M_2$ have the same complex structure since they have the same
metric $\d\Omega^2$.     The path integral on $X$  contributes to a connected amplitude
$\la Z(M_1)Z(M_2)\ra_c$.   However, a simple calculation shows that $V_R(X)$ is a topological invariant, independent of the complex structure of $M_1$ and
$M_2$.   $V_R$ never goes to $-\infty$ regardless of how we vary the complex structure of the boundary, so the path integral on $X$ never shows
the exponential growth characteristic of a sub-threshold state.

The drawback of this simple example is that since $M_1$ and $M_2$ have the same metric, we cannot vary their complex structures independently.
What happens, for example, if we pinch a cycle in $M_1$ without changing $M_2$?   In fact, there is a more general family of
hyperbolic metrics on $X$ such that the complex structures on $M_1$ and $M_2$ do vary independently.  These metrics are called quasi-Fuchsian,
and unfortunately in a situation that involves pinching on only one side,
they are only known by existence proofs, not explicit formulas.   At any rate, no one-cycle in either $M_1$ or $M_2$ is compressible,
so the results of section \ref{overview} and appendix \ref{details} imply that $V_R(X)$ never goes to $-\infty$.   Thus, although $X$ does contribute
to a connected amplitude $\la Z(M_1)Z(M_2)\ra_c$, its contribution only involves black hole states,  with no sub-threshold states  appearing in any channel.

So far in this section, we have only considered the case of a surface of genus $\geq 2$, and in fact the case of genus 1 is exceptional.   
A hyperbolic three-manifold whose conformal boundary contains a component of genus 1 is always one of the manifolds $X_{c,d}$ that were discussed in 
section \ref{reviewvol}.   In particular, at the level of hyperbolic geometry there are no disconnected amplitudes involving a torus ${\sf T}^2$.   
This is, however, not the whole story.   There is good reason to suspect  \cite{SSS1}  the existence in AdS/CFT models of a connected
correlation function $\la Z(M)Z(M')\ra_c$  where $M$ and $M'$ are tori, even though  there are no  classical hyperbolic manifolds
that can generate such a contribution. 
It has been argued  in \cite{CJ} that a path integral  on $\rR\times {\sf T}^2$ contributes to  $\la Z(M)Z(M')\ra_c$, even though no classical solution
is available on this manifold.  That paper actually contains in eqn. (3.56) an interesting formula for the contribution of $\rR\times {\sf T}^2$ to 
$\la Z(M)Z(M')\ra_c$, with independent complex structures on $M$ and $M'$.  We do not know if this formula is precisely correct.   However, the formula
is independent of $G$ and in particular shows no exponential growth for $G\to 0$.  So if it is even qualitatively correct, the contribution of $\rR\times {\sf T}^2$
to  $\la Z(M)Z(M')\ra_c$ is completely consistent with out conjectures.

There is one other case known of manifolds that do not admit classical hyperbolic metrics but nonetheless make relatively well understood contributions
to the path integral of three-dimensional gravity with $\Lambda<0$.  These are Seifert fibered manifolds, whose contributions were analyzed in \cite{MaxTur}
in the Kaluza-Klein limit, that is, the limit that the fiber is small.  As the remaining moduli are varied, these path integrals do not show contributions of sub-threshold states.

\section{Overview of Mathematical Arguments}\label{overview}

In the following, $M$ is an oriented two-manifold that is a component of the conformal boundary of a hyperbolic three-manifold $X$.   
It is known by elementary arguments that if $M$ has genus 0, then $X$ must be $\AdS_3$ itself, and if $M$ has genus 1, $X$ is one of the manifolds
$X_{c,d}$ that were introduced in section \ref{reviewvol}.   Therefore, we are primarily interested in the case that $M$ has genus at least 2.   Such
an $M$ admits a hyperbolic metric of constant scalar curvature $R=-2$; the space of such metrics, up to diffeomorphisms of $M$ that are isotopic
to the identity, is the Teichm\"{u}ller space $\T(M)$.   As one approaches the boundary of $\T(M)$, it is possible for the length of a simple closed geodesic $\gamma$
to go to 0; we will say that in that limit, $\gamma$ is ``pinched.'' 
We say that a collection of
embedded circles in $M$ are ``independent'' if they are non-intersecting and homotopically independent.   If $M$ has genus $g\geq 2$, a maximal set of independent circles
in $M$ consists of  $3g-3$ circles, as in fig. \ref{GenusThree}.   If $g=1$, a maximal set consists of just one circle.   An embedded
 circle in $M$ is compressible in $X$ if it is homotopically nontrivial in $M$ but bounds a disc in $X$.   As explained in more detail in section \ref{implic},
 a Schottky manifold is topologically the ``interior'' of $M$, for some embedding of $M$ in $\rR^3$.  Topologically the   $X_{c,d}$ are solid tori, the
 genus 1 analogs of a Schottky manifold.

The arguments of section \ref{hagedorn} relied on two mathematical assertions:

\begin{proposition}  \label{pr:3g-3}  If a component $M$ of the conformal boundary of $X$
contains a collection of at least $3g-5$ independent circles that are compressible in $X$, then the boundary of
$X$ is connected and consists only of $M$, and moreover $X$ is a Schottky manifold.\end{proposition}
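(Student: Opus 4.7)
The plan is to convert the compressibility hypothesis into a system of disjoint embedded compressing discs in $X$ and then apply standard 3-manifold topology together with hyperbolic geometric input. First, invoke the Loop Theorem of Papakyriakopoulos: each compressible simple closed curve $\gamma_i \subset M$ bounds an embedded disc in $X$. Because the $\gamma_i$ are already pairwise disjoint on $M$, a standard innermost-disc cut-and-paste procedure produces pairwise disjoint embedded compressing discs $D_1,\ldots,D_k \subset X$ with $\partial D_i = \gamma_i$, where $k \geq 3g-5$ (or $k \geq 3g-6$ for $g \geq 3$).

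Next, cut $X$ along $\bigcup_{i=1}^{k} D_i$ to form a (possibly disconnected) 3-manifold $Y$; dually, $X$ is reconstructed from $Y$ by attaching $k$ one-handles along the duplicated discs. The boundary of $Y$ consists of every component of $\partial_\infty X$ other than $M$, together with a surface $\widehat M$ obtained from $M$ by cutting along all the $\gamma_i$ and capping each of the $2k$ resulting boundary circles with a disc. A direct Euler characteristic count gives $\chi(\widehat M) = (2-2g) + 2k$, and a case analysis of how $k$ independent simple closed curves can dissect a genus-$g$ surface, combined with the stated lower bounds on $k$, shows that every component of $\widehat M$ is either a $2$-sphere or a torus, with higher-genus closed pieces excluded.

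Now bring in hyperbolic geometry. Any $2$-sphere component of $\widehat M$ bounds a $3$-ball in $Y$, since $Y$ inherits irreducibility from the aspherical hyperbolic manifold $X$ (via the Sphere Theorem). Any torus component $T \subset \widehat M$ must be compressible in $Y$: otherwise $T$ would be essential in the ambient hyperbolic manifold, forcing a rank-$2$ cusp that is excluded by the convex cocompactness implicit in taking $\partial_\infty X$ to consist of hyperbolic surfaces. A compressible torus bounds a solid torus in $Y$, and the associated meridian descends to an additional compressing disc in $X$ which may be adjoined to the system. Iterating the cut-and-reduce procedure finitely many times yields a disjoint system of compressing discs whose complement in $X$ is a disjoint union of $3$-balls; consequently $X$ is a handlebody of genus $g$. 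A handlebody has connected boundary, so $\partial_\infty X = M$ and, endowed with its hyperbolic structure, $X$ is by definition a Schottky manifold.

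The main obstacle is the sharp numerical threshold. It is crucial that $k \geq 3g-5$ (respectively $3g-6$ for $g \geq 3$) forbid the appearance of closed surfaces of genus $\geq 2$ among the components of $\widehat M$, since such pieces bound neither balls nor solid tori in any uniform way and would block the inductive reduction. The requisite combinatorial classification of the topological types that can arise when cutting a genus-$g$ surface along an independent system only slightly smaller than a pants decomposition is delicate and genus-dependent. Moreover, the borderline case $g=2$, $k=1$, where $\widehat M$ is forced to consist of one or two tori rather than any spheres, relies essentially on hyperbolic input (the absence of essential tori and of rank-$2$ cusps) rather than on purely topological counting; the discrepancy between the thresholds $3g-5$ and $3g-6$ at small genera reflects exactly this subtlety.
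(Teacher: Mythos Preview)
Your argument is correct and is a genuinely different route from the paper's.  The paper proceeds by induction on genus: it deforms the hyperbolic metric so that one compressible curve is short, invokes Lemma~\ref{lm:embedded} to cut along an embedded totally geodesic plane, glues in half-spaces to obtain a new convex cocompact manifold, and carefully counts how many independent compressible curves survive on the surgered boundary component(s) so that the inductive hypothesis applies.  By contrast, you cut along all compressing discs at once, reduce to a purely combinatorial fact about surfaces (that with $k\geq 3g-5$, or $3g-6$ for $g\geq 3$, every capped-off piece has genus $\leq 1$ --- which follows immediately by extending to a pants decomposition and noting that a genus-$\geq 2$ piece would require at least four further curves, more than the $3g-3-k\leq 3$ available), and then use only two consequences of hyperbolicity: irreducibility of $X$ and the absence of $\mathbb Z^2$ subgroups in $\pi_1(X)$.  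Your approach is cleaner for the bare topological statement; the paper's geometric surgery, on the other hand, stays within the category of convex cocompact hyperbolic manifolds and is reused later (Section~\ref{ssc:convexcore}) to control convex-core volumes under pinching.

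A few small points of precision.  What you invoke is really Dehn's Lemma (a specified null-homotopic embedded curve bounds an embedded disc) rather than the Loop Theorem.  The statement ``a boundary $2$-sphere bounds a $3$-ball in $Y$'' is doing more work than its phrasing suggests: the correct lemma is that a compact connected orientable irreducible $3$-manifold with a sphere boundary component is $B^3$, which in particular forces that component of $Y$ to have \emph{no other} boundary --- this is precisely how $\partial_\infty X\setminus M=\varnothing$ drops out, and you should make it explicit.  Finally, the meridian disc of a solid-torus component has boundary in $\widehat M$, not a priori in $M$; but since the capping discs are discs in the torus, the meridian can be isotoped into the $M$-part of $\widehat M$ and then gives a genuine compressing disc in $\bar X$, disjoint from and independent of the original system.
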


\begin{theorem}\label{tm:main}  In the limit that a simple closed geodesic $\gamma\subset M$ is pinched, the renormalized volume $V_R(X)$ remains finite
if $\gamma$ is not compressible in $X$, and approaches $-\infty$ if $\gamma$ is compressible in $X$. \end{theorem}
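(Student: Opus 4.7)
The plan is to combine the Krasnov--Schlenker identity for the renormalized volume of a convex co-compact hyperbolic $3$-manifold,
\begin{equation*}
V_R(X) = V(C(X)) - \frac{1}{4}\, L(\mu_X),
\end{equation*}
where $C(X)$ is the convex core and $\mu_X$ is the bending measured lamination on $\partial C(X)$, with the deformation theory of such manifolds. The two halves of the theorem correspond to fundamentally different kinds of limits: the non-compressible case is a \emph{tame} degeneration in which the hyperbolic structure on $X$ itself converges, whereas the compressible case is a \emph{wild} degeneration that escapes to infinity in the deformation space of $X$, along which $V_R$ must be shown to diverge with a definite sign.

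For the non-compressible case, $\gamma$ represents a nontrivial free-homotopy class in $X$ and is therefore freely homotopic to a unique closed geodesic $\gamma^{\ast}\subset X$. As the conformal structure on $M$ degenerates by pinching $\gamma$, Thurston's hyperbolic Dehn surgery theorem with its quantitative refinements (Hodgson--Kerckhoff, Bromberg) and the geometric convergence theorems of Anderson--Canary and Ohshika imply that the hyperbolic structures on $X$ converge geometrically to a geometrically finite cusped hyperbolic manifold $X_{\infty}$ carrying a rank-one cusp along $\gamma^{\ast}$. Under this convergence the convex cores and bending laminations also converge in an appropriate sense, both $V(C(X))$ and $L(\mu_X)$ stay bounded, and hence $V_R(X)\to V_R(X_{\infty})$, a finite number.

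For the compressible case, $\gamma$ bounds an embedded disk $D\subset X$, and there is no limiting hyperbolic structure on $X$ within its own deformation space. I would localize the estimate near $D$. Choose an annular neighborhood $A\subset M$ of $\gamma$ whose conformal modulus $m$ is large (so that $\gamma$ is short in the Poincar\'e metric of $M$). Because $\gamma$ compresses in $X$, $A$ bounds on the $X$-side a region $U\subset X$ diffeomorphic to a solid torus with meridian $\gamma$. Up to exponentially small boundary corrections the hyperbolic geometry of $U$ is modeled on a thermal $\AdS_3$ cap around a compressible core, and its contribution to the renormalized volume can be computed in direct analogy with the calculation in Section~\ref{reviewvol} that yielded $V_R(X_{0,1})=-\pi\beta\ell^3/2$; the result is a contribution of order $-m$, which tends to $-\infty$. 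It remains to bound the contribution of the complement $X\setminus U$; since $M\setminus A$ does not itself degenerate, this should follow from a Schlenker-type topological lower bound on the renormalized volume of a convex co-compact piece, together with an additivity-along-a-nearly-geodesic-disk formula for $V_R$.

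The main obstacle is the compressible case. The non-compressible case is essentially a continuity statement, given the modern convergence theory of geometrically finite Kleinian groups. The compressible case requires a genuinely quantitative estimate that converts a purely topological input (the existence of a compressing disk for $\gamma$) into a logarithmically divergent negative contribution to $V_R$. The key technical step is to establish a cut-and-paste additivity estimate for $V_R$ across a nearly totally geodesic compressing disk, so that the explicit thermal-$\AdS_3$ contribution from the collar of $\gamma$ can be combined with a uniform bound on the complementary region.
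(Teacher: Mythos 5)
Your proposal has two genuine gaps, one in each half of the theorem, and the more serious one is the one you flag yourself. First, the foundational ``identity'' $V_R(X)=V_C(X)-\frac{1}{4}L_m(l)$ does not exist. The quantity $V_C(X)-\frac{1}{4}L_m(l)$ is, up to an additive topological constant, the renormalized volume $V'_R$ associated to the \emph{Thurston} metric at infinity, not the Poincar\'e-metric renormalized volume $V_R$ appearing in the theorem; what holds for $V_R$ is a pair of inequalities (Lemma \ref{lm:compare}). The upper bound, eqn.\ (\ref{boundvr}), is in the literature, but the lower bound with a constant depending only on topology is exactly Theorem \ref{tm:compare2} of this paper: previously, lower bounds for manifolds with compressible boundary had constants depending on the injectivity radius of the boundary, i.e.\ constants that degenerate precisely in the pinching limits under consideration. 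This undermines your non-compressible argument as well: even granting strong geometric convergence to a cusped geometrically finite manifold with bounded $V_C$ and $L_m(l)$ (pointed Gromov--Hausdorff convergence alone does not control these global quantities), the passage from that to boundedness of $V_R$ requires either the two-sided comparison just discussed or continuity of $V_R$ under geometric limits --- a delicate theorem of Vargas Pallete, not an automatic consequence of convergence of the hyperbolic structures. The paper's route is much lighter: the pinching stratum is at finite Weil--Petersson distance, so boundedness of $V_R$ follows from the gradient bound (\ref{gradientbound}) of Lemma \ref{lm:WP}, which requires only that no \emph{compressible} curve becomes short along the path.

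The decisive gap is in the compressible case: the ``cut-and-paste additivity estimate for $V_R$ across a nearly totally geodesic compressing disk'' on which your plan hinges is not proved, and it is not a technicality --- it is the entire mathematical content of that half of the theorem, and no such additivity for $V_R$ exists in the literature. $V_R$ is defined by a global subtraction at the conformal boundary and does not localize along interior surfaces in any known way. The paper is structured precisely to avoid needing such a statement: the surgery along embedded totally geodesic planes (Lemma \ref{lm:embedded}) is applied only to the \emph{convex core}, whose volume does behave well under cutting, giving boundedness of $V_C$ in the pinching limit (Lemma \ref{lm:convexcore}); the divergence of $V_R$ then comes not from a negative ``thermal $\AdS_3$ cap'' contribution but from blow-up of the bending term: the Bridgeman--Canary bound $L_m(l)\geq P/r-Q$, eqn.\ (\ref{lowerbound}), inserted into the upper bound (\ref{boundvr}). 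Your heuristic that the collar contributes $\sim -m \sim -\pi/L_{c_t}(\gamma)$ does agree with the refined asymptotics $V'_R\sim -\pi^3/L_{c_t}(\gamma)$ of Theorem \ref{tm:asymptotics}, but the proof of that result again goes through the bending lamination and extremal length (Lemma \ref{lm:ext}), not through cutting $V_R$ along the disk. As it stands, your compressible case is a plan whose key lemma is missing; the paper's argument indicates that the right move is to replace that lemma --- transfer the cut-and-paste to $V_C$ and import the bending estimate --- rather than to prove it.
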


Here we will give a very rough sketch of the proofs.\footnote{The bound $3g-5$ in Proposition \ref{pr:3g-3} is stronger than
we actually needed in section \ref{hagedorn}, where a bound of $3g-3$ would have sufficed.     We do not know if the stronger bound
is significant in the AdS/CFT context.  In appendix \ref{ssc:maximal}, we prove a slightly sharper bound for $g\geq 3$ (Theorem \ref{tm:3g-6}). Likewise
Theorem \ref{tm:main} is sharpened in Theorem \ref{tm:asymptotics}, as remarked at the end of this section.}    Further detail is explained in appendix \ref{details}.

 \begin{figure}
 \begin{center}
   \includegraphics[width=4in]{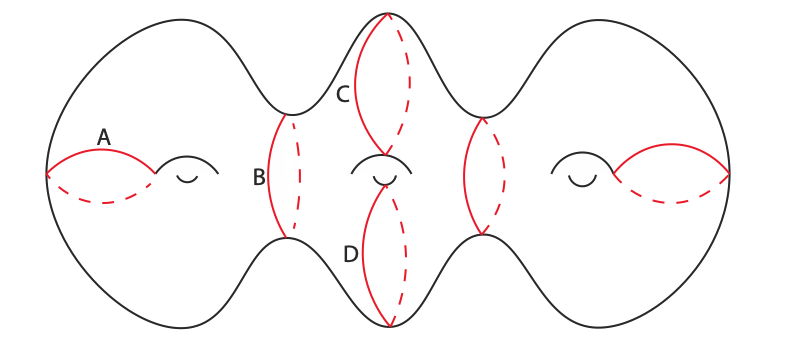}
 \end{center}
\caption{\small A  surface $M$ of genus $g=3$, with $3g-3=6$ independent circles marked.  Surgery on $\sf A$ produces a surface $M_1$ of genus 
$g_1=g-1=2$.   In passing to $M_1$, $\sf A$ disappears, $\sf B$ becomes nullhomotopic, and $\sf C$ and $\sf D$ become homotopic to each other,
leaving three independent circles on $M_1$.   Surgery on $\sf B$ leaves a surface $M_1'$ of genus $g_1'=1$ and a surface $M_1''$ of genus $g_1''=2$.
Of the original 6 independent circles on $M$, $\sf B$ disappears in the surgery, $\sf A$ remains on $M_1'$, and $\sf C$ and $\sf D$ become homotopic.
We remain with $3g_1''-3=3$ independent circles on $M_2'$.  
 \label{GenusThree}}
\end{figure} 

One important tool in the arguments is that, by varying the hyperbolic metric of $X$, the conformal structure of $M$ can be varied arbitrarily.
Indeed, by a classic result, the moduli space of hyperbolic metrics on $X$
is the  space of all complex structures on the conformal boundary of $X$ (whether that conformal boundary is connected or not)
modulo diffeomorphisms of the conformal boundary that extend over $X$.  

Therefore, in discussing Proposition 3.1, we can pick the complex structure of $M$ to be such that the $3g-5$ independent compressible circles
correspond to disjoint simple closed geodesics that are all very short.   Let $\gamma$ be one of these.   As $\gamma$ is compressible, it is the
boundary of a disc $P\subset X$.  In appendix \ref{details}, we show that one can choose $P$ to be a totally geodesic plane (thus, a copy of
$\AdS_2$) embedded in $X$.   This makes it possible to perform a simple ``surgery'' on $X$.   We cut $X$ along $P$.   Along each of the resulting
boundaries, we glue in a copy of half of $\AdS_3$.   (Concretely, one can cut $\AdS_3$ in half along a geodesic plane $\AdS_2\subset \AdS_3$,
and then glue in the resulting pieces along the cuts in $X$.)   This gives a new hyperbolic manifold $X_1$, which may or may not
be connected.    The effect of the surgery on $M$
is to cut $M$ along $\gamma$ and glue in a disc on each side, giving a new oriented two-manifold $M_1$.  Even if $X_1$ is connected, $M_1$
may not be.   
Components of the conformal boundary of $X$ other than $M$, if there are any, are not affected by the surgery.

If $M_1$ is connected, it has genus $g_1=g-1$.   Of the $3g-5$ independent compressible curves that we started with on $M$, one, namely $\gamma$,
disappeared in the surgery.   The other $3g-4$ independent compressible circles in $M$ are still compressible in $X_1$.   At most 2 of them are no longer
independent in $M_1$ (they are nullhomotopic or homotopic to each other; see fig. \ref{GenusThree}).   Therefore, if $M$ has $3g-5$ independent compressible circles,
then $M_1$ has at least $3g-8=3g_1-5$ such circles.   Now let us assume inductively that Proposition 3.1 is true for a conformal boundary component of
genus $g-1$.   By this inductive hypothesis applied to $M_1$, we learn that $X_1$ is a Schottky manifold.   Given this, an elementary geometric
argument shows that $X$ is also a Schottky manifold, and in particular its conformal boundary is connected.

If $M_1$ is not connected, it is the union of components $M_1'$, $M_2'$ of genera $g_1',g_1''\geq 1$, with $g_1'+g_1''=g$.   First consider the case
$g_1',g_1''\geq 2$.  Let $n_1',n_1''$ be the maximum number of independent compressible circles on $M_1',$ $M_1''$.  We have $n_1'+n_1''\geq 3g-8$
(we started with at least $3g-5$ such circles on $M$; $\gamma$ was lost and at most one compressible circle on $M_1'$ and one on $M_1''$ is no longer independent
after the surgery, leaving at least $3g-8$ of them).   On the other hand, $n_1'\leq 3g'_1-3$, $n_2'\leq 3g_1''-3$, since a Riemann surface
of genus $h$ supports at most $3h-3$ independent circles.    So $n_1'\geq 3g_1'-5$, $n_1''\geq 3g_1''-5$.   Hence by the inductive hypothesis applied
to $M_1'$ and $M_1''$, $X_1$ must consist of disjoint components $X_1'$ and $X_1''$, where $X_1'$ is a Schottky manifold of conformal boundary $M_1'$, and $X_2'$ is
a
Schottky manifold  of conformal boundary $M_2'$.
   Again it follows by an elementary geometric
argument that $X$ is also a Schottky manifold, in particular with connected conformal boundary.

The same conclusion applies if $M_1'$ and/or $M_1''$ has genus 1.   For example, if $M_1'$ has genus $1$, then $M_1''$ has genus $g_1''=g-1$.
In this case, of the original compressible circles on $M$, at most 1 was originally on $M_1'$ and (assuming $g_1''\geq 2$)
 at most 1 which was originally on $M_1''$ is no longer
independent after the surgery.  So there are at least $3g-8=3g_1''-5$ independent compressible circles on $M_1''$, and the inductive hypothesis
applies to $M_1''$ as before.    As for $M_1'$, since it has genus 1, we invoke the fact that a hyperbolic three-manifold whose conformal boundary
has a genus 1 component is one of the $X_{c,d}$, topologically a solid torus.  So $X$ is the disjoint union of two components, one a solid torus
with conformal boundary $M_1'$ and one a Schottky manifold with conformal boundary $M_1''$.   Again it follows that $X$ is also a Schottky manifold.
A similar argument applies in the special case $g=2$, $g_1'=g_1''=1$.  This completes the proof of Proposition 3.1.

The proof of Theorem 3.2 requires more sophisticated tools.   We must prove two statements: (i) if $\gamma$ is not compressible in $X$, then when $\gamma$
is pinched, $V_R(X)$ remains bounded; (ii) if $\gamma$ is compressible, then when $\gamma$ is pinched, $V_R(X)\to-\infty$.

To prove the first statement, we use the fact that the pinching locus is at finite distance in the Weil-Petersson metric on Teichm\"{u}ller space.   Therefore,
to show that $V_R(X)$ remains bounded as one approaches the pinching locus, it suffices to know that the gradient of $V_R(X)$ in the Weil-Petersson metric remains
bounded.   In fact, it is known that the Weil-Petersson gradient of $V_R(X)$ remains bounded as long as no curve in $M$ that is compressible in $X$
becomes short.   Specifically, if $\ell$ is the length of the shortest non-trivial closed curve in $M$ that is compressible in $X$, and $\chi(M)$ is the Euler
characteristic of $M$, then  \cite{bridgeman-canary:renormalized,bridgeman-brock-bromberg:gradient} the gradient satisfies the bound
\be\label{gradientbound}   \| \d V_R\|_{\mathrm{WP}} \leq \frac{3\sqrt{\pi|\chi(\partial_\infty X)|}}{\sqrt 2\tanh^2(\ell/4)}. \ee
For completeness, we provide a proof in appendix \ref{details}.

For the second statement, we need an upper bound on $V_R(X)$.   There is a useful upper bound in terms of the
volume of the convex core of $X$, which we will denote as $V_C(X)$, and the length of the measured bending lamination of the convex core, which we will
denote as $L_m(l)$.   The meaning of these terms is described in appendix \ref{details}.
In terms of these quantities, one has a bound on $V_R(X)$:
\be\label{boundvr} V_R(X)\leq V_C(X) -\frac{1}{4}L_m(l) +\frac{\pi \log 2}{2}|\chi(M)|. \ee
(The coefficient of the last term on the right of this equation depends on a choice of normalization in the definition of $V_R$.)    This inequality can be found in
 \cite[Theorem 1.1]{compare} for quasi-Fuchsian manifolds, but the proof extends without change for general $X$; see \cite[Section 3]{bridgeman-canary:renormalized}.
 
 In appendix \ref{details}, we show that  $V_C(X)$ remains bounded when one or more compressible curves is pinched.  One also has a result of
 Bridgeman and Canary (see \cite[Theorem 2]{bridgeman-canary:bounding}, and also \cite[Theorem 4.2]{bridgeman-canary:renormalized}) showing
 that $L_m(l)\to\infty$ when a compressible curve is pinched.   Specifically, there are constants $P,Q$ (one can take $P=74$ and $Q=36$) such that if
 $M$ contains a compressible closed geodesic of length $r<1$, then 
 \be\label{lowerbound}L_m(l)\geq \frac{P}{r}-Q.\ee
 The second part of Theorem 3.2, asserting that $V_R(X)\to-\infty$ when a compressible curve is pinched, follows from
 the bounds (\ref{boundvr}) and (\ref{lowerbound}) along with the fact that $V_C(X)$ remains bounded in this limit.  
 
 Physically, one would expect a more precise result than we have stated so far.   One would expect that when a compressible cycle is pinched,
 the divergence of $V_R$ would precisely reflect the CFT ground state propagating through the cycle in question.    With some more detailed
 arguments, we establish this in Theorem A.15.

\section{Why Do Some Observables Show Ensemble Averaging?}\label{why}

As explained in the introduction, connected amplitudes with disconnected boundary, or CADB amplitudes for short, have been a puzzle since early
days of the AdS/CFT correspondence.  A possible explanation has been that actually, the dual of a specific bulk theory is the average of an ensemble
of boundary theories, rather than a specific boundary theory.  Averaging over an ensemble can readily generate CADB amplitudes.
There is a standard objection to this proposal: in many examples of AdS/CFT
duality, it is believed that all of the parameters that the CFT can depend upon (consistent with its general properties such as the supersymmetry algebra
it satisfies) are known, and the bulk theory depends on all of the same parameters.   So what ensemble could one possibly be averaging over to generate
CADB amplitudes?

In this article, we have attempted to sharpen this puzzle by arguing that a certain important class of observables, namely the ones that can be defined
purely in terms of energies and couplings of states that are below the black hole threshold, does not receive any contributions with disconnected
boundaries and thus is not affected by  ensemble averaging.

If there is no ensemble to average over, and if states below the black hole threshold do not show any sign of ensemble averaging, why is it that  when we compute
observables involving black holes states, the gravitational
path integral appears to give ensemble-averaged answers?  Clearly the answer must involve some essential difference
between fixed energy states and black hole states.

Here we will propose a simple answer to this question, based on two assertions:

\begin{itemize}\item{ Black hole physics is highly chaotic.}
\item{
The Hamiltonian $H_N$ describing black hole states does not have a large $N$ limit, and likewise other CFT observables involving
black hole states, such as the trilinear couplings $\la \O_i\O_j\O_k\ra$, do not have a  large $N$ limit, even in a rather general sense,
as will be further discussed presently.  } \end{itemize}

The first statement is generally accepted, based on a reinterpretation \cite{Chaos} of  older calculations \cite{THD} of the behavior of perturbations
in the field of a black hole.  This statement involves a contrast between black holes and fixed energy states, because in a number of important examples,
the spectrum of fixed energy states is believed to be described by an integrable model \cite{Beisert}, not a system with chaotic behavior.  
The second statement also involves a contrast between black holes and fixed energy states. 
Fixed energy states are the states that we see if we take $N\to\infty$ keeping fixed the excitation energy above the ground state.
AdS/CFT duality implies that the energies and couplings of such states have a large $N$ limit; when the boundary theory is a gauge
theory, this can also be seen via a  classic analysis of Feynman diagrams \cite{Thooft}.  
To reach the black hole region, we take $N\to\infty$ with an excitation energy of order $N^2$ if the boundary theory is a gauge theory (and a different
positive power of $N$ in other cases).
The literature does not contain any proposal concerning a sense
in which the Hamiltonian and other observables of black hole states have a large $N$ limit.   Since the entropy (for black hole states of a fixed temperature)
is also growing as a power of $N$, the dimension of the black hole Hilbert space (at a fixed temperature) 
increases by a vast factor from one large value of $N$ to the next.  For example, in the case that the boundary theory is a gauge theory, since the entropy is
asymptotically  $S=bN^2$, with $b$ of order 1, when one changes $N$ from $10^6$ to $10^6+1$, the dimension $e^S$ of the Hilbert space increases by a vast  factor $e^{2b \times 10^6}$.
 This makes
it unclear in what sense one might hope that the black hole Hilbert space and other
observables would have a large  $N$ limit.   In the somewhat analogous problem of quantum statistical mechanics with the volume $V$ playing the role of $N$,
the standard answer is that the Hilbert space and Hamiltonian do not have a large $V$ limit. 
By contrast, the thermofield double state of a pair of entangled systems does have both a large $V$ \cite{HHW} and large $N$ \cite{MaldaDouble} limit. See
\cite{WittenLecture} for more discussion.  

Most likely,  the black hole Hamiltonian and couplings do not have a large $N$ limit, in the sense that, in general, energies and couplings of black hole
states do not have any regularity for large $N$ beyond what follows from the fact that thermodynamic functions and other averages over the spectrum
depend smoothly on $N$ and that, similarly, certain asymptotic
averages of functions of couplings are also smooth functions of $N$.   Asymptotic formulas for averages of functions of couplings were introduced in \cite{CMM} and have
been studied in a number of more recent papers.

Our proposal is that these differences between black hole states and fixed energy states are the reason that apparent ensemble averaging affects
black hole states and not fixed energy states.
Let $H_N$ be the CFT Hamiltonian  at given $N$, on a sphere $\S^{d-1}$ with round metric. $H_N$ commutes with
a symmetry group $G$ consisting of rotations of $\S^{d-1}$ and possible additional symmetries of the CFT, and so is block diagonal with
blocks labeled by representations of $G$.  Chaos in black hole physics
means that if we restrict to states in a band of energies that is above the black hole threshold, then $H_N$ in each block
is an enormous pseudorandom matrix.  A pseudorandom matrix is a matrix that cannot be distinguished from a truly random matrix by any simple
measurement.
   If it is true that $H_N$ does not have a large $N$ limit above the black hold threshold, this suggests
that in each block, 
  the $H_N$ for neighboring values of $N$ can be viewed as  independent pseudorandom draws from a random matrix ensemble.   (As we explain later, it
seems that this statement is actually subject to  corrections that are exponentially small in $N$, but it can serve as a first approximation.)  The
random matrix ensemble is characterized by specifying the entropy $S$ as a function of the energy $E$ and other conserved charges, so it depends smoothly
on $N$.  

Let us consider CFT observables that can be constructed just in terms of $H_N$.   The most important such observables are the twisted partition functions
$Z_{N,R}(\beta)=\Tr\,e^{-\beta H_N}R$, where $\Re\,{\beta}$ is positive and is small enough that the trace is dominated by black hole states, and $R\in G$.
But the following explanation may be clearer if we think first about an arbitrary observable $\W_N$ that depends only on the pseudorandom matrix
$H_N$.   $\W_N$ may be a ``self-averaging''
function in random matrix theory, meaning that it has almost the same value for almost any draw from the random matrix ensemble.
In that case $\la \W_N\ra$  will be a smooth function of $N$, modulo exponentially small corrections that reflect the fact that even self-averaging
functions of a random matrix differ slightly from draw to draw. 
(These corrections are exponentially small because the size $e^S$ of the random matrix is exponentially large, as observed in \cite{SSS1}.) The  corrections to self-averaging behavior will depend erratically on $N$, since they depend on a pseudorandom draw which is different for each $N$.
 If $\W$ is not self-averaging, it will be an erratic function of $N$.  With presently known
methods, the gravitational path integral always produces a smooth function of $N$, typically by summing over contributions of saddle points corresponding to
classical solutions.  Even when classical solutions are not available, calculations that we know how to perform lead to smooth functions of $N$,
as in \cite{CJ}.

Based on this,  what might be calculable with presently available methods?  If $\W_N$ is self-averaging, we can hope to calculate $\la \W_N\ra$ modulo exponentially
small terms that depend erratically on $N$ and depend on a particular draw from the random matrix ensemble.   
 If $\W_N$ is not self-averaging, we will
not be able to compute any approximation to $\la \W_N\ra$  with presently known methods. 
However, an observable that is not self-averaging might still have a nonzero average value in a random matrix ensemble (see \cite{SSS1} for examples), 
and it might be possible to compute this
from the gravitational path integral.   In that case, the expression for $\la \W_N\ra$ that the gravitational path integral would compute would really be an average
value, averaged over nearby values of $N$.

 Now consider several   observables $\W_{N,k}$, $k=1,\cdots,s$ that are all functions of the pseudorandom matrix
$H_N$.    Whether or not individually they have nonzero
averages, the connected correlation function $\la \W_{N,1}\W_{N,2}\cdots \W_{N,k}\ra_c$ may have a nonzero average in random matrix theory,
in which case we may be able to compute this average from the gravitational path integral.
Let us focus on the special case $\W_{N,k}=\Tr\,e^{-\beta_k H_N} R_k$, for some $\beta_k$, $R_k$.  In this special case, $\W_{N,k}$ is a partition function 
$Z_{N}(\beta_k,R_k)$ on a manifold $\S^1\times \S^{d-1}$.  Here if  $\beta_k$ is real, it is the circumference of $\S^1$; it is also interesting to analytically
continue these observables to complex $\beta$, as in \cite{SSS1}.
The group element $R_k\in G$ determines a holonomy around the $\S^1$ factor; this holonomy
consists of a rotation of $\S^{d-1}$ and/or an internal symmetry.   From what we have just said, the gravitational path integral with known methods
may be able to calculate  an averaged value of the connected correlation function
\be\label{conpar}\la Z_N(\beta_1,R_1)Z _N(\beta_2,R_2)\cdots Z_N(\beta_k,R_k)\ra_c .\ee
How the gravitational path integral would calculate this function, or more precisely an approximation to it with a smooth dependence on $N$, is not
immediately clear just from the hypothesis that the $H_N$ are independent pseudorandom matrices.    But using everything we know
about path integrals and quantum gravity, the obvious hypothesis is that $\la Z_N(\beta_1,R_1)Z _N(\beta_2,R_2)\cdots Z_N(\beta_k,R_k)\ra_c $
should be computed from a path integral with a connected bulk and a boundary that is the disjoint union of $k$ copies of $\S^1\times \S^{d-1}$.

This is a plausible interpretation of CADB amplitudes for the special case that the boundary is a union of copies
of\footnote{Considering this example first made possible  a description in terms of $H_N$ only, which was helpful, because random matrix theory is on a much clearer footing than random CFT, which we require in a more general case.
But unfortunately, this example is actually inconvenient from a different point of view, because $\S^1\times \S^{d-1}$ has positive Ricci scalar.   In any dimension,
the boundary of an asymptotically AdS solution of Einstein's equations, if not connected, does not contain any component of positive Ricci scalar \cite{WittenYau}.  So a bulk
computation of 
the observables in eqn. (\ref{conpar}) has to rely on contributions that are less well understood, perhaps somewhat along the lines of \cite{CJ}. }
 $\S^1\times \S^{d-1}$.  If it is correct, then presumably something similar must be true for CADB amplitudes with the $k$ copies of $\S^1\times \S^{d-1}$ replaced by more
general $d$-manifolds.   The rough idea must be that the CFT at a specific large value of $N$, though actually it is a definite CFT (dependent in some cases on a few known parameters), looks, if one only
has access to asymptotic expansions near $N=\infty$,  like a pseudorandom
solution  of the axioms\footnote{For example, an important axiom that contains much of the content of CFT
 is a quadratic ``crossing'' equation satisfied by the trilinear couplings $\la\O_i\O_j\O_k\ra$. This relation is found
by comparing different ways to analyze a four-point function $\la\O_i\O_j\O_k\O_l\ra$.}  of CFT.   Then one would repeat everything we have said so far with
the assertion that the $H_N$ for different $N$ are independent pseudorandom matrices replaced by the statement that the CFT's for different $N$ are independent
pseudorandom draws from a family of asymptotic  solutions of CFT axioms.  

  Since it is not believed that an ensemble of CFT's with the appropriate properties
actually  exists, the idea here is really that the ensemble of random solutions of CFT axioms from which a given large $N$ CFT appears to be drawn only
exists in an asymptotic sense, for large $N$.   A rough analogy is that in low energy effective field theory, the $S$-matrix of a relativistic quantum field theory appears
to be a special case of a family of unitary, relativistic $S$-matrices that can be obtained by giving arbitrary coefficients to all possible parameters in the low energy
effective action.     It is generally believed that the generic element of this family  exists only as an asymptotic expansion at low energies.

Thus, our proposal can be stated as follows.  The CFT's that govern black hole states for different large values of $N$ look, in simple measurements, like (nearly)
independent
pseudorandom draws from a ``swampland'' of effective CFT's that are defined asymptotically for large $N$ and cannot be completed to true theories at integer
values of $N$.   This CFT ``swampland'' would be analogous to the usual ``swampland'' of low energy
effective field theories that are believed not to have ultraviolet completions \cite{Vafa}.   The gravitational path integral, with known methods, calculates averages
over the pseudorandom CFT's with neighboring values of $N$.

Finally, we should point out that in the context of AdS/CFT duality, it is not true that the CFT's for different $N$ are truly independent above the black hole
threshold.   That is because (in known examples)  the theories
with different values of $N$ are unified in string/M-theory and are connected by domain walls.   We will illustrate this point with 
 a simple example that generalizes the Fuchsian manifold that was
introduced in eqn. (\ref{expmet}).   Let $M$ be a compact hyperbolic $d$-manifold with metric $\d\Omega^2$ 
and set $X=M\times \rR$.   On $X$ there is a complete hyperbolic metric
\be\label{compmet}\d s^2=\d t^2+\cosh^2 t\, \d\Omega^2.\ee
The conformal boundary of $X$ consists of two copies of $M$, at $t=\pm\infty$.   Let $U$ be the submanifold of $X$ defined by $t=0$.   Then $U$ is a minimal
submanifold,\footnote{In $D=3,$ $U$ is the convex core of $X$; see footnote \ref{convcore}.}  so there is a classical solution in which a brane is placed on $U$.   If this brane is of the appropriate type, the integer that characterizes
the CFT (or one of those integers in the case of a CFT that depends on multiple integers) will jump from $N$ to $N+1$ in crossing $U$.   Thus a path
integral on $X$ in the presence of this brane generates a connected correlation function between partition functions with different values of $N$ on the 
same manifold $M$:
\be\label{conparo}\la Z_N(M) Z_{N+1}(M)\ra_c\not=0. \ee
So the pseudorandom matrices or CFT's for different values of $N$ are not truly independent.   However, they are nearly independent, in the sense that
\be\label{conpatwo} \left|\la Z_N(M) Z_{N+1}(M)\ra_c\right|   \ll \sqrt{ \left|\la Z_N(M) Z_{N}(M)\ra_c\right|   \left|\la Z_{N+1}(M) Z_{N+1}(M)\ra_c\right|  }, \ee
because the brane action contributes to the left hand side and not to the right hand side.   Hopefully this is enough to justify the explanation of CADB amplitudes based on
pseudorandomness.  Still, 
the existence of correlations between the theories for different values of $N$ seems to mean that the Hamiltonians and CFT's of different $N$ are not
truly independent pseudorandom objects.   Perhaps corrections involving branes  lead to exponentially small departures from what one would expect
based on independent draws from a random ensemble.

One may summarize what we have said as a proposal that ensemble averaging in gravity is averaging over nearby values of $N$ to produce smooth approximations
that can be computed by the gravitational path integral with known methods.   That obviously leaves the question of what kind of path integral or what new method is needed,
at least in principle, to describe the non-smooth contributions. There have been several papers aiming to find simple models of how this
can work  \cite{SSSrecent,Baur}.

\vskip1cm
 \noindent {\it {Acknowledgements}}   Research of JMS supported in part by FNR Grant O20/14766753.
 Research of EW supported in part by NSF Grant PHY-1911298.     
 JMS thanks Ian Agol, Martin Bridgeman and Ken Bromberg for useful remarks and references.
   EW thanks L. Takhtajan and Jinsung Park for explanations about the renormalized volume, and K. Krasnov for  helpful
 advice. 

\appendix

\newcommand{\DD}{{\mathbb D}}
\newcommand{\HH}{{\mathbb H}}
\newcommand{\N}{{\mathbb N}}
\newcommand{\cT}{{\mathcal T}}
\newcommand{\CP}{{\mathbb{CP}}}
\newcommand{\R}{{\mathbb R}}
\newcommand{\cS}{{\mathcal S}}
\newcommand{\ext}{{\mathrm{Ext}}}
\renewcommand{\mod}{\mathrm{Mod}}

\section{Mathematical details}
\label{details}

This appendix contains detailed proofs of Proposition 3.1 and of Theorem 3.2, which were already explained heuristically in Section 3, as well as Theorem \ref{tm:3g-6}, which slightly improves on Proposition 3.1. In addition, we provide two results which help better understand the properties of the renormalized volume under pinching of compressible curves.\footnote{Those to results were not contained in the first arxiv version.}
\begin{itemize}
\item Theorem \ref{tm:compare2}, which shows that the renormalized volume associated to the hyperbolic metric at infinity, denoted by $V_R$ here, is within a bounded constant, depending only on the topology of the boundary, from the renormalized volume associated to the Thurston metric at infinity, denoted here by $V'_R$. Note that $V'_R$ is equal to the volume of the convex core minus one fourth of the length of the measured bending lamination on its boundary.
\item Theorem \ref{tm:asymptotics}, which gives the first term in the asymptotic development of the renormalized volume when a compressible curve is pinched. 
\end{itemize}

The arguments are quite elementary but based on recent developments in the study of the renormalized volume of hyperbolic manifolds, which has recently been a focus of some interest among hyperbolic geometers. The renormalized volume was found to have close relations to topics of interest in geometry, and to be a useful or promising tool for well-established mathematical questions. We list here some of those developments.

A first motivation stemmed from the identification in \cite{Krasnov:2000zq,Krasnov:2001cu} between the renormalized volume of (some) hyperbolic manifolds and the Liouville functional studied for instance in \cite{TZ-schottky,takhtajan-teo}.

Another connection was made in \cite{volume,review,compare} between the renormalized volume and the volume of the convex core of convex cocompact hyperbolic manifolds. This relationship was then used for instance in  \cite{kojima-mcshane}, to relate the entropy of pseudo-Anosov diffeomorphisms to their hyperbolic volume of their mapping torus, in  \cite{loustau:minimal,cp} to study the symplectic structure on moduli spaces of quasi-Fuchsian manifolds, and in \cite{brock-bromberg:inflexibility2} to study the metric geometry of moduli space (such as its inradius or systole). In addition, geometric properties of the renormalized volume were investigated, such as its convexity at the critical points \cite{moroianu-convexity,vargas-pallete-local} and continuity under geometric limits \cite{vargas-pallete-continuity,pallete:additive}.

It was proved in \cite{ciobotaru-moroianu} that the renormalized volume of almost-Fuchsian manifolds (quasi-Fuchsian manifolds containing a closed  minimal surface with principal curvatures less than 1) is non-negative, a result that was then extended to quasi-Fuchsian hyperbolic manifolds \cite{bridgeman-bromberg-pallete} and more generally convex co-compact manifolds with incompressible boundary \cite{bridgeman-brock-bromberg}. In contrast, the renormalized volume of hyperbolic manifolds with compressible boundary can be negative -- this remark, which plays a key role here, already appeared e.g. in \cite{bridgeman-canary:renormalized,pallete:schottky}.

A particularly active current direction of research concerns the Weil-Petersson gradient flow of the renormalized volume \cite{bridgeman-brock-bromberg,bridgeman-brock-bromberg:gradient,bridgeman-bromberg-pallete}, considered as a tool to understand the structure of 3-dimensional hyperbolic manifolds.

The properties of the renormalized volume for Schottky manifolds are considered specifically in \cite{pallete:schottky}, in view of the comparison of volumes of quasi-Fuchsian and Schottky manifolds with a given conformal boundary.





Since this section is geared towards more mathematical arguments, we use a slightly different notation than in the previous sections. We will always consider the hyperbolic space $\HH^3$ of constant sectional curvature $-1$, which is equivalent to setting $\ell=1$.

\subsection{Convex co-compact hyperbolic manifolds}
\label{ssc:convex}

Before entering the arguments, it is useful to clarify some definitions.

We consider here a complete hyperbolic structure $g$ on an oriented 3-dimensional manifold $X$, which will always be the interior of a compact manifold with boundary. Such a hyperbolic structure is the quotient of the 3-dimensional hyperbolic space $\HH^3$ by $\rho(\pi_1X)$, where $\rho:\pi_1X\to PSL(2,\C)$ is the holonomy representation of $(X,g)$.

The boundary at infinity of $\HH^3$ can be identified with $\CP^1$, and it is tempting to consider $\rho$ as an action of $\pi_1X$ on $\partial_\infty \HH^3=\CP^1$. However, this action on $\CP^1$ is not properly discontinuous, so that one cannot take the quotient. To avoid this issue, one needs to ``remove'' from $\CP^1$ the {\em limit set} $\Lambda_\rho$ of $\rho$, defined as the intersection with $\partial_\infty\HH^3$ of the closure in $\HH^3\cup \partial_\infty\HH^3$ of the orbit $\rho(\pi_1X)(x)$ of any point $x\in \HH^3$. It turns out (see Section \ref{ssc:infinity}) that $\rho$ acts properly discontinuously on $\CP^1\setminus \Lambda_\rho$.

We say that the subgroup $\rho(\pi_1X)$ is {\em elementary} if its limit set has at most 2 points.

A hyperbolic manifold is {\em convex co-compact} if
\begin{itemize}
\item its holonomy representation acts co-compactly (i.e. with compact quotient) on a convex domain in $\HH^3$, and
\item the image of its fundamental group in $PSL(2,\C)$ is non-elementary.
\end{itemize}
In other terms, it is the quotient of $\HH^3$ by a non-elementary subgroup of $PSL(2,\C)$, which contains a non-empty compact geodesically convex subset.\footnote{The term ``convex co-compact'' is perhaps a bit misleading. What can properly be called convex co-compact is rather the holonomy representation $\rho:\pi_1X\to PSL(2,\C)$, since it acts on a convex subset (the convex hull of $\Lambda_\rho$ in $\HH^3$) with compact quotient.}

\begin{definition}
  Let $X$ be a hyperbolic manifold. A subset $K\subset X$ is {\em geodesically convex} if any geodesic segment of $X$ with endpoints in $K$ is contained in $K$.
\end{definition}

Note that geodesic convexity is a strong property, for instance a small ball in a complete hyperbolic manifold with non-trivial fundamental group is not geodesically convex. In fact,  if $K$ is a non-empty geodesically convex subset of $X$, then the inclusion of $K$ in $X$ is a homotopy equivalence, see Section \ref{ssc:A-convex}. 

Here we will use the equivalent definition of a convex co-compact manifold, which is more convenient for the proofs.

\begin{definition} \label{df:ccv}
    A convex co-compact hyperbolic structure on a manifold $X$ is a complete hyperbolic structures for which $X$ contains a non-empty, {\em compact}, geodesically convex subset $K$, and such that $X$ is not topologically a ball or a solid torus. 
\end{definition}

We exclude from the definition the case where $X$ is a ball or a torus, which correspond to elementary group actions. Therefore, a complete hyperbolic manifold which contains a compact, non-empty, geodesically convex subset can be either $\HH^3$, a solid torus, or a convex co-compact manifold as defined here.

For a hyperbolic manifold, being convex co-compact is equivalent to being conformally compact, that is, to having a Riemannian metric that can be written as $g=\bar g/\rho^2$, where $\bar g$ is a Riemannian metric which is smooth on $\bar X$ up to the boundary, while $\rho:\bar X\to \R_{\geq 0}$ is a smooth function that vanishes on the boundary, with $\| d\rho\|_{\bar g}=1$ on $\partial X$. Indeed:
\begin{itemize}
\item If $(X,g)$ is conformally compact, a direct computation shows that the surfaces
  $$ S_\epsilon = \{ x\in X~|~\rho(x)=\epsilon\} $$
  are locally convex for $\epsilon>0$ small enough. This simplies that the (compact) set
  $$ X_\epsilon = \{ x\in X~|~\rho(x)\geq \epsilon\} $$
  is geodesically convex for $\epsilon>0$ small enough. Indeed, a geodesic segment with endpoints in $X_\epsilon$ must stay in $X_\epsilon$ since otherwise, at the point where $\rho$ achieves its minimum $\rho_{min}$, it would need to be tangent to $S_{\rho_{min}}$ on the convex side, a contradiction.
\item Conversely, if $X$ is convex co-compact, it contains a geodesically convex subset $K$ which is compact. Replacing if necessary $K$ by an $r$-neighborhood and smoothing its boundary, we can assume that $K$ has smooth boundary. If $r:X\to \R_{\geq 0}$ is defined as the distance to $K$, the function $\rho=e^{-r}$ is a defining function and $(X,g)$ is conformally compact.
  
\end{itemize}
\subsection{The complex structure at infinity}
\label{ssc:infinity}

The set $\Omega_\rho=\partial_\infty \HH^3\setminus \Lambda_\rho$ is called the {\em discontinuity domain} of $\rho$. Since $\rho$ acts by hyperbolic isometries on $\HH^3$, it acts by complex transformations on $\Omega_\rho$, and it can be proved that this action is properly discontinuous, see \cite[Sections 8.1 and 8.2]{thurston-notes}. The quotient $\partial_\infty X = \Omega_\rho/\rho(\pi_1X)$ is therefore equipped with a complex structures, which will be denoted by $c$ here.

By a series of results of Ahlfors, Bers, Kra, Marden, Maskit, Sullivan and Thurston, a convex co-compact hyperbolic metric on $X$ is  uniquely determined by $c$, considered as a point in the Teichm\"uller space $\cT_{\partial X}$ of $\partial X$. If $X$ has incompressible boundary, then this map from $\cT_{\partial X}$ to the moduli space of convex co-compact hyperbolic metrics is one-to-one. However, if $X$ has compressible boundary, two points in $\cT_{\partial X}$ can determine the same convex co-compact structure on $X$. This happens when one is the image of the other by an (isotopy class of) homeomorphism which extends over the manifold -- for instance, a homeomorphism corresponding to a Dehn twist along a compressible simple closed curve (a curve in $\partial X$ which bounds a disk in $X$).

As a consequence, the space of convex co-compact hyperbolic structures on $X$ is parameterized in $\cT_{\partial X}/\Gamma$, where $\Gamma$ is the group of isotopy classes of $X\cup \partial X$ which are homotopic to the identity, see \cite[Section 3]{canary:pushing}. 

Note that $\partial_\infty X = \Omega_\rho/\rho(\pi_1X)$ is equipped with more than a complex structure: each point has a neighborhood that can be identified with a domain in $\CP^1$, and this identification is well-defined up to elements of $PSL(2,\C)$.\footnote{This can be formalized as the existence on $\partial_\infty X$ of a {\em complex projective structure}, but this point of view will not be necessary here.} The existence of those local charts in $\CP^1$ will be relevant in Section \ref{ssc:schwarzian}.

\subsection{Measured laminations on surfaces}
\label{ssc:A-laminations}

Measured laminations play a significant role in the arguments below, so we provide here a brief introduction to their definition and key properties. Measured laminations occur in the next section when describing the geometric structure on the boundary of the convex core of a convex co-compact hyperbolic manifold.

Let $M$ be a closed surface, equipped with a hyperbolic metric $h$ -- one can consider more generally complete hyperbolic surfaces of finite area (or even, with some adaptations, of infinite volume). A geodesic lamination is then defined as a closed subset of $M$ which is a disjoint union of complete geodesics. A {\em measured} geodesic lamination is a geodesic lamination equipped with a transverse measure, that is, each transverse curve is equipped with a measure, and this measure does not change when the curve is moved while keeping the intersection with the lamination transverse, see \cite[Section 10]{thurston:minimal}.

The simplest case of geodesic laminations is the disjoint union of a finite family $c_1,\cdots, c_n$ of disjoint closed geodesics. A transverse measure is then defined simply as a positive weight $w_i$ associated to each geodesic $c_i$, yielding a {\em weighted multicurve}. There is a natural topology on the space of weighted multicurves, where two weighted multicurves are close if they have a similar intersection with any transverse closed curve on $M$. The space of measured geodesic laminations can be defined as the completion of the space of weighted multicurves for this topology. It follows that weighted multicurves (and in fact even weighted closed geodesics) are dense in the space of measured geodesic laminations. 

However, ``generic'' geodesic laminations can be more complex than weighted multicurves. While their support has Hausdorff dimension equal to 1 \cite{birman-series}, a short transverse segment might have an uncountable set of intersections, none of which has an atomic weight. 

Given a hyperbolic metric $m$ on $M$ and a measured lamination $l$, Thurston (see e.g. \cite[Section 2]{thurston:hyperbolicII}) proved that one can define the {\em hyperbolic length} of $l$ for $m$, denoted here by $L_m(l)$. It is defined as the limit of the hyperbolic lengths of a sequence of weighted multicurves $((c^n_i,w^n_i)_{i=1, \cdots, k_n}$ converging to $l$, where the hyperbolic length of a weighted multicurve $(w_i, c_i)_{i=1,\cdots, k}$ is defined as
$$ L_m((w_i, c_i)_{i=1,\cdots, k})=\sum_{i=1}^k w_iL_m(c_i)~. $$

The notion of measured lamination does not in fact require a hyperbolic metric, and can be defined ``topologically''. Any measured lamination on a closed surface $M$ can then be realized uniquely as a geodesic measured lamination for any hyperbolic metric $m$ on $M$, much like any closed curve and be realized uniquely as a geodesic.

\subsection{The convex core and the geometry of its boundary}

The arguments presented here rely heavily on the relations between the renormalized volume and the volume of the {\em convex core} of convex co-compact hyperbolic manifolds, as well as on the geometry of the boundary of this convex core.

It follows from the definition that, in a hyperbolic manifold, the intersection of two closed, geodesically convex subsets is still geodesically convex. In addition it can be shown (see Section \ref{ssc:A-convex}) that, if the manifold has non-trivial topology, then the intersection is non-empty. As a consequence, any convex co-compact hyperbolic manifold contains a smallest non-empty, closed, geodesically convex subset, called its convex core, which is compact. We denote it here by $C(X)$.

The convex core has finite volume, denoted here by $V_C(X)$. This volume is closely related to the renormalized volume -- when the boundary is incompressible, there is a bound on the difference between the two, which only depends on the topology of the boundary, see Section \ref{ssc:volumes} below.

The geometry of the boundary of the convex core was analysed by Thurston (see \cite[Section 8.5]{thurston-notes}). Since the convex core is a minimal geodesically convex subset, its boundary cannot have extremal points, so it is a locally convex {\em pleated surface} -- it is the union of a finite set of totally geodesic ideal triangles intersecting along their boundary, and as a consequence the induced metric on the boundary is hyperbolic (of constant curvature $-1$). We will denote this hyperbolic metric on $\partial C(X)$ by $m$. (Note that the ideal triangles mentioned here generally ``wrap'' around the boundary of $C(X)$, so they appear more clearly in the universal cover of $X$, where their vertices can be identified as points of the limit set of $X$.)

However the pleating locus -- the set of points which do not have a neighborhood which is totally geodesic -- can be quite complicated. It is a geodesic lamination, as seen in Section \ref{ssc:A-laminations}. Moreover the pleating locus is equipped with a transverse measure, which records the amount of ``bending'' along each geodesic. In the simplest cases where the bending locus is a disjoint union of closed geodesics, each of the closed geodesics is equipped with a positive weight, which records the exterior angle of the boundary of the convex core along this ``edge''.

In generic examples, no geodesic in the pleating locus has an atomic weight. However simple closed geodesics equipped with a positive weight are dense, in a suitable topology (see Section \ref{ssc:A-laminations}) in the space of measured laminations. Heuristically, it is therefore often sufficient to think of (arbitrarily long and complicated) weighted simple closed curves, rather than of generic measured laminations.

Bonahon and Otal \cite{BO} gave a complete description of the measured laminations that can arise in this manner on  the boundary $\partial X$. The conditions are that each closed curve has weight at most $\pi$, that for any essential annulus $A$ or M\"obius band in $X$ (with boundary in $\partial X$), $\partial A$ has positive intersection with $l$, and that for any essential disk $D$ (with boundary in $\partial X$), $\partial D$ has intersection larger than $2\pi$ with $l$.

\subsection{Relations between the boundary of the convex core and the boundary at infinity}
\label{ssc:boundary}

Given a convex co-compact hyperbolic manifold $X$, let $N\partial C(X)$ be the unit normal bundle of $\partial C(X)$, that is, the set of outwards-pointing unit vectors normal to a support plane of $C(X)$ at a boundary point. For each $n\in N\partial C(X)$, let $g_n:\R_{\geq 0}\to X$ be the geodesic ray determined by the initial vector $n$. It can be proved (see Section \ref{ssc:A-convex}) that the images of the $g_n$ are all disjoint (and do not intersect $C(X)$ except at their origin). As a consequence, the map sending $n$ to the endpoint at infinity of $g_n$ defines a homeomorphism between $N\partial C(X)$ and $\partial_\infty X$.

Moreover, when $X$ is not Fuchsian\footnote{A Fuchsian manifold is the quotient of $\HH^3$ by a surface group acting properly discontinously and cocompactly on a totally geodesic plane. So the convex core of a Fuchsian manifold is a totally geodesic surface, while for any other convex co-compact manifold it is a 3-dimensional domain with positive volume.   See also Section 2.3, and specifically footnote 9.}, this homeomorphism can be deformed, by ``smoothing out'' the bending locus, to a (non-canonical) homeomorphism between $\partial C(X)$ and $\partial_\infty X$. (If $X$ is Fuchsian, the convex core is a smooth surface, and no smoothing is necessary.) We will often use this homeomorphism implicitly below, and identify the two surfaces.  (If $X$ is Fuchsian, then $\partial_\infty X$ is homeomorphic to the disjoint union of two copies of $C(X)$, which in this case is a totally geodesic surface, so that $N\partial C(X)$ is a double cover of $C(X)$.)  

The boundary at infinity $\partial_\infty X$ is equipped with a conformal structure, and therefore with a hyperbolic metric provided by the Poincar\'e-Riemann uniformization theorem -- it is called the Poincar\'e metric, and denoted here by $h$. It can be compared to the induced metric $m$ on the boundary of the convex core.
\begin{itemize}
\item A simple closed curve is short for $h$ if and only if it is short in $m$. Specifically, for all $\epsilon>0$ small enough, there exists $\epsilon'>0$ such that if a simple closed curve $\gamma$ has length less than $\epsilon'$ for $h$, then it has length less than $\epsilon$ for $m$, and conversely (with a different value of $\epsilon'$), see \cite{sugawa:uniform,canary:conformal}.
\item Curves which do not enter the ``thin'' part of $\partial X$ for either $m$ or $l$ (the subset composed of points where the injectivity radius is smaller than a fixed constant $\epsilon_0$ -- for $\epsilon_0$ small enough, this thin part is the disjoint union of long and ``thin'' annuli, each associated to a simple closed geodesic of length less than $\epsilon_0$) have lengths for $m$ and for $l$ which are comparable, up to bounded multiplicative constants, see \cite{bridgeman-canary}.
\end{itemize}

\subsection{The maximal set of contractible curves in the boundary}
\label{ssc:maximal}

The proof of Proposition 3.1, presented in Section 3, is based on a surgery that can be applied on convex co-compact hyperbolic structures when the length of a compressible curve (in the Poincar\'e metric at infinity) is sufficiently short. Similar ideas will be used again below in Section \ref{ssc:convexcore} when considering the limit of the volume of the convex core when a finite set of disjoint closed compressible curves in the boundary is pinched.

We consider a convex co-compact hyperbolic manifold $X$, a connected component $M$ of $\partial C(X)$, and a simple closed curve $\gamma$ in $M$ which is compressible, i.e., bounds a disk in $X$. We first deform $X$ so that the geodesic representative of $\gamma$ is pinched to have length less than $\epsilon$ (for a value of $\epsilon$ that will be made more precise below) for the induced metric $m$ on $\partial C(X)$. Then the geodesic representative $\gamma_0$ of $\gamma$ in $(M,m)$ is the ``center'' of a long collar, of width $w(\epsilon)$ arbitrary large if $\epsilon$ is small.

Let $\Delta$ be a complete geodesic in $C(X)$ intersecting the disk $D$ in $C(X)$ bounded by $\gamma_0$, and let $P$ be a totally geodesic plane in $X$ orthogonal to $\Delta$ and intersecting $\gamma_0$. (Note that $P$ is not entirely determined by those conditions, since $\gamma_0$ is not necessarily contained in a plane orthogonal to $\Delta$.)

The following elementary lemma is used in Section 3. 

\begin{lemma} \label{lm:embedded}
  $P$ is embedded in $X$, that is, it has no self-intersection.
\end{lemma}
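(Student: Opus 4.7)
The plan is to pass to the universal cover $\widetilde X\cong\HH^3$ and to show that any two distinct lifts of $P$ are disjoint, by combining the compressibility of $\gamma$ with the collar lemma applied to the short curve $\gamma_0$ on the pleated surface $(M,m)$.

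First I would fix a lift $\widetilde P\subset\HH^3$, which is a totally geodesic copy of $\HH^2$. Because $\gamma$ bounds a disk in $X$, it is null-homotopic there, so every lift $\widetilde{\gamma_0}$ of $\gamma_0$ is a closed loop in $\HH^3$ (rather than a geodesic line), and bounds a lift $\widetilde D$ of the compressing disk $D$. By construction there is a point $x_0\in\widetilde P\cap\widetilde{\gamma_0}$, and $\widetilde P$ is orthogonal to a corresponding lift $\widetilde\Delta$ of $\Delta$, which meets $\widetilde D$. Next I would invoke the collar lemma for $(M,m)$: since $\gamma_0$ has length less than $\epsilon$, it admits an embedded annular collar of intrinsic width $w(\epsilon)\to+\infty$ as $\epsilon\to 0$. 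Lifted to the component $\widetilde M$ of $\partial\widetilde{C(X)}$ containing $x_0$, this means that distinct lifts of $\gamma_0$ in $\widetilde M$ are separated by pleated-intrinsic distance at least $2w(\epsilon)$. Using the standard comparison between the intrinsic metric on the pleated surface $\widetilde M$ and the ambient metric of $\HH^3$, one then obtains a corresponding lower bound on the $\HH^3$-distance between distinct lifts of $\gamma_0$.

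Suppose, for contradiction, that $P$ is not embedded. Then there exists $g\in\pi_1(X)\setminus\{1\}$ with $g\widetilde P\neq\widetilde P$ and $g\widetilde P\cap\widetilde P\neq\emptyset$; the two planes must meet along a geodesic line $\ell\subset\HH^3$. The translated plane $g\widetilde P$ passes through $gx_0\in g\widetilde{\gamma_0}$ and is orthogonal to $g\widetilde\Delta$. The geometric heart of the argument is to show that, because $\widetilde P$ is constructed as the orthogonal plane to a geodesic crossing $\widetilde D$ and is forced to hit $\widetilde{\gamma_0}$, the trace $\widetilde P\cap\widetilde M$ stays within a hyperbolic neighborhood of $\widetilde{\gamma_0}$ whose size is controlled only by the bending of $\widetilde M$ across $\widetilde{\gamma_0}$, itself bounded via the Bonahon--Otal constraints on the pleating lamination. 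The analogous statement holds for $g\widetilde P\cap\widetilde M$ near $g\widetilde{\gamma_0}$. An intersection of $\widetilde P$ and $g\widetilde P$ would then force $\widetilde{\gamma_0}$ and $g\widetilde{\gamma_0}$ to come within a uniformly bounded distance in $\widetilde M$, contradicting the collar estimate as soon as $\epsilon$ is sufficiently small.

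The main obstacle I anticipate is making the proximity statement of the previous paragraph quantitative and uniform in the bending data: one needs a controlled bound on how far the totally geodesic plane $\widetilde P$ can drift away from $\widetilde{\gamma_0}$ along $\widetilde M$, depending only on the (Bonahon--Otal bounded) transverse pleating near $\widetilde{\gamma_0}$. Once this geometric comparison between the plane and the nearby pleated surface is pinned down, the collar lemma provides the quantitative contradiction, and simultaneously determines the threshold value of $\epsilon$ referred to in the construction.
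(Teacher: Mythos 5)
There is a genuine gap here, and in fact two distinct problems. The step you yourself flag as the ``main obstacle'' --- that the trace $\widetilde P\cap \widetilde M$ of the plane on the pleated boundary stays within a neighborhood of $\widetilde{\gamma_0}$ whose size is controlled by the Bonahon--Otal constraints --- is not a technicality to be filled in later; in your formulation it \emph{is} the lemma, and it is left unproved. The Bonahon--Otal result constrains the transverse measure of the bending lamination (for instance, a compressing curve must meet bending at least $2\pi$); it gives no control over how a totally geodesic plane, which extends to infinity in every direction, travels relative to the pleated surface away from $\gamma_0$, so it is not clear the proximity claim can be established by this route at all. Separately, the reduction preceding that step is itself flawed: you pass from the collar lemma, which gives an \emph{intrinsic} separation $2w(\epsilon)$ between distinct lifts of $\gamma_0$ in $\widetilde M$, to a lower bound on their \emph{ambient} $\HH^3$-distance by a ``standard comparison.'' That comparison goes the wrong way. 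For an isometrically immersed (pleated) surface, intrinsic distance dominates extrinsic distance, never the reverse; and precisely in the pinching regime the boundary of the convex core folds into a long thin tube, so points that are intrinsically very far apart can be extrinsically very close. A lower intrinsic bound therefore does not yield the extrinsic bound your contradiction needs.

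The paper's proof avoids both difficulties with a short, direct argument that never tracks the plane's trace on the pleated surface. Suppose two distinct lifts $P_1,P_2$ of $P$ meet at a point $x$. Pick $x_1\in\gamma_1\cap P_1$ and $x_2\in\gamma_2\cap P_2$ on the corresponding lifts of $\gamma_0$, let $\alpha_i$ be the geodesic segment from $x_i$ to $x$ (lying in $P_i$), and let $\beta$ be the geodesic segment from $x_1$ to $x_2$. Geodesic convexity of $C(X)$ forces $\beta$ to lie in $\widetilde{C(X)}$, hence to run through the long thin collar regions around the lifts of $\gamma_0$; consequently $\beta$ has length at least $2w(\epsilon)$ and is almost orthogonal to $P_1$ and $P_2$, hence to $\alpha_1$ and $\alpha_2$. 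The hyperbolic law of cosines for the triangle $x_1x_2x$,
\begin{equation*}
\cos(\theta) \;=\; -\cos(\theta_1)\cos(\theta_2)+\cosh(L(\beta))\sin(\theta_1)\sin(\theta_2)~,
\end{equation*}
then has right-hand side exceeding $1$ once $\epsilon$ is small, since $\theta_1,\theta_2$ are near $\pi/2$ and $\cosh(L(\beta))$ is large --- a contradiction. The plane enters only through the two points $x_1,x_2$ and the near-orthogonality of $\beta$ to it; no global comparison between the plane and the pleated surface, and no intrinsic-to-extrinsic distance conversion, is required. If you want to salvage your outline, this triangle computation is exactly the quantitative ingredient to substitute for your unproved proximity step.
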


\begin{proof}
    Assume the opposite, it would mean that there are two distinct lifts $P_1$ and $P_2$ of $P$ to $\tilde X\simeq\HH^3$ which intersect at a point $x$. Let $\gamma_1$ and $\gamma_2$ be the corresponding lifts of $\gamma_0$, and let $x_1\in \gamma_1\cap P_1$, $x_2\in \gamma_2\cap P_2$. Let $\alpha_1$ be the geodesic segment connecting $x_1$ to $x$ (which is in $P_1$) and let $\alpha_2$ be the geodesic segment connecting $x_2$ to $x$. Finally, let $\beta$ be the geodesic segment connecting $x_1$ to $x_2$.  

  Since $C(X)$ is geodesically convex, the segment $\beta$ is contained in its universal cover $\widetilde{C(X)}$, and is therefore almost orthogonal (for $\epsilon$ small enough) to both $P_1$ and $P_2$, and also to $\alpha_1$ and $\alpha_2$. So if we call $\theta_1$ (resp. $\theta_2$) the angle between $\beta$ and $\alpha_1$ (resp. $\alpha_2$) then both are close to $\pi/2$ as $\epsilon\to 0$. Moreover, $\beta$ has length at least $2w(\epsilon)$, which goes to infinity as $\epsilon\to 0$. However, a standard hyperbolic triangle formula, applied to the triangle with edges $\alpha_1, \alpha_2$ and $\beta$, ensures that
  $$ \cos(\theta) = - \cos(\theta_1)\cos(\theta_2)+\cosh(L(\beta))\sin(\theta_1)\sin(\theta_2)~, $$
  where $\theta$ is the angle between $\alpha_1$ and $\alpha_2$ at $x$.
  
  For $\epsilon$ small enough, this cannot hold since the two sine terms on the right are close to $1$ and $\cosh(L(\beta))$ is large. 
\end{proof}

\begin{figure}[h]
  \centering
 \includegraphics[width = 5.0cm]{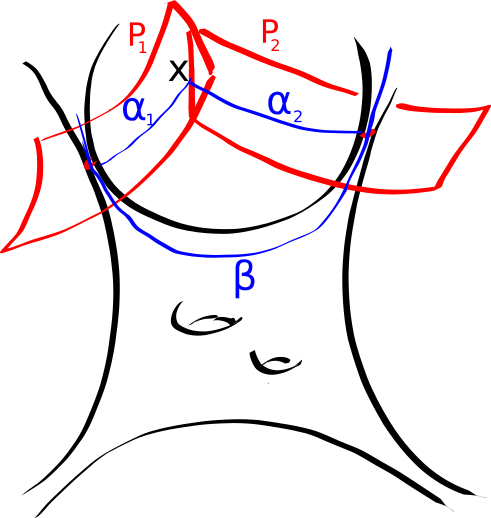}
 \caption{Planes almost orthogonal to thin tubes cannot intersect.}
 \end{figure}

Note that the same argument shows that if we consider two totally geodesic planes $P$ and $P'$ corresponding to two short contractible curves $\gamma$ and $\gamma'$, then $P$ and $P'$ are disjoint -- this is used in Section \ref{ssc:convexcore}.

We are now equipped to prove the following small improvement of Proposition 3.1. The proof repeats the proof of Proposition 3.1 given in Section 3 with more attention given to surfaces of genus $2$.

\begin{theorem} \label{tm:3g-6}
  Let $X$ be a convex co-compact hyperbolic manifold, and let $M$ be a boundary component of $X$. If $M$ has genus $2$ and contains at least one compressible curve, or if $M$ has genus $g\geq 3$ and contains at least $3g-6$ disjoint, non-homotopic compressible curves, then $X$ is a Schottky manifold and $\partial X=M$.
\end{theorem}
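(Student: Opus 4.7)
The plan is to prove Theorem \ref{tm:3g-6} by strong induction on $g$, following the surgery scheme used for Proposition \ref{pr:3g-3} in Section 3, but refining both the base case and the bookkeeping so that $3g-6$ compressible curves suffice (and only one in genus $2$).

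First I would dispose of the base case $g=2$ directly, relying on the $X_{c,d}$ classification of hyperbolic three-manifolds with a genus-$1$ conformal boundary component. Given the hypothesized compressible curve $\gamma\subset M$, deform the convex co-compact structure so that $\gamma$ becomes very short on $\partial C(X)$; Lemma \ref{lm:embedded} then produces an embedded totally geodesic disk $P\subset X$ with ideal boundary $\gamma$. Cutting along $P$ and gluing a half copy of $\HH^3$ on each side yields a hyperbolic manifold $X_1$ whose conformal boundary is $M_1$, obtained from $M$ by cutting $\gamma$ and capping with disks, together with the components of $\partial X$ other than $M$ (unaffected). In the non-separating case $M_1$ is a torus; in the separating case it is a disjoint union of two tori. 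The $X_{c,d}$ classification forces every connected component of $X_1$ containing a torus to be a solid torus whose sole conformal boundary is that torus, so $\partial X_1=M_1$ and no other component of $\partial X$ is present. Reversing the surgery exhibits $X$ as a genus-$2$ handlebody with $\partial X = M$.

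For the inductive step at $g\geq 3$, pick any $\gamma$ from the given collection of $3g-6$ compressible curves, perform the analogous surgery, and produce $X_1$ with boundary $M_1$. The remaining $3g-7$ curves are still compressible in $X_1$; as in Proposition \ref{pr:3g-3}, at most two of them fail to remain independent in $M_1$ (becoming nullhomotopic or pairwise homotopic), so $M_1$ carries at least $3g-9$ independent compressible curves. If $M_1$ is connected of genus $g-1\geq 3$, this is $\geq 3(g-1)-6$ and the inductive hypothesis applies. If $M_1$ splits as $M_1'\sqcup M_1''$ with $g_1'+g_1''=g$ and both $g_i'\geq 2$, the upper bound $n_1^{(i)}\leq 3g_i'-3$ combined with $n_1'+n_1''\geq 3g-9$ forces $n_1'\geq 3g_1'-6$ and $n_1''\geq 3g_1''-6$; the inductive hypothesis applies to each component of genus $\geq 3$, genus-$2$ components are handled by the (strengthened) base case, and any torus components by the $X_{c,d}$ classification. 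In every case each component of $X_1$ is a Schottky manifold or a solid torus with connected boundary, so $\partial X_1=M_1$ and reversing the surgery recovers $X$ as a handlebody with $\partial X=M$.

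The main obstacle is the counting at the bottom of the induction, especially at $g=3$, where $3g-6=3$ and the crude bound $3g-9=0$ does not supply the one compressible curve needed to apply the base case on a genus-$2$ component of $M_1$. To overcome this I would analyze more carefully the mechanisms by which survivors fail to remain independent: since the original collection is pairwise non-homotopic in $M$, a survivor $\gamma'$ can become nullhomotopic in $M_1$ only through a pair-of-pants configuration cobounded by $\gamma'$ and two copies of $\gamma$ in $M\setminus\gamma$, and two disjoint non-homotopic survivors cannot simultaneously realize such configurations relative to the same $\gamma$. Choosing $\gamma$ advantageously within the $3$-curve collection (or ruling out the degenerate configurations by a direct topological check on a genus-$3$ surface) guarantees that at least one compressible curve remains independent on $M_1$, enabling the genus-$2$ base case. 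A parallel subtlety in the separating $g=3$ subcase is handled by noting that a torus component of $M_1$ supports only one isotopy class of essential simple closed curve in a disjoint family, so surplus survivors are forced onto the genus-$2$ component, where the base case again applies.
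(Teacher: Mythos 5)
Your strategy is the same as the paper's: induction on genus via the geodesic-plane surgery of Lemma \ref{lm:embedded}, a strengthened genus-2 base case resting on the $X_{c,d}$ classification, and a curve count of the form $3g-9$ after each cut. Your base case is correct, and your handling of the genus-3 bottleneck matches the paper's: in the non-separating case at most one survivor can cobound a pair of pants with the two copies of $\gamma$, so at most one becomes nullhomotopic and the genus-2 surface retains a compressible curve; your torus-capacity argument for the separating $1+2$ split is also sound. However, there is a genuine gap in your inductive step for separating curves that split off a genus-2 piece when $g\geq 4$. When $M_1=M_1'\sqcup M_1''$ with $g_1'=2$, your counting inequality gives only $n_1'\geq 3g_1'-6=0$, i.e.\ it supplies \emph{no} compressible curve on the genus-2 component; yet you then assert that ``genus-2 components are handled by the (strengthened) base case,'' which requires at least one such curve. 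Your ``main obstacle'' paragraph localizes the difficulty at $g=3$, and the remedies described there do not apply to these cases --- for instance $g=5$ split into $2+3$, or $g=4$ split into $2+2$.

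The missing step is a pigeonhole contradiction, which is exactly how the paper closes these cases. Suppose the genus-2 side contained no compressible curve from the surviving family. Then all $3g-7$ survivors lie on the genus-$g''$ side; since in a separating surgery no survivor becomes nullhomotopic (that would force it to be homotopic to $\gamma$ in $M$) and at most one pair on that side becomes homotopic, the capped surface of genus $g''=g-2$ would carry at least $3g-8=3g''-2$ disjoint, pairwise non-homotopic essential curves, exceeding the maximum $3g''-3$ possible on a closed genus-$g''$ surface --- a contradiction. Hence the genus-2 component necessarily retains a compressible curve and your base case applies to it, while the complementary side then carries at least $3g''-6$ independent compressible curves so the inductive hypothesis applies there; the $2+2$ split at $g=4$ is settled by the same argument run on each side. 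With this one argument added, your induction closes in all cases.
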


\begin{proof}
  The first step is to analyse more carefully what happens to a family of independent curves when a surface $M$ is cut along one of them, say $c$. Different cases can be distinguished, see Figure \ref{GenusThree}.
  \begin{enumerate}
  \item $c$ separates a genus $1$ surface $M_1$ from a genus $g'$ surface $M_{g'}, g'\geq 2$. Then at most one pair of curves on $M$ can become homotopic on $M_{g'}$.
  \item $c$ separates a surface $M_{g'}$ from a surface $M_{g''}$, with $g'\geq 2, g''\geq 2$. Then at most one pair of non-homotopic curves {\em on each side of $c$} can become homotopic.
  \item $c$ is non-separating. Then at most one curve can become homotopically trivial, while at most one pair of non-homotopic curves can be homotopic in the complement of $M$.
  \end{enumerate}

  The proof proceeds by induction on the genus of $M$, as for the proof of Proposition 3.1. The statement is already known, from Section 3, when $M$ has genus $1$ or genus $2$. We assume that it is true for genus at most $g-1$, and consider a boundary component $M$ of genus $g$. We assume that $M$ contains $3g-6$ independent compressible curves, and choose one of those curves, say $c$. We now consider different cases.
  \begin{itemize}
  \item If $c$ splits $M$ into two surfaces $M'$, $M''$ of genus $g'\geq 3$ and $g''\geq 3$. The argument is then exactly as seen in the proof of Proposition 3.1: $g'+g''=g$, the number of remaining independent compressible curves is at least $3g-9=3(g'+g'')-9$, but $M'$ can have at most $3g'-3$ while $M''$ can have at most $3g''-3$. It follows that $M'$ has at least $3g'-6$ independent compressible curves, while $M''$ has at least $3g''-6$. So both are boundaries of Schottky manifolds by the induction hypothesis, and the surgery done on the interiors means that $M$ also bounds a Schottky manifold.
  \item Similarly, the same argument as in the proof of Proposition 3.1 can be applied if $c$ is non-separating and $M$ has genus at least $4$. 
  \item Suppose now that $c$ splits $M$ into $M'$ of genus $g'=2$ and $M''$ of genus $g''\geq 3$. If $M'$ contains at least one compressible curve, then the same argument as before shows that $M''$ contains at least $3g''-6$ independent compressible curves, so by induction $M''$ is the boundary of a Schottky manifold, and therefore $M$ is the boundary of a Schottky manifold.

    If $M'$ contained no compressible curve, then $M''$ would have at least $3g-8$ independent compressible curves, since cutting along $c$ can only ``destroy'' $c$, plus one compressible curve on the side of $M''$ (two compressible curves might become homotopic). But $3g-8=3g''-2$, a contradiction because $M''$ cannot have more than $3g''-3$ independent compressible curves.
  \item If $M$ has genus $4$ and at least $6$ independent compressible curves, and is cut by $c$ into two surfaces of genus $2$, the same argument shows that there must be at least one compressible curve remaining on each side. Otherwise at most 2 curves would be ``lost'' ($c$ plus one pair of compressible curves becoming homotopic) so one side would need to have 4 independent compressible curves, a contradiction.
    \item Finally suppose that $M$ has genus $3$ and at least $3$ compressible curves, and is cut along a non-separating compressible curve $c$. One obtains a surface of genus $2$ containing at least one compressible curve, because at most one compressible curve could become nul-homotopic. So the surface obtained after cutting along $c$ is the boundary of a Schottky manifold, and therefore $X$  is Schottky. 
    \end{itemize}
  \end{proof}

\subsection{A variational formula for the renormalized volume}
\label{ssc:schwarzian}

Although convex co-compact hyperbolic manifolds have infinite volume, one can define a {\em renormalized volume}, see \cite{graham-witten}. We will need a variational formula for the renormalized volume, see \cite{TZ-schottky,takhtajan-teo,volume} or \cite[Corollary 3.11]{compare}. To state it, note that $\partial_\infty X$ is equipped with a holomorphic quadratic differential $q$, which can be defined as follows. Let $\partial_iX$ be a connected component of $\partial_\infty X$. Since $\partial_i X$ is a closed surface of genus at least $2$, equipped with a complex structure (the restriction of $c$ to $\partial_i X$), there is by the Poincar\'e-Riemann uniformization theorem a holomorphic map $f_i:\DD \to \partial_iX$, where $\DD$ is the unit disk in $\C$.

Note that $\partial_\infty X$ can be locally identified with $\CP^1$ (or $\C$, using the holomorphic identification of $\C$ to $\CP^1\setminus \{\infty\}$), and this local identification is well-defined up to left composition by an element of $PSL(2,\C)$. This makes it possible to consider the Schwarzian derivative of $f_i$. Recall that given a holomorphic map $f:\Omega\to \C$, where $\Omega\subset \C$, its Schwarzian derivative is defined as:
$$ \cS(f) = \left(\left(\frac{f''}{f'}\right)' - \frac 12\left(\frac{f''}{f'}\right)^2\right)dz^2~. $$
The following properties are relevant:
\begin{itemize}
\item $\cS(f)=0$ if and only if $f$ corresponds to an element of $PSL(2,\C)$,
\item $\cS(f\circ g)=\cS(f)+f^*\cS(g)$. 
\end{itemize}
It follows from those two properties that the Schwarzian derivative of $f$ is invariant under composition of $f$ on the left by an element of $PSL(2,\C)$. As a consequence, the Schwarzian derivative of $f_i$ is well-defined -- even if the identification of $\partial_\infty X$ with $\CP^1$ is only local. 
We define
$$ q = f_{i*}\cS(f_i)~. $$

By construction, $q$ is a holomorphic quadratic differential on $\partial_i X$.
\footnote{There is another way to introduce $q$, in relation to the second term of the asymptotic development of the metric near infinity when the metric at infinity (corresponding to the first term) has constant curvature. The real part of $q$ is then minus the traceless part of this second term, see e.g. \cite[Lemma 8.3]{volume}. The holomorphic quadratic differential $q$ already appears in this form in \cite{HS}. We will not need this different point of view here.}

\begin{lemma} \label{lm:gradient}
  Let $(c_t)_{t\in [0,1)}$ be a one-parameter family of complex structures in $\cT_{\partial X}/\Gamma$. Then
  $$ \frac{dV_R(c_t)}{dt} = \mathrm{Re}\left(\left\langle q, \frac{dc_t}{dt}\right\rangle\right)~, $$
where $\langle ,\rangle$ denotes the natural duality product between holomorphic quadratic differentials and Beltrami differentials.
\end{lemma}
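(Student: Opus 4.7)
The plan is to compute the variation directly in Fefferman--Graham gauge, using the framework recalled in Section \ref{reviewvol}. For each $t$, I would choose the defining function $\rho_t$ so that the boundary representative $g_{(0),t}$ in (\ref{wedform}) is the Poincar\'e metric of $c_t$, i.e.\ has constant scalar curvature $R_{(0)} = -2$. This removes the residual Weyl freedom of the expansion (\ref{acform})--(\ref{wedform}), and with it the conformal anomaly, so that $V_R$ becomes an honest function on $\cT_{\partial X}/\Gamma$ and the derivative $dV_R/dt$ is unambiguous.

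The main computation is to differentiate the cutoff volume $V(X_\epsilon)$ together with its two counterterms in (\ref{renvol}). Writing $\delta\sqrt{g} = \tfrac{1}{2}\sqrt{g}\,g^{ij}\delta g_{ij}$ and using the vacuum Einstein equations with $\Lambda<0$, the bulk variation reduces, as in the standard ADM/holographic-renormalization derivation, to a boundary integral on $\{\rho = \epsilon\}$. Inserting the Fefferman--Graham expansion and using the constraints (\ref{eform}), one checks that the $\epsilon^{-1}$ and $\log\epsilon$ pieces of this boundary integral cancel term-by-term against the variations of the two counterterms. What survives in the $\epsilon\to 0$ limit is a finite pairing between $\delta g_{(0)}$ and the trace-free part $g_{(2)}^{\mathrm{TF}}$ of $g_{(2)}$. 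Now invoke the classical identification alluded to in the footnote following the definition of $q$: in the Poincar\'e gauge $g_{(0)} = e^{2\varphi}|dz|^2$ with $R_{(0)}=-2$, the trace-free part of $g_{(2)}$ is (up to a universal sign and numerical factor) $\mathrm{Re}(q)$, where $q = f_* \cS(f)$. In the same gauge, an infinitesimal change of complex structure corresponds to a trace-free variation $\delta g_{(0)} = 2\mathrm{Re}(\mu(dz)^2)\,e^{2\varphi}$, with $\mu = dc_t/dt$ the Beltrami differential; the trace part of any candidate $\delta g_{(0)}$ is killed because we chose to stay within the Poincar\'e representative. Substituting, the surviving pairing is exactly the real part of the canonical duality pairing $\langle q,\mu\rangle = \int q\cdot \mu$, which is the claimed formula.

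The main obstacle is purely bookkeeping: the relative signs, factors of $\pi$ and $\ell$, and the precise normalization of $g_{(2)}^{\mathrm{TF}}$ against $\mathrm{Re}(q)$ must be pinned down consistently with the convention used for $\cS(f)$ and with the chosen normalization of $\langle\cdot,\cdot\rangle$. Two independent sanity checks are available: (i) at a Fuchsian point the two boundary components carry identical complex structures, $q$ vanishes, and the formula predicts $dV_R/dt = 0$, in agreement with the well-known fact that the Fuchsian locus is critical for $V_R$; and (ii) the Takhtajan--Zograf and Takhtajan--Teo Liouville-action presentations of $V_R$ give a direct route to the same identity, via the classical fact that the first variation of the Liouville action with respect to moduli is given by the Schwarzian of the uniformizing map, so that one may ultimately just cite \cite{TZ-schottky,takhtajan-teo,volume} or \cite[Corollary 3.11]{compare} for the statement with its precise constants.
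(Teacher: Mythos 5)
Your outline is correct, but you should know that the paper itself gives no proof of Lemma \ref{lm:gradient}: it states the formula as known and defers entirely to the literature (\cite{TZ-schottky,takhtajan-teo,volume} and \cite[Corollary 3.11]{compare}). What you have written is in substance the proof contained in those references, recast in the holographic-renormalization language of section \ref{reviewvol}: fix the Fefferman--Graham gauge with $g_{(0)}$ the constant-curvature representative, vary the cutoff volume together with the counterterms in eqn.~(\ref{renvol}), check that the divergent pieces cancel, and identify the finite remainder as the pairing of $\delta g_{(0)}$ with the traceless part of $g_{(2)}$, which is $-\mathrm{Re}(q)$ up to a universal factor --- precisely the identification the paper relegates to a footnote, citing \cite[Lemma 8.3]{volume} and \cite{HS}. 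Two steps in your sketch deserve sharpening. First, the assertion that ``the trace part of any candidate $\delta g_{(0)}$ is killed because we chose to stay within the Poincar\'e representative'' is true but not for the reason as phrased: within the constant-curvature slice the trace of $g_{(2)}$ is the constant $\tfrac{1}{2}R_{(0)}$ by eqn.~(\ref{eform}), so the trace part of $\delta g_{(0)}$ pairs with $g_{(2)}$ to a multiple of the first variation of the total area, which vanishes by Gauss--Bonnet; this is the argument you should record. Second, your identification $\delta g_{(0)} = 2\mathrm{Re}\bigl(\mu\,(dz)^2\bigr)e^{2\varphi}$ has the conjugation reversed in the usual conventions (lowering the index of the Beltrami differential $\mu\, d\bar z/dz$ with $e^{2\varphi}|dz|^2$ produces a $d\bar z^2$ component proportional to $\mu$, hence a $dz^2$ component proportional to $\bar\mu$), though this falls under the bookkeeping you flagged. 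The trade-off between the two routes is clear: yours makes the lemma self-contained and ties its normalization to the physics conventions of sections \ref{prelims}--\ref{reviewvol}, whereas the paper's citation avoids exactly the delicate constant-tracking that you correctly identify as the main obstacle, and which is carried out carefully in \cite[Corollary 3.11]{compare}.
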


\subsection{The renormalized volume and the volume of the convex core}
\label{ssc:volumes}

There are close relations between the $V_C(X)$ and $V_R(X)$, given in particular by the following lemma.

\begin{lemma} \label{lm:compare}
  For any convex co-compact hyperbolic manifold $X$,
  $$ V_C(X)-\frac 14 L_m(l)\ - C(\partial X)\leq V_R(X)\leq V_C(X) - \frac 14L_m(l) + \frac{\pi\log(2)}{2}|\chi(\partial X)|~. $$
  where $C(\partial X)$ is a constant depending only on the topology of $\partial X$. 
\end{lemma}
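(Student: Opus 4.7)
The plan is to introduce an auxiliary renormalized volume $V'_R(X)$, defined using the Thurston (grafted) metric at infinity rather than the Poincar\'e metric $h$, and to show that $V'_R(X) = V_C(X) - \frac{1}{4}L_m(l)$ \emph{exactly}. The two inequalities of the lemma then reduce to two-sided control on $|V_R(X) - V'_R(X)|$ by a topological constant.

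For the exact identity, I would parametrize $X \setminus C(X)$ by the normal exponential map as $\partial C(X) \times [0,\infty)$; by Section \ref{ssc:boundary} this is a diffeomorphism. The pleated structure on $\partial C(X)$, consisting of the induced metric $m$ and measured bending lamination $l$, determines the hyperbolic metric in the collar explicitly: equidistant surfaces at distance $t$ have metric equal to $\cosh^2 t \cdot m$ away from the bending locus, with an additional contribution proportional to $\sinh(2t)$ along the leaves of $l$ weighted by its transverse measure. Setting $\rho = 2 e^{-t}$ puts this metric in the Fefferman-Graham form of eqn.\ (\ref{acform}), and the limit as $\rho \to 0$ is by definition the Thurston metric $m_{\mathrm{Th}}$. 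Splitting $V(X_\epsilon) = V_C(X) + V(\partial C(X) \times [0, T_\epsilon])$ and using $\int_0^T \cosh t \, \sinh t\, dt = \tfrac{1}{4}(\cosh 2T - 1)$, the counterterms prescribed by eqn.\ (\ref{renvol}), applied with $g_{(0)} = m_{\mathrm{Th}}$, absorb all divergences except the finite residue $-\tfrac{1}{4}L_m(l)$ coming from the bending contribution, yielding the identity.

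For the comparison between $V_R$ and $V'_R$, write $m_{\mathrm{Th}} = e^{2\varphi} h$ for a unique $\varphi:\partial_\infty X\to \R$. The conformal anomaly that governs the change of the renormalized volume under a Weyl rescaling of the boundary metric gives
\begin{equation}
V_R(X) - V'_R(X) \;=\; -\frac{1}{4}\int_{\partial_\infty X}\!\!\left(|d\varphi|_h^2 - \varphi\, R_h\right) dA_h.
\end{equation}
The upper bound reduces to the pointwise estimate $\varphi \leq \tfrac{1}{2}\log 2$, which combined with Gauss-Bonnet produces precisely the constant $\frac{\pi \log 2}{2}|\chi(\partial X)|$; this is the quasi-Fuchsian argument of \cite[Theorem 1.1]{compare} as extended in \cite[Section 3]{bridgeman-canary:renormalized}, and it goes through verbatim in the convex co-compact case because the derivation of $\varphi \leq \tfrac{1}{2}\log 2$ is local on $\partial_\infty X$. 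For the lower bound, I would establish a complementary estimate $\varphi \geq -C_0(\partial X)$ with $C_0$ depending only on the topology of $\partial X$. Away from the thin parts this follows from Ahlfors-Schwarz-type comparisons, while inside the thin parts one uses that the hyperbolic lengths of short simple closed geodesics on $(\partial_\infty X, m_{\mathrm{Th}})$ and $(\partial_\infty X, h)$ are comparable up to constants depending only on topology (Section \ref{ssc:boundary}), so the collar lemma forces $\varphi$ to stay bounded below.

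The main obstacle is the uniform lower bound on $\varphi$ in the compressible case, since $m_{\mathrm{Th}}$ can have arbitrarily thin grafted components that are not a priori controlled by $h$. The key input is the Bonahon-Otal admissibility condition on $l$ (in particular, that the boundary of every compressing disc meets $l$ with total mass greater than $2\pi$), which prevents the grafting from pushing $\varphi$ to $-\infty$ by a geometry-dependent amount, and allows the comparison to depend only on the topological data of $\partial X$. The precise value of $C_0$ is not needed for any subsequent application of the lemma, so only the qualitative dependence on topology matters.
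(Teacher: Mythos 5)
Your high-level strategy coincides with the paper's: introduce the renormalized volume $V'_R$ attached to the Thurston metric, identify it with $V_C(X)-\frac{1}{4}L_m(l)$ via the equidistant foliation (Section \ref{ssc:thurston-metric}), and convert the lemma into two-sided control of $V_R-V'_R$ through the Polyakov functional of the conformal factor $\varphi$, where $h_{Th}=e^{2\varphi}h$ (Definition \ref{df:polyakov}, Theorem \ref{tm:compare2}). The gap is in the step that carries all of the new content, namely the left inequality. The estimate you propose as its engine, $\varphi\geq -C_0(\partial X)$, is vacuously true: $h_{Th}$ is a complete conformal metric of curvature in $[-1,0]$ on a closed surface, so the Ahlfors--Schwarz comparison gives $h_{Th}\geq h$, i.e. $\varphi\geq 0$ identically, for \emph{every} convex co-compact $X$, with no input from the Bonahon--Otal condition. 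Yet the left inequality does not follow, because a lower bound on $\varphi$ is not what the anomaly formula asks for. With the correctly normalized formula one has
$$ V_R - V'_R \;=\; \frac{1}{4}\int_{\partial_\infty X}\|d\varphi\|_h^2\,da_h \;-\; \frac{1}{2}\int_{\partial_\infty X}\varphi\,da_h~, $$
(the curvature term in your display, and as literally written in Definition \ref{df:polyakov}, has the opposite sign; note that with that literal sign and $\varphi\geq 0$, Theorem \ref{tm:compare2} would be trivial with constant zero, which is clearly not the intended content, and the Fuchsian example, where scaling up the boundary metric increases $V_R$, fixes the sign). So bounding $V_R-V'_R$ from below by a topological constant requires an \emph{upper} bound on $\int\varphi\,da_h$ --- the opposite of what you prove --- and under your own sign convention it would additionally require an upper bound on the Dirichlet energy, which you never address. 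Your picture of the obstacle is likewise inverted: when a compressible curve $\gamma$ is pinched, grafting makes $h_{Th}$ much \emph{larger} than $h$ on the Margulis tube, with $\sup\varphi\approx\log(2\pi/l_\gamma)\to\infty$, so no pointwise bound in either direction can exist, and the entire difficulty is to show that the integral contribution of the thin tubes stays bounded.

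That integral control is precisely what the paper's proof supplies and your proposal lacks: on the thick part, $\varphi$ and $d\varphi$ are uniformly bounded (Lemma \ref{lm:C1}); on each tube $T_\gamma$ one writes $\varphi=u_0+u_\Delta$, where $u_0$ is the explicit conformal factor of a flat cylinder, computes the Liouville integral of $u_0$ in closed form and finds it bounded even though $\sup u_0=\log(2\pi/l_\gamma)$ diverges (Lemma \ref{lm:h-h_gamma}), and bounds $u_\Delta$ together with its energy by a maximum-principle argument on the cylinder (Lemma \ref{lm:C2}). A secondary flaw: your justification of the right inequality --- that the pointwise bound $\varphi\leq\frac{1}{2}\log 2$ is ``local'' and hence extends verbatim to compressible boundary --- fails for the same reason, since $\varphi$ blows up in pinched compressible tubes; that Kulkarni--Pinkall-type bound is tied to injectivity of the developing map and is genuinely false here. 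The right inequality does extend to convex co-compact manifolds, but by the argument of \cite[Section 3]{bridgeman-canary:renormalized}; citing it, as the paper does, is legitimate, whereas the reason you give for the extension is not.
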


(Note that the additive constant on the right of the equation depends on a choice of normalization in the definition of $V_R$.) The inequality on the right can be found as \cite[Theorem 1.1]{compare} for quasi-Fuchsian manifolds, but the proof extends without change to convex co-compact hyperbolic manifolds, see \cite[Section 3]{bridgeman-canary:renormalized}.  The inequality on the left is Theorem \ref{tm:compare2} below.

There is also a lower bound on renormalized volume, in terms of the volume of the convex core, for convex co-compact manifolds with incompressible boundary, see \cite{compare,bridgeman-canary:renormalized}. For convex co-compact manifolds with compressible boundary, the constant depends on the injectivity radius of the boundary. 

\subsection{A bound on the Weil-Petersson gradient}

We recall here a bound on the Weil-Petersson gradient of $V_R$ when no compressible curve is short. Similar estimates can be found in \cite{kra-maskit:remarks,bridgeman-brock-bromberg}. We provide a proof for completeness. A more precise analysis of the convergence of the geometric structure when an incompressible curve is pinched can be found in \cite{guillarmou-moroianu-rochon}.

\begin{lemma} \label{lm:WP}
  Let $X$ be a convex co-compact hyperbolic manifold, such that the length for the Poincar\'e metric of any non-trivial simple closed curve in $\partial_\infty X$ compressible in $X$ is at least $l$. Then the Weil-Petersson gradient of $V_R$ on $\cT_{\partial X}/\Gamma$ is bounded by
  $$ \| dV_R\|_{WP} \leq \frac{3\sqrt{\pi|\chi(\partial_\infty X|}}{\sqrt 2\tanh^2(l/4)}~. $$
\end{lemma}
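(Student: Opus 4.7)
Plan. The proof will combine three ingredients: the variational formula for $V_R$ from Lemma A.3, a Weil–Petersson duality identification converting $\|dV_R\|_{WP}$ into an $L^2$-norm of the Schwarzian quadratic differential $q$, and a pointwise Nehari-type estimate on $|q|/\rho^2$ forced by the compressibility hypothesis.

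First I would apply Lemma A.3. Every class in $T(\cT_{\partial X}/\Gamma)$ is represented by a harmonic Beltrami differential of the form $\mu = \bar{\eta}/\rho^2$ for some holomorphic quadratic differential $\eta$ on $\partial_\infty X$, and the Weil–Petersson inner product satisfies $\langle \mu_1,\mu_2\rangle_{WP}=\int \mu_1\bar\mu_2\rho^2\,dA$. Under this identification, the natural duality pairing $\langle q,\mu\rangle=\int q\mu$ appearing in Lemma A.3 combines with the WP pairing to give
\[
\|dV_R\|_{WP}^2 \;=\; \int_{\partial_\infty X}\frac{|q|^2}{\rho^2}\,dA_\rho,
\]
with $\rho$ the Poincaré density on $\partial_\infty X$, so the bound to prove is an $L^2$-bound for $|q|/\rho^2$.

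Next I would establish the pointwise bound
\[
\frac{|q(p)|}{\rho(p)^2}\;\le\;\frac{3}{2\tanh^2(l/4)},\qquad p\in\partial_\infty X.
\]
Lifting $q$ to the universal cover $\DD$ of the component of $\partial_\infty X$ containing $p$ and using the developing map $F:\DD\to\CP^1$ of the complex projective structure described in Section A.2, one has $F^*q=\cS(F)$, and a direct change of variable gives $|q|/\rho^2 = \tfrac14|\cS(F)(z)|(1-|z|^2)^2$ at every lift $z\in\DD$ of $p$. The key geometric lemma is that the hypothesis forces $F$ to be univalent on the hyperbolic ball $B(z,l/2)\subset\DD$: any two distinct points in $\DD$ identified by $F$ are related by a nontrivial deck transformation of the covering $\DD\to\Omega_i$, and such a deck transformation projects to a loop in $\Omega_i/\Gamma_i=\partial_\infty X$ that is contractible in $X$, hence compressible; by hypothesis its translation length on $\DD$ is at least $l$, so the displacement at any point is at least $l$, and a ball of radius $l/2$ therefore cannot be mapped into itself. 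Once $F$ is known to be univalent on $B(z,l/2)$, composing with a Möbius uniformization of $B(z,l/2)$ by $\DD$ and applying the classical Nehari inequality $|\cS(G)(0)|\le 6$ for $G$ univalent on $\DD$ yields $|\cS(F)(z)|(1-|z|^2)^2\le 6/\tanh^2(l/4)$, which is the claimed pointwise bound.

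Finally I would integrate the pointwise estimate and invoke Gauss–Bonnet to get the area bound $\int_{\partial_\infty X}\rho^2\,dA=2\pi|\chi(\partial_\infty X)|$, giving
\[
\|dV_R\|_{WP}^2\;\le\;\frac{9}{4\tanh^4(l/4)}\cdot 2\pi|\chi(\partial_\infty X)|,
\]
and taking a square root produces exactly the constant $3\sqrt{\pi|\chi(\partial_\infty X)|}/(\sqrt 2\,\tanh^2(l/4))$ in the statement. The main obstacle is the univalence radius step: one must identify the failures of injectivity of $F$ precisely with compressible curves (and not with arbitrary short curves on $\partial_\infty X$) and must translate the length hypothesis into a sharp lower bound on the displacement of deck transformations at every point, rather than only on their translation axes. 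This is where the special role of compressibility in $X$—as opposed to mere shortness in $\partial_\infty X$—enters, and it is what produces the exact factor $\tanh(l/4)$ rather than something weaker.
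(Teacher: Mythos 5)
Your proposal is correct and follows essentially the same route as the paper's proof: the variational formula of Lemma \ref{lm:gradient}, univalence of the uniformizing/developing map on hyperbolic disks of radius $l/2$ (Euclidean radius $\tanh(l/4)$) deduced from the compressible-curve hypothesis, the Nehari bound $|\cS|\leq 6$, and integration via Gauss--Bonnet, yielding the identical constant. The only differences are cosmetic: you phrase the univalence step through displacements of deck transformations of $\DD\to\Omega_i$ rather than through the injectivity radius of $\partial_\infty\tilde X$, and you work with $|q|/\rho^2$ directly instead of the tensor norm of $\mathrm{Re}(q)$ (note that, like the paper, you apply the hypothesis on \emph{simple} compressible curves to a compressible geodesic not shown to be simple, a standard point both arguments leave implicit).
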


\begin{proof}
    The proof is based on a classical bound on the Schwarzian derivative at the center of a holomorphic map which is injective on a disk, and on the fact that if the shortest compressible curve in $(\partial X,h)$ has length at least $l$, then every point in $(\partial \tilde{X}, h)$ is the center of an embedded open disk of radius $l/2$. This second point follows from the fact that if $x\in \partial \tilde{X}$ realizes the minimum of the injectivity radius (the radius $r$ of the largest embedded disk centered at $x$) then there is a embedded open disk of radius $r$ centered at $x$ with a self-tangency, and it follows that $x$ is on a closed geodesic of length $2r$ in $(\partial \tilde{X}, h)$. This closed geodesic projects to a closed compressible geodesic of length $2r$ on $\partial X$.

    We first note that if $\DD_r$ is the disk of radius $r$ in $\C$ and $f:\DD_r\to \C$ is a univalent holomorphic map, then $\cS(f)$ can be written as $\sigma dz^2$, with $|\sigma(0)|\leq 6/r^2$. Indeed, the function $\bar f:z\mapsto f(rz)$ is then a univalent holomorphic map from $\DD$ to $\C$, so that by the Nehari-Kraus estimate \cite{nehari-bams}, its Schwarzian differential can be written as $\cS(\bar f)=\bar\sigma(z)dz^2$, with $|\bar \sigma(0)|\leq 6$. But it follows from the definition of the Schwarzian derivative that
  $$ \cS(\bar f)=r^2\cS(f)~. $$

  Since the hyperbolic metric at $0$ is $4(dx^2+dy^2)$, the norm of the real part of $\cS(f)$ with respect to the hyperbolic metric $h$ on $\DD$ is bounded (pointwise) by
  $$ \| \mathrm{Re}(\cS(f)(0)) \|_h\leq \frac{3\sqrt{2}}{2r^2}~. $$

  Now let $x\in \partial_\infty \tilde{X}$ be a point where the injectivity radius is at least $l/2$. Consider the Riemann uniformization map $f$ from $\DD$ to the connected component of $\partial_\infty\tilde X$, chosen so that $f(0)=x$. By construction, $f$ is a local isometry between the hyperbolic metric on $\DD$ and the Poincar\'e metric on $\partial_\infty \tilde{X}$. Moreover, the disk of center $x$ and radius $l/2$ (for the Poincar\'e metric) is embedded in $\partial_\infty\tilde X$, so that the restriction of $f$ to a disk of center $0$ and hyperbolic radius $l/2$ is injective. But a disk of hyperbolic radius $l/2$ and center $0$ (for the hyperbolic metric on $\DD$, that is, the Poincar\'e disk model of the hyperbolic plane) is a disk of Euclidean radius $r=\tanh(l/4)$. So the norm of the real part of the Schwarzian derivative of $f$ at $x$ is bounded, in the hyperbolic metric, by
  $$ \| \mathrm{Re}(\cS(f))\|_h \leq \frac{3\sqrt 2}{2\tanh^2(l/4)}~. $$
  Integrating over $\partial_\infty X$, we obtain that the $L^2$ norm of $\mathrm{Re}(q)$ is bounded by
  $$ \left(\int_S \| \mathrm{Re}(q)\|_h^2 da_h\right)^{1/2}\leq \frac{3\sqrt{\pi|\chi(\partial_\infty X)|}}{\tanh^2(l/4)}~. $$
  It then follows from Lemma \ref{lm:gradient} that
  $$ \| dV_R\|_{WP}\leq \frac{3\sqrt{\pi|\chi(\partial_\infty X|}}{\sqrt 2\tanh^2(l/4)}~. $$
\end{proof}

We notice for future reference that the WP estimate here could be improved, since the pointwise estimate on $\mathrm{Re}(q)$ is better at each point where the injectivity radius is larger than $l/2$.

It follows from this lemma that on any 1-parameter family of boundary complex structures $(c_t)_{t\in [0,1)}$ in $\cT_{\partial X}/\Gamma$ of finite Weil-Petersson length, ending on a stratum of the Weil-Petersson completion of $\cT_{\partial X}/\Gamma$ corresponding to pinching a closed curve which is not compressible in $X$, $V_R$ remains bounded (see \cite{vargas-pallete-continuity}). This follows from the lemma since, in this 1-parameter family, the lengths of simple, non-trivial closed curves compressible in $X$ remains bounded from below by a positive constant.

\subsection{Convergence of convex cores when pinching compressible curves}
\label{ssc:convexcore}

In this section we consider a sequence of conformal structures $(c_n)_{n\in \N}$ on $\partial X$, $c_n\in \cT_{\partial X}/\Gamma$, and denote by $h_n$ the hyperbolic metric in the conformal class $c_n$. We assume that $(c_n)_{n\in \N}$ converges, in the Weil-Petersson metric completion of $\cT_{\partial X}/\Gamma$, to a boundary point $c_\infty$, where a set of disjoint simple closed curves $\gamma_i, 1\leq i\leq k$ is pinched, with each $\gamma_i$ bounding a disk $D_i$ in $X$. Let $X_j, 1\leq j\leq l$, be the connected components of $X\setminus (D_1\cup \cdots \cup D_k)$. 

We denote by $g_n$ the convex co-compact hyperbolic metric on $X$ associated to $c_n$ by the Ahlfors-Bers theorem, and by $X(n)=(X, g_n)$. The limit conformal structure $c_\infty$ determines a conformal structure on $\partial X_j, 1\leq j\leq l$, marked by $2k$ points $\xi_1, \cdots, \xi_{2k}$ corresponding to the pinching of the $\gamma_i, 1\leq i\leq k$, and we denote by $\xi(j)$ the set composed of the $\xi_i$ which are in the boundary of $X_j$, $1\leq j\leq l$. (Note that if $X$ is connected, then each of the $X_j$ is equipped with at least one of the $\xi_i$ on its boundary, so $\xi(j)$ has at least one element.)

Now consider $\epsilon>0$ small enough so that for $n\geq n_0$, for a $n_0$ large enough, the $\gamma_i$ have length less than $\epsilon$ for $1\leq i\leq k$, while all closed curves not homotopic to a finite cover of one of the $\gamma_i$ have length larger than $\epsilon$. For $n\geq n_0$, $\partial C(X(n))$ is the union of:
\begin{itemize}
\item $k$ tubes $T_i(n), 1\leq i\leq k$, composed of points where the injectivity radius is less than $\epsilon/2$ -- each tube having  the geodesic representative of one of the $\gamma_i$ as its core meridian,
\item $l$ connected regions $C_j(n), 1\leq j\leq l$, one for each of the $X_j$, composed of points where the injectivity radius is at least $\epsilon/2$. 
\end{itemize}
The diameter of each of the $C_j(n)$, $1\leq j\leq l$, is uniformly bounded from above (independently of $n$), because $\partial C(X(n))$ has bounded area by the Gauss-Bonnet theorem, and the $C_j(n)$ are connected and composed of points were the injectivity radius is at least $\epsilon/2$.

We choose in each of the $C_j(n), 1\leq j\leq l$, a point $x_j(n)$. The $x_j(n)$ will be used as base-points below, when considering the Gromov-Hausdorff convergence of the $X_j(n)$. The precise choice of the $x_j(n)$ is not important, since they are constrained to be contained in a region of bounded diameter -- this will be sufficient to ensure convergence in the Gromov-Hausdorff topology pointed at $x_j(n)$, after extracting a sub-sequence.


\begin{lemma} \label{lm:convexcore}
  Under the hypothesis above, for each $j\in \{ 1,2,\cdots, l\}$ (and after extracting a subsequence):
  \begin{enumerate}
  \item The pointed manifold $(X(n),x_j(n))$ converges in the Gromov-Hausdorff topology on compact subsets to a complete hyperbolic manifold $(\bar X_j, \bar x_j)$, with $\bar X_j$ diffeomorphic to $X_j$. ($\bar X_j$ can be either convex co-compact, a solid torus, or a ball.)
  \item The sequence of pointed convex cores $(C(X(n)), x_j(n))$ converges to $CH(\xi(j))\subset \bar X_j$, the convex hull of $\xi(j)$ in $\bar X_j$.
  \item $V_C(X(n))\to \sum_{j=1}^l V(CH(\xi(j)))$, which is finite.
  \end{enumerate}
\end{lemma}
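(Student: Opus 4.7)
The approach combines the totally geodesic surgery planes from Lemma \ref{lm:embedded} with the standard geometric compactness theorem for pointed hyperbolic 3-manifolds. For each pinched curve $\gamma_i$, let $P_i(n) \subset X(n)$ denote the embedded totally geodesic plane crossing the thin tube around $\gamma_i$; the argument of Lemma \ref{lm:embedded} also shows that distinct $P_i(n)$ are pairwise disjoint. Cutting $X(n)$ along $\bigcup_i P_i(n)$ decomposes it into pieces, each diffeomorphic to one of the $X_j$, and the base-point $x_j(n)$ lies in the $X_j$-piece at distance tending to infinity from every $P_i(n)$ as the tubes lengthen.

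Next, I would use the uniform lower bound $\mathrm{inj}_{x_j(n)}(X(n)) \geq \epsilon/4$ to invoke the pointed Gromov-Hausdorff compactness theorem for hyperbolic 3-manifolds (via the Margulis lemma and the thick-thin decomposition) and extract a subsequence along which $(X(n), x_j(n))$ converges geometrically to some complete pointed hyperbolic 3-manifold $(\bar X_j, \bar x_j)$. To identify the limit topologically, observe that for any fixed $R$ the ball $B_R(x_j(n))$ is eventually disjoint from every $P_i(n)$ and therefore sits inside the $X_j$-piece, which is diffeomorphic to $X_j$; letting $R\to\infty$ yields that $\bar X_j$ is diffeomorphic to $X_j$. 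Each short bulk geodesic representative of $\gamma_i$ meeting the boundary of $X_j$ has length tending to $0$, so its holonomy becomes parabolic in the limit and produces a rank-one cusp of $\bar X_j$ at the ideal puncture $\xi_i$; this proves (1).

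For (2), I would use the description of the convex core as the projection of the convex hull of the limit set, and show that, viewed from $x_j(n)$, the limit set of $\rho_n(\pi_1 X)$ converges in the Chabauty topology to the union of the limit set of $\pi_1 \bar X_j$ with the $\pi_1 \bar X_j$-orbits of the ideal points $\xi(j)$; taking convex hulls then yields $CH(\xi(j))$. Part (3) follows by combining this with a dominated-convergence argument: the volume of $C(X(n))$ restricted to $B_R(x_j(n))$ converges to that of $CH(\xi(j))$ restricted to $B_R(\bar x_j)$, while the volume inside the remaining thin tubes is controlled by the Gauss-Bonnet area bound on $\partial C(X(n))$ together with the fact that a Margulis tube of unit core length has uniformly bounded volume. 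Finiteness of $V(CH(\xi(j)))$ is classical, as the convex hull of finitely many ideal points in $\HH^3$ has volume bounded purely in terms of their number.

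The main obstacle is the convex-core convergence in (2): by the Bridgeman-Canary estimate recalled above, the total length of the bending lamination of $\partial C(X(n))$ tends to infinity, so $C(X(n))$ does not vary continuously in any naive sense. One must carefully track that the extra bending escapes to the cusps $\xi_i$ rather than accumulating in the bulk, and it is precisely here that the embedded planes $P_i(n)$ become essential, as they provide a uniform geometric model for the thin-tube portion of $C(X(n))$ and prevent volume from being lost in the limit.
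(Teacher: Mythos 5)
Your strategy (cut along the embedded geodesic planes $P_i$, extract a pointed geometric limit, then identify convex cores and volumes) starts out parallel to the paper's proof, but it contains a genuine error that invalidates part (1) and propagates into (2) and (3). You claim that each pinched curve $\gamma_i$ has a short bulk geodesic representative whose ``holonomy becomes parabolic in the limit and produces a rank-one cusp of $\bar X_j$ at the ideal puncture $\xi_i$.'' This is impossible: the $\gamma_i$ are \emph{compressible}, i.e.\ they bound disks $D_i$ in $X$, so they are null-homotopic in $X$ and their holonomy $\rho_n([\gamma_i])$ is the identity for every $n$. There is no closed geodesic representative of $\gamma_i$ in $X(n)$ at all, hence no parabolic degeneration and no cusp; indeed the very statement you are proving says that $\bar X_j$ is convex co-compact, a solid torus, or a ball, none of which has cusps. (Note also an internal inconsistency: in (2) you treat the $\xi_i$ as ideal points that must be \emph{added} to the limit set to form $CH(\xi(j))$, whereas a parabolic fixed point would already lie in the limit set.) The same conflation of the thin part of the boundary surface with the thin part of the 3-manifold undermines your volume control in (3): there are no Margulis tubes around the $\gamma_i$ in $X(n)$, since those require short closed geodesics in the 3-manifold, so your ``Margulis tube of unit core length'' bound has nothing to apply to. Finally, the finiteness of $V(CH(\xi(j)))$ is not the classical fact about the convex hull of finitely many ideal points in $\mathbb{H}^3$: in the universal cover, $CH(\xi(j))$ is the convex hull of the full limit set of $\pi_1\bar X_j$ together with the (infinite) orbit of the lifts of the $\xi_i$; the paper proves finiteness by cutting off each ideal spike with a separating totally geodesic plane.

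For comparison, the paper avoids the compactness-and-identification step altogether: it cuts $X(n)$ along the disjoint planes $P_i$ and glues a hyperbolic half-space onto each cut, producing genuine complete hyperbolic manifolds $X_j(n)$ (convex co-compact, solid tori, or balls) diffeomorphic to $X_j$ and isometric to $X(n)$ on the piece $\Omega_j(n)$ containing $x_j(n)$. On the conformal boundary this surgery replaces the short curve by a small disk, and the resulting conformal structures converge to $c_\infty|_{\partial X_j}$; geometric convergence $X_j(n)\to\bar X_j$ then follows from the continuity of the Ahlfors--Bers parameterization, with no cusps anywhere. Part (2) is read off from the fact that the closure of $K_j(n)=\Omega_j(n)\cap C(X(n))$ is the convex hull in $X_j(n)$ of the disks $P_i\cap\partial\Omega_j(n)$, which shrink and recede to infinity, converging to the points of $\xi(j)$. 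If you want to rescue your route, you must replace the cusp claim by this picture: the pinched handles simply disappear in the limit, leaving marked points (not punctures) on the boundary at infinity.
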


Here by the {\em convex hull} of $\xi(j)$ we mean the smallest geodesically convex subset of $\bar X_j$ containing $\xi(j)$ in its asymptotic boundary. By definition this subset contains the convex core of $\bar X_j$. 

\begin{proof}
  We will use the same argument as in Section \ref{ssc:maximal}, and notice that there exists $n_0\in \N$ such that for $n\geq n_0$ the lengths of all the $\gamma_i$ is at most $\epsilon'$, for a fixed value of $\epsilon'>0$. If $\epsilon'$ is small enough, we can then consider a totally geodesic plane $P_i$ as in Section \ref{ssc:maximal}, that is, a plane orthogonal to one of the lines in the thin tube with core meridian $\gamma_i$. The planes $P_i$ and $P_{i'}$ are then disjoint for $i\neq i'$, as seen in Section \ref{ssc:maximal}. We now only consider $n\geq n_0$.

  We then let $\Omega_j(n)$ be the connected component of $X(n)\setminus (P_1\cup\cdots\cup P_{k})$ containing $x_j(n)$, and let $K_j(n)=\Omega_j(n)\cap C(X,g_n)$. As in Section \ref{ssc:maximal}, we glue a half-space to $\Omega_j(n)$ at each of the $P_i$ adjacent to $\Omega_j(n)$, and obtain in this manner a complete hyperbolic manifold $X_j(n)$, which is either convex co-compact, a solid torus, or a ball. Abusing notations a bit, we consider the $P_i$ adjacent to $\Omega_j(n)$ as disjoint, totally geodesic planes in $X_j(n)$.

  To this surgery in the hyperbolic metric corresponds a simple surgery on the conformal structure at infinity of $X(n)$: a curve (which is short in say the Poincar\'e metric on $\partial_\infty X(n)$) is cut and one side replaced by a small disk. As $n\to \infty$, the conformal structure at infinity obtained in this manner converges to the conformal structure $c_\infty$ on $\partial X_j$. So $X_j(n)$ converges in the Gromov-Hausdorff topology to $\bar X_j$. This proves the first point.
  
  The closure of the convex subset $K_j(n)$ is the convex hull in $X_j(n)$ of the $P_i\cap \partial \Omega_j(n)$, which are topological disks with boundaries corresponding to the $\gamma_i$. This is clear because $K_j(n)$ is geodesically convex by definition, and its boundary is a pleated surface outside of the $P_i\cap \partial K_j(n)$, so its closure is the minimal closed geodesically convex subset of $X_j(n)$ containing the $P_i\cap \partial K_j(n)$. The boundary of $P_i\cap \partial \Omega_j(n)$ corresponds to $\gamma_i$ and its length goes to $0$ as $n\to \infty$. Moreover, for each $i$, $d(x_j(n), P_i\cap \partial \Omega_j(n))\to \infty$ as $n\to \infty$, because the length of the tube $T_i(n)$ around $\gamma_i$ goes to infinity. As each of the  $P_i\cap \partial \Omega_j(n)$ converges to one of the points of $\xi(j)$ as $n\to \infty$, we see that $K_j(n)$ converges to the convex hull of $\xi(j)$ in $\bar X_j$, which proves point (2). 

  The convergence of $V_C(X(n))$ to the sum of the volumes of the $CH(\xi(j))$ follows from the Gromov-Hausdorff convergence of the different components of $X(n)$, pointed at the $x_j(n)$.

  To see that the volume of $CH(\xi(j))$ is bounded, consider $\xi\in \xi(j)$, let $P$ be a totally geodesic plane in $\bar X_j$ separating $\xi$ from $C(\bar X_j)$ and from the other elements of $\xi(j)$, and let $H$ be the half-space bounded by $P$ containing $\xi$ in its boundary. Then $H\cap CH(\xi)$ is the convex hull of $\xi$ and of a compact domain in $P$ (the intersection of $CH(\xi(j))$ with $P$). So  $H\cap CH(\xi)$ has finite volume. Since this holds for all the $\xi\in \xi(j)$, we see that $CH(\xi(j))$ can be written as the union of a finite family of subsets of finite volume -- one for each element of $\xi(j)$ -- and the remaining part which is compact. So $CH(\xi(j))$ has finite volume.
\end{proof}

\subsection{Limit of the renormalized volume when pinching a compressible curve}

We now consider the case where $(c_t)_{t\in [0,1)}$ pinches a curve $\gamma$ which is contractible in $X$. The fact that $V_R(c_t)\to -\infty$ as $t\to 1$ then follow from Lemma \ref{lm:compare} and Lemma \ref{lm:convexcore}, together with the following result of Bridgeman and Canary, see \cite[Theorem 2']{bridgeman-canary:bounding}, and also \cite[Theorem 4.2]{bridgeman-canary:renormalized}.

\begin{theorem}[Bridgeman-Canary]
  There exists constants $P$ and $Q$ (one can take $P=74$ and $Q=36$) such that if $X$ is a convex co-compact hyperbolic manifold such $\partial_\infty\tilde X$ contains a closed compressible geodesic of length $r<1$ in the Poincar\'e metric, then the length of the measured bending lamination on the boundary of the convex core is bounded from below by:
  \begin{equation}
    \label{eq:Lml}
  L_m(l) \geq \frac Pr -Q~.   
  \end{equation}
\end{theorem}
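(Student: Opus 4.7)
The plan is to combine a topological lower bound on the intersection of the bending lamination with any compressible curve (from Bonahon--Otal) with a geometric width estimate that converts each unit of transverse measure into an $m$-length of order $1/r$. The topological ingredient would come directly from the characterization recalled at the end of Section~\ref{ssc:convex}: since $\gamma$ is compressible in $X$, viewing $\gamma$ on $\partial C(X)$ via the identification of Section~\ref{ssc:boundary}, the Bonahon--Otal theorem forces
\[ i(\gamma,l) > 2\pi. \]

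The heart of the argument would be to show that each transverse crossing of $\gamma$ by a leaf of $l$ contributes a segment of $m$-length at least $P_0/r$, for some explicit absolute constant $P_0>0$. The mechanism I would try to make quantitative is the following: a simple closed geodesic of hyperbolic length $r$ in the Poincar\'e metric admits a standard embedded collar of conformal modulus comparable to $\pi/r$, since this is the asymptotic modulus of a maximal hyperbolic collar around a closed geodesic of length $r$. Lifting $\gamma$ to a closed Jordan curve $\tilde\gamma \subset \Omega_\rho \subset \partial_\infty \HH^3$---a loop rather than a line, precisely because $\gamma$ is null-homotopic in $X$---one considers the two pleated disks bounding the convex hull of $\tilde\gamma$ in $\HH^3$. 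The large conformal modulus $\sim 1/r$ at infinity should translate, via the extrinsic geometry of this convex hull together with the uniformly Lipschitz nearest-point retraction from $\partial_\infty X$ to $\partial C(X)$, into a correspondingly ``thick'' tube in $\partial C(X)$ along the compressible direction, so that any bending leaf traversing this tube transversally has $m$-length bounded below by a constant multiple of $1/r$. Integrating against the transverse measure then gives
\[
L_m(l) \;\geq\; \frac{P_0}{r}\, i(\gamma,l) - Q_0 \;\geq\; \frac{2\pi P_0}{r} - Q_0 \;=\; \frac{P}{r}-Q,
\]
with the additive $Q_0$ absorbing tube end-effects and the multiplicative constant in the Sullivan--Epstein nearest-point estimate, and the explicit values $P=74$, $Q=36$ emerging from optimization.

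The main obstacle is obtaining the \emph{polynomial} rate $1/r$ in the width estimate, rather than the easy \emph{logarithmic} rate $\log(1/r)$: applying the collar lemma intrinsically to $(\partial C(X),m)$ only yields that a transverse arc to a short closed geodesic has length at least twice the intrinsic collar width $\sim \log(1/r)$, which is too weak to give the stated inequality. Upgrading to $1/r$ requires exploiting the extrinsic geometry of the compressing disk in the bulk three-manifold, using the sharp asymptotic identity between the conformal modulus $\sim \pi/r$ of the annulus around a short Poincar\'e geodesic and the metric geometry of the pleated disks bounded by $\tilde\gamma$ in $\HH^3$. Making this comparison quantitative---in particular, controlling how the bending measure is distributed along the leaves as they traverse the tube, and carefully relating the metrics $m$ and $h$ on the intermediate Epstein surfaces throughout the thin region---is, I expect, the technically demanding core that forces one into the detailed analysis carried out by Bridgeman and Canary in \cite{bridgeman-canary:bounding}.
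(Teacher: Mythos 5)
Your proposal takes essentially the same approach as the paper, which likewise treats this as a cited result of Bridgeman--Canary and justifies it heuristically by exactly your two ingredients: the Bonahon--Otal intersection bound $i(\gamma,l)\geq 2\pi$ for a compressible curve, and the fact that every leaf of $l$ crossing $\gamma$ must traverse a tube in $(\partial C(X),m)$ of length of order $1/r$, the polynomial rate coming from the modulus $\sim \pi/r$ of the annulus at infinity rather than the logarithmic intrinsic collar width. Like you, the paper defers the precise constants $P=74$, $Q=36$ to the detailed analysis in \cite{bridgeman-canary:bounding}; the quantitative form of your width estimate is carried out (for the asymptotic statement) in Lemma \ref{lm:ext}, via a modulus comparison under the normal exponential map from a bulk geodesic segment.
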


Note that the coefficient $74$ above is twice that found in \cite{bridgeman-canary:renormalized}, since we consider here the length of closed contractible curves in $\partial X$, rather than the injectivity radius of $\partial \tilde X$.

Without getting into the precise value of the constants, we can indicate a heuristic explanation for \eqref{eq:Lml}. When $\partial_\infty X$ contains a closed geodesic $\gamma$ of length $r$ for the Poincar\'e metric $h$, then it contains a collar of width approximately $L=|\log(r)|$ around $\gamma$. If $\gamma$ is contractible, the induced metric $m$ on $\partial C(X)$ contains a tube of length $w$ approximately $\exp(L)=1/r$ around $\gamma$. But since $\gamma$ is contractible, the intersection with $\gamma$ of the measured bending lamination $l$ (that is, the transverse measure of $l$ evaluated on $\gamma$) is at least $2\pi$, and in fact very close to $2\pi$ as $r\to 0$. Finally, all leaves of $l$ intersecting $\gamma$ must cross the whole length of the tubular collar around $\gamma$, so must have length at least $2w$, so of the order of $1/r$. More precise arguments of this type can lead to \eqref{eq:Lml}.



\subsection{Geodesically convex subsets of a hyperbolic manifold}
\label{ssc:A-convex}

We first indicate why the intersection of two closed, non-empty, geodesically convex subsets $K, K'$ in a complete hyperbolic manifold $X\neq \HH^3$ is non-empty. Since $X$ has non-trivial topology and is complete, it contains a closed, oriented geodesic $\gamma$. Let $x\in K$, and, for $n\in \N$, let $\gamma_n$ be the geodesic segment starting and ending at $x$, and homotopic to a path going from $x$ to a point of $\gamma$, doing $n$ turns around $\gamma$, and going back to $x$.

Let $\bar \gamma_n$ and $\bar \gamma$ be lifts of $\gamma_n$ and of $\gamma$ to the universal cover of $X$, chosen so that the distance from the endpoints of $\bar \gamma_n$ to $\bar \gamma$ is equal to the distance of $x$ to $\gamma$ in $X$. For $n$ large, $\bar\gamma_n$ is a long geodesic segment with endpoints at bounded distance from $\bar \gamma$, so there is a sub-segment of $\bar\gamma_n$ of length at least the length of $\gamma$ which is arbitrarily close to $\bar \gamma$, say at distance less than $\epsilon_n$, for some $\epsilon_n>0$ with $\lim_{n\to \infty}\epsilon_n=0$.

It follows that  each point of $\gamma$ is at distance at most $\epsilon_n$ from a point of $\gamma_n$. Since $K$ is geodesically convex, $\gamma_n\subset K$, and since $K$ is closed and $\epsilon_n\to 0$, $\gamma\subset K$. The same argument shows that $\gamma\subset K'$, and it follows that $K\cap K'\neq\emptyset$.

Let now $K\subset X$ be a geodesically convex, let $x\in \partial K$, and let $n$ be the outward oriented unit normal to a support planes\footnote{A support plane of $K$ at $x$ is a totally geodesic plane containing $x$, which locally bounds a closed half-space containing $K$.} of $K$ at $x$. Let $\alpha$ be the half-geodesic starting from $x$ in the direction of $n$. Then $\alpha\cap K=\{ x\}$, since otherwise the whole intersection of $\alpha$ between $x$ and its first intersection with $K$ would be contained in $K$.

If $\beta$ is another such geodesic ray, starting from a point $x'\in \partial K$ in the direction of a unit normal vector to a support plane of $K$ at $x'$, then $\alpha$ and $\beta$ must be disjoint. Suppose indeed that they intersect at a point $y$, and let $\gamma$ be the geodesic segment from $x$ to $x'$ homotomic to the union of the segment of $\alpha$ from $x$ to $y$ union the segment of $\beta$ from $y$ to $x'$. The angle between $\gamma$ and $\alpha$ (resp. $\gamma$ and $\beta$) must be bigger than $\pi/2$, because both $\alpha$ and $\beta$ are directed by outwards unit normals of support planes of $K$, while $\gamma$ is towards the interior of $K$. But having two angles larger than $\pi/2$ contradicts the Gauss-Bonnet relation for hyperbolic triangles (the sum of the interior angles is equal to $\pi$ minus the area).

\begin{figure}[h]
  \centering
 \includegraphics[width = 4.0cm]{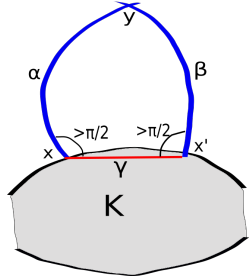}
 \caption{A pair of geodesic rays $\alpha,\beta$ normal to a convex subset $K$ cannot intersect, because the sum of the angles of triangle $xx'y$ would be
 greater than $\pi$.}
 \end{figure}  

 It follows from this remark that, if we denote again by $N\partial K$ the unit normal bundle of the boundary of $K$, then the map
 $$  \begin{array}{cccc}
       \exp: & N\partial K\times \R_{>0} & \to & X \\ 
            & (n,t) &\mapsto & \exp(nt)
 \end{array} $$
 is a diffeomorphism from $N\partial K\times \R_{>0}$ to $X\setminus K$.

 It also follows that $K$ is homotopic to $X$. This applies in particular to $C(X)$, the convex core of $X$.


\subsection{The Thurston metric at infinity}
\label{ssc:thurston-metric}

Finally, this section describes briefly some properties of a natural metric at infinity of convex co-compact hyperbolic manifolds, which appears prominently in Sections \ref{ssc:r-thurston} and \ref{ssc:asymptotic}.

The boundary at infinity $\partial_\infty X$ of a convex co-compact hyperbolic manifold is equipped naturally with a Riemannian metric in the standard conformal class, called the Thurston metric (or projective metric, or sometimes the grafting metric), closely related to the boundary of $C(X)$. We denote it by $h_{Th}$, and give three descriptions of it.

The first description is simpler when the measured bending lamination $l$ on $\partial C(X)$ is along disjoint closed geodesics $c_1, \cdots, c_n$, with each $c_i$ equipped with a positive weight $w_i$. In this case, $h_{Th}$ is obtained by cutting $(M,m)$ along the geodesics $c_i$ and replacing (or ``grafting'') $c_i$ by a flat strip of width $w_i$.

A second description is as the natural metric induced on $N\partial C(X)$, the unit normal bundle of the boundary of $C(X)$. As seen in Section \ref{ssc:boundary}, the normal exponential map is a homeomorphism between $N\partial C(X)$ and $\partial_\infty X$, so that the metric can then be pushed to $\partial_\infty X$.

A third description is as the metric at infinity defined by the equidistant foliation of $X\setminus C(X)$. For $r>0$, let $M_r$ be the set of points at distance $r$ from $C(X)$. The surfaces $M_r$ can be identified through the flow of the normal directions. Let $h_r$ be the induced metric on $M_r$. As $r\to \infty$, $h_r$ ``expands'' exponentially, but the ``normalized'' metric $4e^{-2r}h_r$ converges to $h_{Th}$.

A consequence of this last description is that the renormalized volume associated to the metric $h_{Th}$ at infinity is (up to an additive, topological constant)
$$ V_C(X)-\frac 14L_m(l)~, $$
where $V_C(X)$ is the volume of the convex core and $L_m(l)$ the length of the measured bending lamination on the boundary of the convex core.

\subsection{The renormalized volume associated to the Thurston metric}
\label{ssc:r-thurston}

In this section we show that the renormalized volume $V'_R$ of $X$ associated to the Thurston metric $h_{Th}$ at infinity is within a bounded additive constant from the renormalized volume $V_R$ associated to the Poincar\'e metric $h$ at infinity.

\begin{theorem} \label{tm:compare2}
  There exists a constant $C(\partial X)$, depending only on the topology of $\partial X$, such that
  $$ V'_R\leq V_R+C(\partial X)~. $$
\end{theorem}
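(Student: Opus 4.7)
The plan is to realize $V'_R$ as the renormalized volume of $X$ computed with the Thurston metric $h_{Th}$ in place of the Poincar\'e metric $h$ at infinity, and then bound the difference $V'_R - V_R$ using the standard conformal (Weyl) anomaly formula for $V_R$, which for surfaces of fixed topology depends only on the Liouville-type action of the conformal factor $\sigma$ relating $h_{Th}$ and $h$.

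First I would verify the claim from Section \ref{ssc:thurston-metric}: the quantity $V_C(X) - \frac{1}{4}L_m(l)$ is, up to a topology-only additive constant, the renormalized volume $V_R^{Th}$ of $X$ defined using $h_{Th}$ as the representative boundary metric. The point is that the equidistant foliation $\{M_r\}$ from the convex core plays the role of a geodesic defining function for $h_{Th}$: by the relation $4e^{-2r}h_r\to h_{Th}$, the cutoff-and-counterterm procedure of Section \ref{reviewvol} applied with this foliation produces precisely $V_C(X) - \frac{1}{4}L_m(l)$ in the $r\to\infty$ limit, since the $e^{2R}$-divergent piece reproduces $\tfrac14\mathrm{Area}(h_{Th})$ and the $R$-linear piece reproduces $-\tfrac14L_m(l)$ via the mean curvature of $M_r$.

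Next I would write $V_R^{Th} - V_R$ as a conformal anomaly. Since $h$ and $h_{Th}$ are in the same conformal class, $h_{Th}=e^{2\sigma}h$ for some $\sigma\geq 0$ (grafting only increases lengths). The Weyl variation formula implicit in the Fefferman--Graham expansion of Section \ref{reviewvol} gives, schematically,
\[
V_R^{Th} - V_R \;=\; \mathcal L[\sigma] \;:=\; -\frac{1}{4}\int_{\partial_\infty X}\!\bigl(|d\sigma|_h^2 + 2 K_h\,\sigma\bigr)\, da_h,
\]
with $K_h=-1$. The inequality of the theorem therefore amounts to an upper bound $\mathcal L[\sigma]\leq C(\partial X)$ depending only on $\chi(\partial X)$.

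To establish that bound I would use the Liouville equation $\Delta_h\sigma = 1 - K_{h_{Th}}e^{2\sigma}$, together with two structural facts about the Thurston metric: $K_{h_{Th}}\leq 0$ everywhere (and $=0$ on the grafted region, $=-1$ on the hyperbolic part), and $\int e^{2\sigma}\,da_h = \mathrm{Area}(h_{Th}) = 2\pi|\chi(\partial X)| + L_m(l)$. Integrating by parts, one rewrites $\mathcal L[\sigma]$ as a sum of two non-negative terms on the hyperbolic and flat regions, and the apparent $L_m(l)$-growth of the flat-region contribution is cancelled by the corresponding $-\tfrac14 L_m(l)$ already absorbed in Step 1 via the identification with $V_C(X)-\tfrac14 L_m(l)$. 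What remains is a Gauss-Bonnet-type residue of order $|\chi(\partial X)|$.

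The main obstacle will be that $h_{Th}$ is only $C^{1,1}$ (in the case of a discrete bending lamination) and even less regular for a general measured lamination, so the Liouville/anomaly calculation is not directly applicable. I would handle this by approximation: pick weighted multicurves $l_n\to l$ in the weak* topology on measured laminations, which produce Thurston metrics $h_{Th}^{(n)}$ that are piecewise-smooth with seam curves along which $\sigma_n$ is $C^{1,1}$; further smooth each $h_{Th}^{(n)}$ across its seams by a family of smooth conformal factors $\sigma_n^\epsilon$. The anomaly bound applies to each $\sigma_n^\epsilon$, and by construction the topological constant is uniform in $n$ and $\epsilon$. Continuity of $V_C(X)$ and of $L_m$ under $l_n\to l$, combined with lower semicontinuity of $V_R$ under such small conformal deformations, then yields the claimed inequality in the limit.
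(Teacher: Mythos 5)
Your Steps 1 and 2 coincide with the paper's own setup: the identification of $V'_R$ with the renormalized volume computed with $h_{Th}$ as representative metric is exactly the last paragraph of Section \ref{ssc:thurston-metric}, and your anomaly formula is Definition \ref{df:polyakov} together with the Polyakov formula stated right after it (the paper's definition carries the opposite sign on the curvature term, but its proof bounds every contribution in absolute value, so nothing there hinges on this). The genuine gap is Step 3, which is where the entire content of the theorem lies. Carry your integration by parts out explicitly: with $K_h=-1$ the Liouville equation reads $\Delta_h\sigma=-1-K_{h_{Th}}e^{2\sigma}$ (not $1-K_{h_{Th}}e^{2\sigma}$; integrating over $\partial X$ and using Gauss--Bonnet shows your version is inconsistent), and one finds
\begin{equation*}
\mathcal{L}[\sigma]\;=\;\frac14\int_{\partial X}\sigma\,da_h\;-\;\frac14\int_{\partial X}\sigma\,K_{h_{Th}}\,da_{h_{Th}}\,,
\end{equation*}
which is indeed a sum of two non-negative terms (since $\sigma\geq 0$ and $K_{h_{Th}}\leq 0$), the second supported where $K_{h_{Th}}<0$. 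But non-negativity argues in the \emph{wrong direction}: it gives a lower bound on $\mathcal{L}[\sigma]=V'_R-V_R$, essentially re-deriving the already-known side of Lemma \ref{lm:compare}, whereas Theorem \ref{tm:compare2} requires an \emph{upper} bound on these integrals by a topological constant. The structural facts you list cannot supply that bound. From the area identity and Jensen's inequality one only gets $\int_{\partial X}\sigma\,da_h\leq \pi|\chi(\partial X)|\log\bigl(1+L_m(l)/(2\pi|\chi(\partial X)|)\bigr)$, which is unbounded precisely in the regime of interest ($L_m(l)\to\infty$, e.g.\ when a compressible curve is pinched); worse, $\sigma\equiv c$ with $K_{h_{Th}}\equiv -e^{-2c}\in[-1,0]$ satisfies everything you actually use and makes $\mathcal{L}[\sigma]=\pi|\chi(\partial X)|\,c$ arbitrarily large. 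What really makes the integrals bounded is a localization statement: $\sigma$ can be large only deep inside Margulis tubes, where the hyperbolic area element is exponentially small. Establishing this is the core of the paper's proof and uses inputs that appear nowhere in your outline: the uniform $C^1$ bound on $\sigma$ on the thick part (Lemma \ref{lm:C1}, a substantive fact about convex co-compact geometry quoted from the literature, not a consequence of curvature and area data), the explicit flat-cylinder model $e^{u_0(z)}=2\pi/(l_\gamma\cosh(z))$ inside each tube (Lemma \ref{lm:h-h_gamma}), and the maximum-principle control of the remaining correction $u_\Delta$ (Lemma \ref{lm:C2}).

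Relatedly, your proposed mechanism for the flat region --- that its ``apparent $L_m(l)$-growth is cancelled by the $-\frac14 L_m(l)$ already absorbed in Step 1'' --- is not an argument. Once $V'_R$ is identified with the Thurston-metric renormalized volume, the quantity to bound is exactly $\mathcal{L}[\sigma]=V'_R-V_R$; the term $-\frac14 L_m(l)$ is part of what $V'_R$ \emph{is}, not an extra quantity available to offset growth of $\mathcal{L}[\sigma]$. If $\mathcal{L}[\sigma]$ grew linearly in $L_m(l)$, the theorem would simply be false; the truth, which must be proved, is that it does not grow at all, for the localization reason above. Finally, your regularity step is both heavier than you acknowledge and unnecessary: replacing the bending lamination $l$ by approximating weighted multicurves $l_n$ changes the three-manifold itself (by Bonahon--Otal the bending data determines $X$), so you would need continuity of $V_R$, $V_C$ and $L_m$ under geometric limits of manifolds rather than under a conformal smoothing on a fixed boundary; and none of this is needed, since the Thurston metric is $C^{1,1}$, so $\sigma$ is locally Lipschitz and the functional $\mathcal{L}[\sigma]$ makes sense as it stands --- which is how the paper proceeds.
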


Note that the opposite inequality $V_R\leq V'_R$ holds, up to an additive constant depending on normalization, as already stated in Lemma \ref{lm:compare}.

The heuristic idea of the proof of Theorem \ref{tm:compare2} is quite simple.
\begin{itemize}
\item The difference $V'_R-V_R$ can be expressed as an integral of a function of the conformal factor between the hyperbolic metric and the Thurston metric, see Definition \ref{df:polyakov}. Or, more specifically, in terms of the function $u$ such that $h_{Th}=e^{2u}h$. 
\item The contribution to this integral of the ``thick'' part of $\partial X$ -- the set of points where the injectivity radius for $h$ is bounded from below -- is uniformly bounded, because the conformal factor and its gradient are bounded in this region, see Lemma \ref{lm:C1}. So we can focus on the long thin tubes around closed geodesics which are short for $h$.
\item On those tubes, the Thurston metric can be approximated by a flat metric on a long cylinder of perimeter $2\pi$. We approximate the Thurston metric $h_{Th}$ by such a flat metric $h_\gamma$, and write $u=u_0+u_\Delta$, where $h_\gamma=e^{2u_0}h$, and $h_{Th}=e^{2u_\Delta}h_\gamma$.
\item The integral term corresponding to $u_0$ can then be explicitly computed (see Lemma \ref{lm:h-h_gamma}) and it is bounded.
\item Moreover, one can find sufficient bounds on $u_\Delta$ to show that the correction coming from $u_\Delta$ is also bounded (see Lemma \ref{lm:C2}).
\end{itemize}

We now proceed with the proof.

\begin{definition} \label{df:polyakov}
  Let $S\subset \partial X$, let $g$ be a Riemannian metric on $S$, and let $u:S\to \R$ be a function. Let
  $$ W_S(e^{2u}g, g) = -\frac 14\int_S (\| du\|_{g}^2 - 2K_{g}u) da_{g}~. $$
\end{definition}

  It follows from the ``Polyakov formula'' for the dependence of the renormalized volume on the metric at infinity that 
$$ V'_R-V_R = W_{\partial X}(h_{Th},h)~. $$
Moreover $h_{Th}$ is conformal to the Poincar\'e metric $h$, so we can write
$$ h_{Th}=e^{2u} h~, $$
for a function $u:\partial X\to \R$.

We will use the following well-known $C^1$ bound on $u$ in the ``thick'' part of $x\in (\partial X, h)$.

\begin{lemma} \label{lm:C1}
  There exists a constant $C_1>0$ such that for all $x\in (\partial X, h)$ where the injectivity radius of $h$ is at least $\epsilon_0/2$, $u\leq C_1$ and $\| du\|_h\leq C_1$.
\end{lemma}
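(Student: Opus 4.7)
The plan is to work locally in the universal cover at each point of the thick part, using the injectivity radius lower bound to extract uniform geometric estimates. Let $x\in(\partial X,h)$ have injectivity radius at least $\epsilon_0/2$, and lift to $\tilde x\in\partial_\infty\tilde X$. The uniformization map $f\colon\DD\to\partial_\infty\tilde X$, normalized so $f(0)=\tilde x$, is a local isometry from the Poincar\'e disk to $(\partial_\infty\tilde X,h)$ whose restriction to the hyperbolic disk of radius $\epsilon_0/4$ about $0$ is univalent. Everything that follows takes place in this embedded disk, so the constants obtained will depend only on $\epsilon_0$.

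First I would establish the pointwise bound $u\leq C_1$. Using the descriptions of the Thurston metric from Section~\ref{ssc:thurston-metric}, namely $h_{Th}=\lim_{r\to\infty}4e^{-2r}h_r$ from the equidistant foliation, or equivalently the metric pushed from the unit normal bundle $N\partial C(X)$ via the normal exponential, the pullback $f^*h_{Th}$ is a smooth metric in the standard conformal class of $\DD$. The comparability results recalled in Section~\ref{ssc:boundary} say that the thin parts of $h$ and of the convex-core metric $m$ coincide, so in the thick part the measured bending lamination cannot concentrate heavily on the geodesic ball around $\tilde x$. Consequently the grafting inserts which produce $h_{Th}$ from $m$ have bounded total width on the disk, and the Epstein-type (or normal-exponential) description of $N\partial C(X)$ in the thick region has uniformly controlled geometry; combining these gives a uniform bound on the conformal factor between $h_{Th}$ and $h$ at $x$, depending only on $\epsilon_0$ and $\chi(\partial X)$.

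Second, for the gradient bound I would use the Liouville equation for the conformal change $h_{Th}=e^{2u}h$. Since $K_h=-1$ and $K_{h_{Th}}\in[-1,0]$ pointwise, $u$ satisfies
\[
-\Delta_h u \;=\; 1+K_{h_{Th}}e^{2u}
\]
on $\partial_\infty X$. In the thick part, the right-hand side is uniformly bounded in $L^\infty$ because $u$ is already bounded by the first step and $K_{h_{Th}}e^{2u}\in[-e^{2C_1},0]$. Applying standard interior elliptic regularity for the Laplacian on the embedded hyperbolic disk of radius $\epsilon_0/2$ around $x$ then yields a uniform $C^{1}$ estimate on $u$, hence the bound on $\|du\|_h$.

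The main obstacle I expect is the pointwise bound on $u$: the gradient bound is essentially an automatic consequence of elliptic regularity once $u$ is known to be bounded, but controlling the conformal factor between $h_{Th}$ and $h$ requires a careful analysis of how the grafting structure of the Thurston metric interacts with the thick-part hypothesis. In particular, one must exploit the coincidence of thin parts for $h$ and $m$, and the uniform geometry of $N\partial C(X)$ in the thick region, to rule out any local blow-up of the ratio $h_{Th}/h$ arising from a nearby concentration of the bending lamination.
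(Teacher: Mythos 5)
Your second half---the gradient bound---is essentially the paper's own argument: you derive a uniform bound on $\Delta_h u$ from the Liouville (curvature) equation together with the sup bound on $u$ and the fact that $K_{h_{Th}}\in[-1,0]$, and then invoke interior elliptic estimates on the embedded disk supplied by the injectivity radius hypothesis. The paper does exactly this, except that it unwinds ``standard interior elliptic regularity'' into an explicit decomposition $u=v+w$ on the embedded disk, with $v$ harmonic with boundary values $u$ and $w$ vanishing on the boundary with $\Delta_h w=\Delta_h u$, bounding $dv$ by the boundedness of $u$ and $dw$ by a Green's function representation. So that part is correct and matches.

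The genuine gap is in the first half, the pointwise bound $u\leq C_1$, which is exactly the step you yourself flag as the ``main obstacle''---and your sketch does not close it. Note that the paper does not prove this bound either: it cites the proof of Theorem 2.17 of \cite{bridgeman-brock-bromberg}. Your proposed chain---thin parts of $h$ and $m$ coincide, hence the bending lamination ``cannot concentrate heavily'' near $x$, hence the ``grafting inserts have bounded total width'' on the disk, hence the conformal factor is bounded at $x$---is a sequence of assertions each of which is either imprecise or a nontrivial theorem in its own right. The bound on the mass of the bending lamination in the thick part of $\partial C(X)$ does not follow from the thin-part comparability results recalled in Section \ref{ssc:boundary}; it is a separate Bridgeman--Canary-type theorem (in the family of \cite{bridgeman-canary:bounding}), which the paper invokes independently in the proof of Theorem \ref{tm:asymptotics}. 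Moreover, the Thurston metric is in general not obtained by grafting along a weighted multicurve (the bending lamination need not be a multicurve), so ``bounded total width of grafting inserts on the disk'' is not well-defined as stated; and even granting such control, passing from bounded bending mass near $x$ to a pointwise bound on $e^{2u}$ requires quantitative control of the normal exponential map from $N\partial C(X)$ to $\partial_\infty X$, uniform over all convex co-compact $X$ with the given boundary topology. Making that rigorous is essentially the content of the theorem the paper cites, so as written your first step restates what needs to be proved rather than proving it.
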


\begin{proof}
  The bound on $u$ can be found in the proof of Theorem 2.17 in \cite{bridgeman-brock-bromberg}.

  The bound on $du$ then follows from the bound on $u$ and the bound on $\Delta u$ together with the lower bound on the injectivity radius. Although this estimate is well-known to analyst, we include an informal argument for completeness. Let $x$ be such a point, and let $r>0$ be such that the disk $D(x,r)$ of center $x$ and radius $r$ is embedded. We can write the restriction of $u$ to $D(x,r)$ as
  $$ u=v+w~, $$
  with
  $$ v_{|\partial D(x,r)}=u_{|\partial D(x,r)}~, ~~\Delta_h v=0~, $$
  $$ w_{|\partial D(x,r)}=0~, ~~ \Delta_hw=\Delta_hu~. $$
  Then $w$ can be written as an integral over $D(x,r)$ of Green functions for $\Delta_h$ on $D(x,r)$, multiplied by $\Delta_hu$, which is uniformly bounded. The uniform bound on $dw$ at $x$ follows. In addition, $dv$ is uniformly bounded at $x$ because $u$ (and therefore also $v$) is uniformly bounded on $\partial D(x,r)$. The uniform bound on $du$ at $x$ follows.
\end{proof}

We denote by $T_\gamma$ the ``Margulis tube'' associated to $\gamma$ in the hyperbolic metric $h$. That is, we fix a constant $\epsilon_0>0$, and let $T_\gamma$ be the set of points at distance at most $r$ (from $h$) from the geodesic representative of $\gamma$, with $r$ chosen so that the boundary of $T_\gamma$ is the disjoint union of two closed curves of length $\epsilon_0$ -- this is possible if $\epsilon_0$ is small enough. This tube $T_\gamma$ can also be defined as the connected component of the geodesic representative of $\gamma$ in the set of points in $(\partial X,h)$ where the injectivity radius is at most $\epsilon_0/2$.

This tube is also equipped with a standard Euclidean metric $h_\gamma$ conformal to the restriction of $h$ to $T_\gamma$. We choose this metric $h_\gamma$ to be isometric to $S^1\times [-L_\gamma,L_\gamma]$, where $L_\gamma$ will be determined below.

\begin{lemma} \label{lm:h-h_gamma}
  There exists a constant $C_0$ such that $|W_{T_\gamma}(h_\gamma,h)|\leq C_0$.
\end{lemma}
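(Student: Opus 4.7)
The plan is to choose Fermi coordinates on $T_\gamma$, identify in closed form the conformal factor relating $h_\gamma$ to $h$, and then evaluate $W_{T_\gamma}(h_\gamma,h)$ by direct integration, verifying that the terms that appear to blow up as the core length shrinks combine into a bounded expression.

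Let $\ell_\gamma$ denote the hyperbolic length of the core geodesic for $h$, and parametrize $T_\gamma$ by Fermi coordinates $(\rho,\theta)$, $|\rho|\leq r_\gamma$, $\theta\in \R/\ell_\gamma\mathbb{Z}$, in which
$$ h = d\rho^2 + \cosh^2(\rho)\, d\theta^2, \qquad \ell_\gamma\cosh r_\gamma = \epsilon_0. $$
The Gudermannian substitution $s = \mathrm{gd}(\rho)$ together with the rescaling $\phi = 2\pi\theta/\ell_\gamma$ identifies $(T_\gamma, h_\gamma)$ with a flat cylinder of circumference $2\pi$; tracing the conformal factors gives $h_\gamma = e^{2u_0}h$ with
$$ u_0 = \log\!\left(\frac{2\pi}{\ell_\gamma\cosh\rho}\right), \qquad \|du_0\|_h^2 = \tanh^2\rho. $$
As a consistency check, $\Delta_h u_0 = -1$, matching the conformal passage from $K_h=-1$ to $K_{h_\gamma}=0$.

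Next, substitute into Definition \ref{df:polyakov} with $K_h=-1$ and $dA_h=\cosh\rho\,d\rho\,d\theta$ to obtain
$$ W_{T_\gamma}(h_\gamma,h) = -\tfrac{1}{4}\int_{T_\gamma}\bigl(\tanh^2\rho+2u_0\bigr)\,dA_h. $$
The first integrand contributes $2\ell_\gamma(\sinh r_\gamma - \mathrm{gd}(r_\gamma))$. For the second, split $u_0=\log(2\pi/\ell_\gamma)-\log\cosh\rho$: the constant piece gives $2\ell_\gamma\sinh r_\gamma\log(2\pi/\ell_\gamma)$, while integration by parts computes
$$ \int_0^{r_\gamma}\log\cosh\rho\cdot\cosh\rho\,d\rho = \log\cosh r_\gamma\cdot\sinh r_\gamma - \sinh r_\gamma + \mathrm{gd}(r_\gamma). $$
Using $\log\cosh r_\gamma = \log(\epsilon_0/\ell_\gamma)$ merges the two logarithmic contributions into the bounded quantity $2\ell_\gamma\sinh r_\gamma\log(2\pi/\epsilon_0)$.

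Collecting, every surviving factor is controlled by $\epsilon_0$ alone: $\ell_\gamma\sinh r_\gamma = \sqrt{\epsilon_0^2-\ell_\gamma^2} \leq \epsilon_0$ by the defining equation, $\ell_\gamma\,\mathrm{gd}(r_\gamma)\leq \pi\epsilon_0/2$, and the constant $\log(2\pi/\epsilon_0)$ does not depend on $\ell_\gamma$. This yields an explicit $C_0$ depending only on the Margulis constant $\epsilon_0$. The main obstacle, and essentially the entire content of the lemma, is this cancellation: the two constituent integrals $\int\log(2\pi/\ell_\gamma)\,dA_h$ and $\int\log\cosh\rho\,dA_h$ each contain a $\log(1/\ell_\gamma)$ piece, with opposite signs, and the integration by parts must be arranged carefully so that these combine into the finite residue. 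This is ultimately why the flat model $h_\gamma$ is the correct leading-order approximation to the Thurston metric inside a Margulis tube.
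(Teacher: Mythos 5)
Your proof is correct and takes essentially the same route as the paper's: both identify the explicit conformal factor $e^{u_0}=2\pi/(\ell_\gamma\cosh\rho)$ in distance coordinates from the core geodesic, and both exploit the defining relation $\ell_\gamma\cosh r_\gamma=\epsilon_0$ so that the potentially divergent $\log(1/\ell_\gamma)$ contributions cancel into quantities controlled by $\epsilon_0$ alone. The only immaterial difference is that you evaluate $\int_{T_\gamma}\tanh^2\rho\,da_h$ in closed form, whereas the paper simply uses $\|du_0\|_h\leq 1$ and bounds that term by the area of the tube.
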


\begin{proof}
  By definition, $h_\gamma$ is conformal to $h$, so there exists a function $u_0:T_\gamma\to \R$ such that
  $$ h_\gamma=e^{2u_0}h $$
  on $T_\gamma$, with $u_0$ constant on $\partial T_\gamma$.

  This function $u_0$ is clearly invariant by rotation, and only depends on the distance $z$ to the core curve of $T_\gamma$ (the geodesic representative of $\gamma$ for $h$). Since the curve composed of points at (oriented) distance $z$ from the core curve has length $l_\gamma \cosh(z)$ for $h$, while it has length $2\pi$ for $h_\gamma$. Since each boundary component of $T_\gamma$ has length $2\pi$ for $h$,
  $$ e^{u_0(z)} = \frac{2\pi}{l_\gamma \cosh(z)}~, $$
  and therefore
  $$ |u_0'(z)| = |- \tanh(z)|\leq 1~. $$

  As a consequence, 
  \begin{eqnarray*}
    |W_{T_\gamma}(h_\gamma,h)| & = & \frac 14\left|\int_{T_\gamma} \| du_0\|^2_h + 2u_0 \right| da_h \\
                               & \leq & \frac 14{\rm Area}(T_\gamma,h) + \frac 12 \int_{T_\gamma} u_0 da_h \\
    & \leq & \frac 14{\rm Area}(T_\gamma,h) + \int_0^{L_\gamma} l_\gamma\cosh(z) \log\left(\frac{2\pi}{l_\gamma\cosh(z)}\right) dz~,
  \end{eqnarray*}
  where $L_\gamma$ is the half-length of $T_\gamma$ for $h$, that is, such that
  $$ l_\gamma \cosh(L_\gamma)=\epsilon_0~. $$
  A direct computation shows that
  $$  \int_0^{L_\gamma} l_\gamma\cosh(z) \log(l_\gamma\cosh(z))dz = $$
  $$ = l_\gamma \sinh(L_\gamma)\log(l_\gamma \cosh(L_\gamma)) +2l_\gamma (\pi/4-\rm{arctan}(e^{-L_\gamma}))-l_\gamma \sinh(L_\gamma)~, $$
  and since $l_\gamma\cosh(L_\gamma)=\epsilon_0$,
  $$ \left|  \int_0^{L_\gamma} l_\gamma\cosh(z) \log(l_\gamma\cosh(z))dz \right| \leq 
  \epsilon_0\log(\epsilon_0) + \epsilon_0 + l_\gamma \frac \pi 2~. $$

  Finally,
  $$ \int_0^{L_\gamma} l_\gamma \cosh(z)dz = l_\gamma\sinh(L_\gamma)\leq \epsilon_0~. $$
  Adding the terms in the upper bound on $|W_{T_\gamma}(h_\gamma,h)|$ yields the result.
\end{proof}

We can also compare the flat metric $h_\gamma$ on $T_\gamma$ to the Thurston metric $h_{Th}$. Since $h_{Th}$ is conformal to $h$, it is also conformal to $h_\gamma$, so we can write
$$ h_{Th} = e^{2u_\Delta} h_\gamma~, $$
for a function $u_\Delta:T_\gamma\to \R$. By definition, $u=u_0+u_\Delta$ on $T_\gamma$. The next lemma states a bound on $u_\Delta$, with fixed constants, over $T_\gamma$.

\begin{lemma} \label{lm:C2}
  There exists a constant $C_2>0$ such that, on $T_\gamma$, 
  $$ |u_\Delta|\leq C_2 $$
  and
  $$ \int_{T_\gamma}\| du_\Delta\|^2_{h_{Th}}da_{h_{Th}}\leq C_2~. $$
\end{lemma}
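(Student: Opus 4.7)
The plan is to prove the two bounds separately, controlling $u_\Delta$ on $T_\gamma=S^1\times[-L_\gamma,L_\gamma]$ first in $L^\infty$ and then in the Dirichlet energy. Three inputs come together: uniform boundary control on $u_\Delta$ along $\partial T_\gamma$ from Lemma \ref{lm:C1} and the explicit form of $u_0$ from Lemma \ref{lm:h-h_gamma}; the curvature structure of the Thurston metric (it equals the hyperbolic metric $m$ on $\partial C(X)$ grafted along the bending lamination, hence has curvature distributionally in $[-1,0]$); and an elliptic integration by parts on the flat cylinder $(T_\gamma,h_\gamma)$.

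First, on $\partial T_\gamma$ the injectivity radius of $h$ equals $\epsilon_0/2$, so Lemma \ref{lm:C1} yields $|u|\le C_1$ and $\|du\|_h\le C_1$ there, while Lemma \ref{lm:h-h_gamma} gives the fixed constant value $u_0(\pm L_\gamma)=\log(2\pi/\epsilon_0)$ and the bound $|u_0'|\le 1$. This produces uniform boundary control of $u_\Delta=u-u_0$ and of its derivatives, independent of $\gamma$.

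Second, since $h_\gamma$ is flat, $K_{h_{Th}}=-e^{-2u_\Delta}\Delta_{h_\gamma}u_\Delta$, so the curvature range of $h_{Th}$ translates into
\[
0\le \Delta_{h_\gamma} u_\Delta \le e^{2u_\Delta} \quad\text{on } T_\gamma.
\]
Subharmonicity together with the maximum principle gives the upper bound $u_\Delta\le \sup_{\partial T_\gamma}u_\Delta\le C$. The matching lower bound is the crux: the curvature inequality alone cannot produce it, and one must exploit the specific calibration of $h_\gamma$ (circumference $2\pi$) from Section \ref{ssc:r-thurston}, which is chosen precisely so as to match the natural asymptotic scale of $h_{Th}$ on long thin tubes. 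Concretely, using the description of $h_{Th}$ as the natural metric on the unit normal bundle $N\partial C(X)$ pushed to $\partial_\infty X$ (Section \ref{ssc:thurston-metric}), and the fact that a short curve in $h$ is short in $m$ (Section \ref{ssc:boundary}), one compares $h_{Th}$ directly with $h_\gamma$ on $T_\gamma$ to deduce $u_\Delta\ge -C$. This geometric comparison is the main obstacle.

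Third, the Dirichlet bound follows by integration by parts:
\[
\int_{T_\gamma}\|du_\Delta\|^2_{h_\gamma}\,da_{h_\gamma}
= -\int_{T_\gamma} u_\Delta\,\Delta_{h_\gamma}u_\Delta\,da_{h_\gamma}
+ \int_{\partial T_\gamma} u_\Delta\,\partial_\nu u_\Delta\,ds_{h_\gamma}.
\]
The bulk term is controlled by $\|u_\Delta\|_\infty$ times $\int_{T_\gamma}\Delta_{h_\gamma}u_\Delta\,da_{h_\gamma}=\int_{\partial T_\gamma}\partial_\nu u_\Delta\,ds_{h_\gamma}$, which is uniformly bounded by the boundary $C^1$ estimates from the first step (together with the explicit form of $u_0$ near $\partial T_\gamma$ to convert $\|du\|_h$ into $\|du_\Delta\|_{h_\gamma}$). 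The boundary term is bounded the same way. Finally, conformal invariance of $\|du\|^2\,da$ in dimension two identifies $\int_{T_\gamma}\|du_\Delta\|^2_{h_{Th}}\,da_{h_{Th}}$ with $\int_{T_\gamma}\|du_\Delta\|^2_{h_\gamma}\,da_{h_\gamma}$, giving the claim in the form stated.
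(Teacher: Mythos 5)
Your sign analysis and your treatment of the upper bound are correct, and they take a genuinely different route from the paper. Since $h_{Th}=e^{2u_\Delta}h_\gamma$ with $h_\gamma$ flat, one indeed has $\Delta_{h_\gamma}u_\Delta=-K_{Th}\,e^{2u_\Delta}\geq 0$, so $u_\Delta$ is subharmonic, and the maximum principle plus the boundary control coming from Lemma \ref{lm:C1} and from $u_0\equiv\log(2\pi/\epsilon_0)$ on $\partial T_\gamma$ gives $u_\Delta\leq C$ on all of $T_\gamma$. The paper instead obtains the upper bound from a geometric argument: if $u_\Delta\geq c$ at a point, then $u_\Delta\geq 1$ on a whole $h_{Th}$-disk of radius $2\pi$, hence on a closed curve homotopic to $\gamma$ of $h_{Th}$-length at most $3\pi$, whose $h_\gamma$-length would then be less than $2\pi$, impossible in a flat cylinder of circumference $2\pi$; the paper's maximum-principle step is invoked for the \emph{lower} bound, but since $u_\Delta$ is subharmonic that principle controls the supremum, not the infimum, so on this point your reading of the signs is the defensible one. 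Your energy estimate (integration by parts, bulk term bounded by $\|u_\Delta\|_\infty\int_{\partial T_\gamma}|\partial_\nu u_\Delta|\,ds$ via the divergence theorem, conformal invariance of the Dirichlet integral) is also fine, and parallels the paper's computation, \emph{provided} a two-sided $L^\infty$ bound on $u_\Delta$ is available.

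That is exactly where the genuine gap lies: you never prove the lower bound $u_\Delta\geq -C_2$; you call it ``the main obstacle'' and gesture at a comparison of $h_{Th}$ with $h_\gamma$ without carrying it out. Moreover, no argument built only from the inputs you have assembled can close it. The hypotheses you use --- $0\leq\Delta_{h_\gamma}u_\Delta\leq e^{2u_\Delta}$ (i.e.\ curvature in $[-1,0]$) together with bounded boundary values and boundary gradient --- are also satisfied by the conformal factor of the hyperbolic metric $h$ itself relative to $h_\gamma$, namely $\phi=-u_0$: one has $K_h=-1$, hence $\Delta_{h_\gamma}\phi=e^{2\phi}$, and $\phi\equiv-\log(2\pi/\epsilon_0)$, $|\phi'|\leq 1$ on $\partial T_\gamma$; yet $\phi(0)=\log(l_\gamma/2\pi)\to-\infty$ as $\gamma$ is pinched. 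So the lower bound must use a property of $h_{Th}$ that $h$ does not share on the tube. The relevant one is a systole bound: every closed curve in $T_\gamma$ homotopic to $\gamma$ has $h_{Th}$-length bounded below by a uniform constant (close to $2\pi$), because it must cross the grafted flat strips of the bending lamination, whose total width $i(\gamma,l)$ tends to $2\pi$ when $\gamma$ is compressible (compare Section \ref{ssc:maximal} and Lemma \ref{lm:ext}); the metric $h$ fails this, its essential curves having length $l_\gamma\cosh z$, which is tiny in the middle of the collar. With that systole bound one can argue symmetrically to the paper's upper-bound proof: if $u_\Delta\leq -c$ at a point, a propagation estimate would force $u_\Delta\leq -1$ along the essential $h_\gamma$-circle of length $2\pi$ through that point, making its $h_{Th}$-length undercut the systole, a contradiction. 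None of this appears in your write-up; until it does, the $L^\infty$ claim is only half proven, and the Dirichlet bound, which consumes $\|u_\Delta\|_\infty$, is unproven as well.
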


\begin{proof}
  Since $h_\gamma = e^{-2u_\Delta} h_{Th}$ and $h_\gamma$ is flat, the curvature of $h_{Th}$ satisfies
  $$ \Delta_{Th}u_\Delta =K_{Th}~. $$
  Since $K_{Th}\leq 0$, $\Delta_{Th}u_{\Delta}\leq 0$, so $u_\Delta\geq C_1$ by the maximum principle.

  In addition, $\Delta_{Th} u_\Delta\geq -1$. It then follows from standard arguments (using the fact that $K_{Th}\in [-1,0]$) that there exists $c>0$ such that if $u_\Delta(x_0)\geq c$, then $u_\Delta\geq 1$ on the disk of center $x_0$ and radius $2\pi$ in $h_{Th}$. Since $(T_\gamma, h_{Th})$ is approximated, outside a neighborhood of its boundary, by a tube of perimeter $2\pi$, there exists a closed curve $\gamma'$ homotopic to $\gamma$ going through $x_0$, of length less than $3\pi$ for $h_{Th}$. Then $u_\Delta\geq 1$ on $\gamma'$. It would then follow that the length of $\gamma'$ for $h_\gamma$ is at most $3\pi e^{-2}<2\pi$, a contradiction since $(T_\gamma,h_\gamma)$ is isometric to a cylinder of perimeter $2\pi$ and $\gamma'$ is homotopic to $\gamma$. So $u_\Delta\leq c$ on $T_\gamma$. We can already conclude that, for a certain $C_2>0$, $ |u_\Delta|\leq C_2$ on $T_\gamma$.

  Notice that
  \begin{eqnarray*}
    \int_{T_\gamma}\| du_\Delta\|^2_{h_{Th}} da_{h_{Th}} & = & \int_{T_\gamma} u_\Delta\Delta_{Th} u_{\Delta} da_{h_{Th}} + \int_{\partial T_\gamma} u_\Delta du_\Delta(n) ds \\
    & = &\int_{T_\gamma} u_\Delta K_{Th} da_{h_{Th}} + \int_{\partial T_\gamma} u_\Delta du_\Delta(n) ds~, 
  \end{eqnarray*}
  so
\begin{eqnarray*}
  \left| \int_{T_\gamma}\| du_\Delta\|^2_{h_{Th}} da_{h_{Th}} \right| & \leq & C_2\left|\int_{T_\gamma} K_{Th} da_{h_{Th}}  \right| + \int_{\partial T_\gamma}\left|u_\Delta du_\Delta(n)\right|ds \\
                                                           & \leq & C_2 {\rm Area}(T_\gamma,m) + 2C_1^2\epsilon_0~.
\end{eqnarray*}
The uniform bound on the integral follows.
\end{proof}

\begin{proof}[Proof of Theorem \ref{tm:compare2}]
  Let $\gamma_1,\cdots, \gamma_n$ be the closed geodesics of length less than $\epsilon_0$ in $(\partial X, h)$. If $\epsilon_0$ is small enough, those short closed geodesics are disjoint, and they are the core curves of disjoint long thin tubes in $(\partial X,h)$, denoted here by $T_{\gamma_1}, \cdots, T_{\gamma_n}$. Since the $\gamma_i$ are disjoint, $n\leq (3/2)|\Chi(\partial X)|$. 

  For each $i\in \{ 1,\cdots, n\}$, we have seen that
  \begin{eqnarray*}
    \left|W_{T_{\gamma_i}}(h_{Th},h)\right| & = & \left|\frac 14\int_{T_{\gamma_i}}\| d(u_0+u_\Delta)\|^2_h+2(u_0+u_\Delta) da_h \right|~.
  \end{eqnarray*}
  Since $u=u_0+u_\Delta>0$, 
  \begin{eqnarray*}
    \left|W_{T_{\gamma_i}}(h_{Th},h)\right|
    & \leq & \left|\frac 12\int_{T_{\gamma_i}}\|_h du_0\|_h^2 + \|u_\Delta\|^2_h+ u_0+u_\Delta da_h \right| \\
    & \leq & \frac 12\left|\int_{T_{\gamma_i}}\| du_0\|^2_h+2u_0 da_h\right| +
             \frac 12\left|\int_{T_{\gamma_i}}u_0 da_h\right| +
             \frac 12\left|\int_{T_{\gamma_i}}\| du_\Delta\|^2_h + u_\Delta da_h\right| \\
    & \leq & 2|W_{T_{\gamma_i}}(h_{\gamma_i},h)| +
             \frac 12 \left|\int_{T_{\gamma_i}}u_0 da_h\right| +
             \frac 12\int_{T_{\gamma_i}}\| du_\Delta\|^2_{h_{Th}}da_{h_{Th}} +
             \frac 12  \int_{T_{\gamma_i}}u_\Delta da_h~. 
  \end{eqnarray*}
  However we have already seen in the proof of Lemma \ref{lm:h-h_gamma} that the integral of $u_0$ on $T_{\gamma_i}$ is bounded by a fixed constant, say $C'_0$.
  Using Lemma \ref{lm:h-h_gamma} and Lemma \ref{lm:C2}, it follows that
  \begin{eqnarray*}
    |W_{T_{\gamma_i}}(h_{Th},h)| & \leq & 2C_0 + \frac{C'_0}2 + \frac{C_2}2 + \frac{C_2}2{\rm Area}(T_{\gamma_i},h)~.  
  \end{eqnarray*}
  In addition
  $$ |W_{\partial X\setminus (T_{\gamma_1}\cup\cdots\cup T_{\gamma_n})}(h_{Th},h)|\leq \frac{C_1^2+C_1}4 {\rm Area}(\partial X,h)~, $$
  and the result follows.
\end{proof}

\subsection{Asymptotic behavior of the renormalized volume when pinching a compressible curve}
\label{ssc:asymptotic}

Finally we give here a more precise asymptotic description of the behavior of $V'_R$ when a compressible curve is pinched. This analysis can be extended to the case where two or more compressible curves are pinched, with a dominant term which is a sum of terms corresponding to each pinched curve.

\begin{theorem} \label{tm:asymptotics}
  Let $(c_t)_{t\in [0,1)}$ be a smooth curve in $\cT_{\partial X}$, with $\lim_{t\to 1}c_t$ a point in the Weil-Petersson compactification of $\cT_{\partial X}$ corresponding to a hyperbolic metric with one simple compressible closed curve $\gamma$ pinched. Then, as $t\to 1$,
  $$ V'_R(c_t) \sim \frac{-\pi^3}{L_{c_t}(\gamma)}~, $$
  where $L_{c_t}(\gamma)$ is the length of $\gamma$ in the hyperbolic metric $c_t$ on $\partial X$.
\end{theorem}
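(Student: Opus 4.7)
The strategy is to use Theorem \ref{tm:compare2} together with the identity $V'_R = V_C(X) - \tfrac14 L_m(l) + O(1)$ recalled in Section \ref{ssc:thurston-metric}, and Lemma \ref{lm:convexcore} applied to the single pinched compressible curve $\gamma$, which gives $V_C(X(c_t)) = O(1)$ as $t \to 1$. The divergence of $V'_R$ is therefore entirely captured by the length of the measured bending lamination, and the theorem reduces, writing $L := L_{c_t}(\gamma)$, to the sharp asymptotic
\[
L_m(l_{c_t}) \;\sim\; \frac{4\pi^3}{L}.
\]

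For this, I would localize to the standard thin collar $T_\gamma$ around the geodesic representative of $\gamma$ in the Poincar\'e metric $h$. Outside $T_\gamma$ the metric $h$ and the induced metric $m$ on $\partial C(X)$ both have bounded geometry (Section \ref{ssc:boundary}), so the contribution to $L_m(l)$ from outside $T_\gamma$ is bounded uniformly in $L$. Inside $T_\gamma$ the conformal modulus of the Poincar\'e cylinder is $\pi/L + O(1)$, and the Thurston metric $h_{Th}$ — obtained by grafting $m$ by $l$ (Section \ref{ssc:thurston-metric}) — is asymptotically a flat cylinder with circumference $L_{h_{Th}}(\gamma) = L_m(\gamma) + i(\gamma, l)$ and length $(\pi/L)\, L_{h_{Th}}(\gamma)$. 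Now $L_m(\gamma) \to 0$ by Section \ref{ssc:boundary}, and $i(\gamma, l) \to 2\pi$: the Bonahon--Otal condition gives the lower bound $i(\gamma, l) \ge 2\pi$, and the matching upper bound follows from a Gauss--Bonnet analysis of the compressing disc $D \subset C(X)$, which in the Gromov--Hausdorff limit of Lemma \ref{lm:convexcore} degenerates onto an ideal totally geodesic disc whose boundary accumulates bending exactly $2\pi$. Hence the Thurston cylinder in $T_\gamma$ has length $\sim 2\pi^2/L$. Since grafting does not stretch a leaf of $l$ along itself, the $m$-length of a bending leaf crossing $T_\gamma$ agrees with its $h_{Th}$-length up to lower-order corrections, so each such leaf contributes $\sim 2\pi^2/L$. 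Integrating against the transverse measure $\to 2\pi$ gives $L_m(l)|_{T_\gamma} \sim 4\pi^3/L$, and therefore $V'_R(c_t) \sim -\pi^3/L$.

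The main obstacle is the sharp upper bound $i(\gamma, l) \le 2\pi + o(1)$, together with the accompanying statement that the bending measure concentrates sharply in $T_\gamma$ with no stray contribution of order $1/L$ elsewhere. The lower bound $\ge 2\pi$ is elementary Gauss--Bonnet, but the matching upper bound requires controlling how the pleated boundary $\partial C(X)$ wraps around the shrinking compressing disc, which is cleanest done by tracking a lift of $D$ to $\HH^3$ in the pointed Gromov--Hausdorff limit of Lemma \ref{lm:convexcore}, where the limiting configuration can be written down explicitly. An alternative route — which may yield the sharp coefficient more directly — is a variational argument based on Lemma \ref{lm:gradient}: compute the leading asymptotics of the Schwarzian $q$ on the degenerating annulus of modulus $\pi/L$, pair it against the plumbing Beltrami differential parametrizing the pinching family $(c_t)$, and integrate. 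Either approach concentrates the analytic difficulty at the same point, namely extracting sharp numerical constants from the conformal geometry of the degenerating annulus.
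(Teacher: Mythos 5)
Your proposal is correct and follows essentially the same route as the paper's proof: reduce to the asymptotics $L_m(l)\sim 4\pi^3/L_{c_t}(\gamma)$ using $V'_R = V_C(X) - \frac{1}{4}L_m(l) + O(1)$ together with the boundedness of $V_C$ from Lemma \ref{lm:convexcore}, localize to the Margulis tube, extract the bending-leaf length $\sim 2\pi^2/L_{c_t}(\gamma)$ from the modulus $\sim \pi/L_{c_t}(\gamma)$ of the degenerating annulus, and multiply by the total transverse measure $i(\gamma,l)\to 2\pi$. The paper packages the leaf-length step as Lemma \ref{lm:ext} (proved by hyperbolic trigonometry in $X$ plus Maskit's comparison of extremal and hyperbolic lengths) and obtains $i(\gamma,l)\to 2\pi$ by intersecting $\partial C(X)$ with the totally geodesic plane of Section \ref{ssc:maximal} --- precisely the two points you flag as the remaining analytic work.
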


Note that the proof actually shows a little more: in the case where several curves are pinched so as to have (asymptotically) constant length ratio, $V'_R$ is equivalent to a sum of terms corresponding to each of those short curves.

In the next lemma, we consider $T_\gamma$ as a subset of $\partial C(X)$, using the nearest-point projection from $\partial_\infty X$ to $\partial C(X)$. 

\begin{lemma} \label{lm:ext}
  There exists a constant $C_3>0$ such that if the Margulis tube $T_\gamma$ contains a maximal segment of length $2L$ for the induced metric $m$ in the support of the measured bending lamination $l$, then all segments of $l$ in $T_\gamma$ have length in $[2L-C_3, 2L+C_3]$, and the extremal length of $\gamma$ satisfies
  $$ |\ext(\gamma) - \frac\pi L|\leq C_3~. $$
\end{lemma}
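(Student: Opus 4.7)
The plan is to establish both assertions by comparing the geometry of $T_\gamma$ in the induced metric $m$ (where it is approximately a hyperbolic cylinder with short core $\gamma$) with its geometry in the Thurston metric $h_{Th}$ (where, by Section \ref{ssc:thurston-metric}, it is approximately a flat cylinder of circumference $2\pi$). The link between the two metrics is the grafting description of $h_{Th}$.

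First I would prove the statement about segment lengths. Any maximal component of $l \cap T_\gamma$ is a geodesic arc of $(T_\gamma, m)$ with both endpoints on $\partial T_\gamma$. Lifting to the universal cover of $T_\gamma$, which is an infinite hyperbolic strip around a lift $\tilde\gamma$ of $\gamma$, these maximal arcs lift to disjoint geodesic arcs crossing the strip. The key point is that in a long thin hyperbolic strip, two disjoint geodesic arcs crossing from one side to the other cannot have very different ``slopes'' (equivalently, very different winding numbers around $\gamma$ in the cylinder): a large difference in winding would force the arcs to intersect, either directly or via some iterate of the deck translation of length $\ell_m(\gamma)$. Combined with the fact that the total transverse measure of $l$ on the length-$\epsilon_0$ boundary circles $\partial T_\gamma$ is bounded in terms of $\chi(\partial X)$ via the Bonahon--Otal conditions on bending laminations, this forces all maximal arcs of $l \cap T_\gamma$ to have lengths differing by at most a constant $C_3 = C_3(\epsilon_0, \chi(\partial X))$.

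For the extremal length bound, I use that $h_{Th}$ is obtained from $m$ by grafting along $l$, which inserts flat strips transverse to $l$ and preserves the $m$-lengths of leaves of $l$. Therefore, in the approximate flat cylinder $(T_\gamma, h_{Th})$, the images of the maximal arcs of $l \cap T_\gamma$ appear as the straight segments running across the cylinder, and their $h_{Th}$-lengths remain in $[2L - C_3, 2L + C_3]$. Together with the fact (from Section \ref{ssc:thurston-metric}) that the circumference of the approximating flat cylinder is $2\pi + O(1)$, this shows that $(T_\gamma, h_{Th})$ is bi-Lipschitz, with bounded distortion depending only on $\epsilon_0$ and $\chi(\partial X)$, to the flat cylinder $S^1 \times [0, 2L]$ of circumference $2\pi$ and length $2L$. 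Since the extremal length is a conformal invariant and the core curve of a flat cylinder of circumference $2\pi$ and length $2L$ has extremal length exactly $\pi/L$, absorbing the $O(1)$ distortion into $C_3$ gives $|\ext(\gamma) - \pi/L| \leq C_3$.

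The main technical obstacle is the quantitative step in Part 1: verifying that disjoint geodesic arcs across a thin hyperbolic cylinder cannot twist by very different amounts. This either requires a direct hyperbolic-trigonometric computation in the cylinder model (bounding how different two transverse winding numbers can be when all lifts to the strip must remain pairwise disjoint) or an appeal to known structural bounds on geodesic laminations in Margulis tubes. A secondary subtlety is controlling the bi-Lipschitz distortion of the flat-cylinder approximation of $(T_\gamma, h_{Th})$ near $\partial T_\gamma$, where the grafting is not exactly along closed leaves of constant width, so as to convert the $O(1)$ errors in length and perimeter into an additive $O(1)$ error in the extremal length rather than a multiplicative one.
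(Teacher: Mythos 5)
Your strategy is genuinely different from the paper's, but as written it has two real gaps, both in the extremal length step.

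First, what you estimate is the extremal length of the core curve \emph{inside the tube} $(T_\gamma,h_{Th})$, whereas the quantity in the statement (and the one compared with Maskit's hyperbolic-vs-extremal-length result in Theorem \ref{tm:asymptotics}) is the extremal length of $\gamma$ in the whole boundary surface $\partial_\infty X$. Monotonicity only gives $\ext_{\partial_\infty X}(\gamma)\leq \ext_{T_\gamma}(\gamma)$; the reverse inequality up to bounded error needs an argument that the complement of the tube contributes only a bounded amount of modulus. The paper does this explicitly (``since the conformal structure in the thick part of $\partial X$ remains bounded, \ldots the modulus of $\gamma$ in $\partial X$ differs from its modulus in $T_\gamma$ by a bounded quantity''); your proposal never relates the cylinder computation to the surface, so the stated conclusion about $\ext(\gamma)$ does not yet follow. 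Second, quasi-invariance of extremal length under a $K$-bi-Lipschitz map gives \emph{multiplicative} control, $K^{-2}\pi/L\leq \ext\leq K^2\pi/L$. This does imply the literal additive bound $|\ext(\gamma)-\pi/L|\leq C_3$, because $L$ is bounded below on a Margulis tube, but it is strictly weaker than what the paper's proof establishes, namely the sandwich $\pi/(L+2C)\leq \ext_{\partial_\infty X}(\gamma)\leq \pi/(L-C)$, i.e.\ $\pi/\ext(\gamma)=L+O(1)$; that stronger form is what Theorem \ref{tm:asymptotics} actually uses to conclude $L_t\sim\pi/\ext_{c_t}(\gamma)$. Your distortion constant $K$ does not tend to $1$ as the curve is pinched (the distortion near $\partial T_\gamma$ is of fixed size), so your route would not suffice downstream. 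To repair it you would need additive modulus estimates (e.g.\ a Gr\"otzsch-type serial-composition argument showing that a bounded perturbation at the two ends of a long cylinder changes its modulus by $O(1)$). The paper gets this for free by a different mechanism: the normal exponential map $\exp_\infty$ sends the unit normal bundle of a geodesic segment of length $2L\pm O(1)$ in $X$ onto an annulus in $\partial_\infty X$ of modulus $(L\pm O(1))/\pi$ squeezed between subsets and supersets of $T_\gamma$, and monotonicity of the modulus under inclusion does the rest.

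On the first assertion (lengths of the segments), your universal-cover/winding argument can be completed, but the ``main technical obstacle'' you flag is avoidable: the leaves of the bending lamination are geodesics of the ambient hyperbolic $3$-manifold, their endpoints on each component of $\partial T_\gamma$ lie within bounded distance of one another (boundary circles of $h$-length $\epsilon_0$, transferred to $h_{Th}$, $m$ and the ambient metric via the $C^1$ bound of Lemma \ref{lm:C1}), and disjoint essential arcs crossing an annulus are parallel, so compatible lifts to $\mathbb{H}^3$ exist and the triangle inequality for minimizing geodesics gives $|L(c')-L(c)|\leq e^{C_1}\epsilon_0$ directly; this is the paper's argument. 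Your appeal to the Bonahon--Otal bound on the transverse measure plays no role in this step.
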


\begin{proof}
  Let $c$ and $c'$ be two maximal segments in $T_\gamma$ in the support of $l$. Let $c_-, c_+$ be the endpoints of $c$ on $\partial T_\gamma$, and similarly let $c'_-, c'_+$ be the endpoints of $c'$, with $c'_-$ on the same boundary component of $T_\gamma$ as $c_-$. Since the boundary components of $T_\gamma$ have length $\epsilon_0$ for $h$, $c_-$ and $c'_-$ are at distance at most $\epsilon_0/2$ for $h$, and similarly for $c_+$ and $c'_+$.

  Lemma \ref{lm:C1} therefore shows that $c_-$ and $c'_-$ are also at distance at most $e^{C_1}\epsilon_0/2$ in the Thurston metric $h_{Th}$, and similarly for $c_+$ and $c'_+$. But then it follows that they are also at distance at most $e^{C_1}\epsilon_0/2$ in the induced metric $m$ on the $\partial C(X)$, which is smaller than $h_{Th}$. This constant therefore also bounds their hyperbolic distance in $X$. 

  Since $c$ and $c'$ are geodesics for the hyperbolic metric on $X$, it follows from the triangle inequality that the hyperbolic length of $c$ and $c'$ are close:
  $$ |L(c')-L(c)|\leq e^{C_1}\epsilon_0~. $$
  This proves the first point.

  For the second point, let $c_0$ be a geodesic segment of $C(X)$, of length $2L$, centered at a point close to the geodesic representative of $\gamma$ in $\partial C(X)$. The same argument as above for $c$ and $c'$ shows that there exists a constant $C>0$ such that the orthogonal projection of $T_\gamma$ on the geodesic containing $c_0$ is within the set of points at distance at most $C$ from $c_0$ -- we denote this extended segment by $c_{0+}$. Conversely, this orthogonal projection of $T_\gamma$ contains the set of points of $c_0$ at distance at least $C$ from the endpoints, denoted here by $c_{0-}$.

  Let $N^1c_{0+}$ be the unit normal space to $c_{0+}$, that is, the set of unit vectors orthogonal to $c_{0+}$. We consider the exponential map $\exp_\infty:N^1c_{0+}\to \partial_\infty X$ sending a vector $n\in N^1c_{0+}$ to the endpoint at infinity of the geodesic ray defined by $n$. If $\epsilon_0$ is small enough, then $\exp_\infty$ is a diffeomorphism on its image, which is an annulus in $\partial_\infty X$ containing $T_\gamma$.

  An explicit computation -- using for instance the Poincar\'e model of $\HH^3$ -- shows that $\mod(\exp_{\infty} (c_{0-}))=\frac{L-C}\pi$ while  $\mod(\exp_{\infty} (c_{0+}))=\frac{L+C}\pi$. Since the modulus is increasing under inclusion, it follows that
  $$ \frac{L-C}\pi \leq \mod(T_\gamma,h)\leq \frac{L+C}\pi~. $$
  
  Since the conformal structure in the ``thick'' part of $\partial X$ remains bounded, standard arguments then show that the modulus of $\gamma$ in $\partial X$ differs from its modulus in $T_\gamma$ by bounded quantity, that is, replacing $C$ if necessary, any annulus $A$ extending $T_\gamma$ in $\partial X$ satisfies
  $$ \mod(T_\gamma,h) \leq \mod(A,h)\leq \mod(T_\gamma,h)+C~. $$
  
  Therefore the extremal length of $\gamma$ in $\partial X$ satisfies
  $$ \frac\pi{L+2C} \leq \ext_{\partial X}(\gamma) \leq \frac\pi{L-C}~, $$
  and the result follows.
\end{proof}

\begin{proof}[Proof of Theorem \ref{tm:asymptotics}]
  We assume that the hyperbolic length of $\gamma$ goes to zero as $t\to 1$. It then follows from results of Maskit \cite{maskit} that the extremal length of $\gamma$ for $c_t$ is equivalent to
  $$ \ext_c(\gamma) \sim \frac{L_{c_t}(\gamma)}\pi~. $$
  It then follows from Lemma \ref{lm:ext} that if the Margulis tube around $\gamma$ contains a maximal segment of length $2L_t$ in the support of the measured bending lamination, then
  $$ L_t \sim \frac\pi{\ext_{c_t}(\gamma)} \sim \frac{\pi^2}{L_{c_t}(\gamma)}~. $$
  However, the total length of the measured bending lamination in the ``thick'' part of $\partial C(X)$ (the set of points where the injectivity radius is larger than $\epsilon_0/2$) is uniformly bounded.
  Moreover, the intersection with $\gamma$ of the  measured bending lamination $l_t$ on the boundary of the  convex core converges to
  $$ i(\gamma, l_t)\to 2\pi~, $$
  as can be seen by considering the intersection of $\partial C(X)$ with the plane $P$ considered in Section \ref{ssc:maximal}.
  As a consequence, since the length of every segment of $l_t$ in $T_\gamma$ has length approximatively $2L_t$, the length of $l_t$ for the induced metric $m_t$ on the convex core  behaves has
  $$ L_{m_t}(l_t) \sim 2\pi. 2L_t = 4\pi L_t~. $$
and, as a consequence,
  $$ V'_R \sim -\frac{L_{m_t}(l_t)}4 \sim - \pi L_t \sim -\frac{\pi^3}{L_{c_t}(\gamma)}~. $$ 
\end{proof}


 \bibliographystyle{unsrt}

\end{document}